\title{Approximate Majority With Catalytic Inputs}
\date{}
\author{%
  Talley Amir 
        \\ Yale University \\  {\normalsize talley.amir@yale.edu}
  \and James Aspnes\thanks{Supported in part by NSF grant CCF-1650596}
        \\Yale University \\ {\normalsize james.aspnes@gmail.com}
  \and John Lazarsfeld\\Yale University \\ {\normalsize john.lazarsfeld@yale.edu}
}
\newtheorem{theorem}{Theorem}[section]
\newtheorem{corollary}{Corollary}[section]
\newtheorem{lemma}{Lemma}[section]
\begin{document}

\maketitle

\begin{abstract}

Population protocols \cite{AngluinADFP2006} are a class of algorithms for modeling distributed computation in networks of finite-state agents communicating through pairwise interactions. Their suitability for analyzing numerous chemical processes has motivated the adaptation of the original population protocol framework to better model these chemical systems. In this paper, we further the study of two such adaptations in the context of solving approximate majority: persistent-state agents (or \textit{catalysts}) and spontaneous state changes (or \textit{leaks}).

Based on models considered in recent protocols for populations with persistent-state agents \cite{stoc/DudekK18, alistarh2017robust, alistarh2020robust}, we assume a population with $n$ catalytic input agents and $m$ worker agents, and the goal of the worker agents is to compute some predicate over the states of the catalytic inputs. We call this model the Catalytic Input (CI) model. 
For $m = \Theta(n)$, we show that computing the \textit{exact} majority 
of the input population with high probability requires at least $\Omega(n^2)$ total interactions, demonstrating a strong separation between the CI model and the standard population protocol model. 
On the other hand, we show that the simple third-state dynamics \cite{angluin2008simple, perron2009consensus} for \textit{approximate} majority in the standard model can be naturally adapted to the CI model: we present such a constant-state protocol for the CI model that solves approximate majority in $O(n \log n)$ total steps with high probability when the input margin is $\Omega(\sqrt{n \log n})$.

We then show the robustness of third-state dynamics protocols to the transient leaks events introduced by \cite{alistarh2017robust, alistarh2020robust}. In both the original and CI models, these protocols successfully compute approximate majority with high probability in the presence of leaks occurring at each step with probability $\beta \leq O\left(\sqrt{n \log n}/n\right)$. The resilience of these dynamics to leaks exhibits similarities to previous work involving Byzantine agents, and we define and prove a notion of equivalence between the two.

\end{abstract}

\newpage

\section{Introduction}
\label{sec:intro}

The population protocol model \cite{AngluinADFP2006} is
a theoretical framework for analyzing distributed computation in
ad hoc networks of anonymous, mobile agents: at each step,
a random pair of agents is chosen to interact, and
their local states are updated according to
a global transition function.
Population protocols can
solve numerous problems in distributed computing, 
including majority (which is also referred to as consensus)
\cite{angluin2008simple, condon2019, alistarh2020robust}, source detection 
\cite{alistarh2017robust, stoc/DudekK18}, and leader election 
\cite{Alistarh:2015, GS18, GLS+19}.

Population protocols are a special case of chemical reaction networks (CRNs), which are systems of transition rules describing how a set of chemical reactants stochastically transform into a set of products. In particular, population protocols are chemical reaction networks with exactly two reactants which form two products, where each transition rule for a pair of reactants is weighted with probability 1. 
Given their suitability for modeling chemical processes, population protocols have been used
to study computation not only by chemical reaction networks \cite{CDS12}, but also DNA strand displacement \cite{CDS13, TWS2015} and biochemical networks \cite{CC12}. These applications of population protocols in chemistry have inspired various adaptations of the model. In this paper, we focus on two such variations on population protocols in the context of solving \concept{majority}, the problem of determining which of two states is initially more prevalent in a population.


The first modification to the model we consider, which was introduced to the literature in previous works studying source detection and bit-broadcast \cite{stoc/DudekK18, alistarh2017robust} and later studied in the context of the majority problem \cite{alistarh2020robust, dAmore20}, is the presence of \concept{persistent-state agents}, or agents whose state never changes. While some works use persistent-state agents to model authoritative sources of information \cite{stoc/DudekK18} or ``stubborn'' nodes that are unwilling to change state \cite{dAmore20}, others describe these entities as an embodiment of chemical catalysts because they induce a state transition in another agent without themselves changing state \cite{alistarh2017robust, alistarh2020robust}. 
Using the latter perspective, we refer to
these persistent-state agents as 
\concept{catalysts}.

In this work, we call the class of population protocols with catalysts the \concept{catalytic input (CI) model}. We formally define the model to 
consist of $n$ \concept{catalytic input agents},
which in accordance with their name do not ever change state, and $m$
\concept{worker agents} that can change state and wish to compute some function on the states of the 
catalysts. While the CI model is similar to the standard population protocol model, we show that there exists a strong separation between the two in terms of their computational power. 

The next variation on the model we consider is the introduction of 
\textbf{transient leak events}, studied previously in the contexts of solving source
detection and comparison \cite{alistarh2017robust, alistarh2020robust}. In brief, a 
``leak'' simulates the low-probability event that a molecule undergoes a reaction that 
would typically take place in the presence of a catalyst. In population protocols, this is 
modeled by a spontaneous change of state at a single agent, and note that catalytic agents 
in the CI model are not susceptible to leaks because they never change state. A leak 
replaces an interaction between two agents at any given step with some fixed probability, 
known as the \concept{leak rate} \cite{alistarh2017robust}.
Although leaks have typically been studied in the presence of catalysts, we consider
leaks to more generally model unpredictable or adversarial behavior which may occur in the absence of catalysts as well.

We explore the impact of leaks on third-state dynamics \cite{angluin2008simple, perron2009consensus} solving majority. Our work demonstrates that third-state dynamics can solve \concept{approximate majority}, or majority with a lower-bounded initial difference between the counts of the two input states, with upper-bounded leak rate both in the standard and CI population models.

\subsection{Related Work}

The third-state dynamics protocol in the original population model
(sometimes called \textit{undecided-state dynamics}) was introduced
by Anlguin et al. \cite{angluin2008simple} and independently by
Perron et al. \cite{perron2009consensus}.
An agent is either in a state $X$ or $Y$, or in a
\textit{blank} state $B$ (sometimes called an \textit{undecided} state).
The transition rules are shown in Figure~\ref{fig:dbam},
and we refer to this protocol as $\dbam$\footnote{
  $\dbam$ stands for \textit{double-B approximate majority}
  where \textit{double-B} captures the fact that following an
  $X+Y$ interaction, both agents transition to the $B$ state. This protocol is the two-way variant of the original protocol from \cite{angluin2008simple}, which uses one-way communication and where only one agent updates its state per pairwise interaction.
}.
Assuming an initial $X$ majority,
a simplified analysis from Condon et al. \cite{condon2019}
showed that all $n$ agents in the population transition to the
$X$ state within $O(n \log n)$ total interactions with high
probability, so long as the 
input margin $|X| - |Y|$ at the start of the protocol
is at least $\Omega(\sqrt{n \log n})$.
The $\dbam$ protocol is also robust to a small subset of faulty
\textit{Byzantine} agents \cite{angluin2008simple, condon2019},
meaning that all but a $O(\sqrt{n \log n}/n)$ fraction of 
the population still reaches the $X$ state within $O(n \log n)$
interactions with high probability, despite the
presence of these dishonest agents.

The $\dbam$ protocol and similar variants of third-state dynamics
have been shown to more generally compute consensus
(where all agents converge
to either $X$ or $Y$, but where this need not be the initial
majority value), both in the original population protocols
model \cite{angluin2008simple, condon2019} and
in other similar distributed models
\cite{becchetti2020consensus, dAmore20}.
In particular, the closely related results of 
d'Amore et al. \cite{dAmore20} analyzed an analogous version of
the $\dbam$ protocol in the synchronous PULL model.
The authors considered systems with
\textit{stubborn} agents (as in \cite{yildiz2013sutbborn})  which are 
similar to the persistent-state catalytic agents
we consider in the present work. 
However, the parallel synchronous scheduling model considered in \cite{dAmore20}
is fundamentally distinct from the sequential pairwise scheduling used
in population protocols.

The notion of a persistent \textit{source} state in
population protocols originated from \cite{stoc/DudekK18}, where sources are used to solve \concept{detection} (the detection of a source in the population) and \concept{bit broadcast} (the broadcast of a 0 or 1 message from a set of source agents). An accompanying work \cite{alistarh2017robust}  introduces the concept of leaks, or spontaneous state changes, and investigates the detection problem in their presence.
Generally, leaks can be dealt with using error-correcting codes \cite{WangE12182};
however, for certain problems there are more efficient specialized solutions.
For example, Alistarh et al. \cite{alistarh2017robust} demonstrate that detection in the presence of leaks (up to rate $\beta = O(1/n)$) can be solved with high probability using $\log \frac{n}{k} + O(\log \log n)$ states, where $k \leq n$ is the number of sources in the population.

More recently, \cite{alistarh2020robust} examines leaks in the context of the \concept{comparison} problem. Comparison is a generalization of the majority problem, where some possibly small subset of the population is in input state $X_0$ or $Y_0$ and the task of the population is to determine which of the two states is more prevalent. Alistarh et al. \cite{alistarh2020robust} solve comparison in $O(n \log n)$ interactions with high probability using $O(\log n)$ states per agent, assuming $\lvert X_0 \rvert \geq C \lvert Y_0 \rvert$ 
for some constant $C$, and $X_0, Y_0 \ge \Omega(\log n)$. The protocol is self-stabilizing, meaning that it dynamically responds to changes in the counts of input states.

\subsection{Our Contribution}

\begin{figure}[t!]
\vspace*{-0.4cm}
  \centering
  \begin{minipage}[b]{0.45\textwidth}
    \small
    \centering
    \begin{align*}
      X + B &\longrightarrow X + X \\
      Y + B &\longrightarrow Y + Y \\
      X + Y &\longrightarrow B + B
    \end{align*}
    \caption{Transition rules for the $\dbam$ protocol \cite{angluin2008simple}
      in the original population model.}
    \label{fig:dbam}
  \end{minipage}
  \hfill
  \begin{minipage}[b]{0.45\textwidth}
    \small
    \begin{align*}
      X + B &\longrightarrow X + X \;\;
              &I_X &+ B \longrightarrow I_X + X \\ 
      Y + B &\longrightarrow Y + Y \;\;
              &I_Y &+ B \longrightarrow I_Y + Y \\
      X + Y &\longrightarrow B + B
    \end{align*}
    \caption{Transition rules for our $\dbamc$ protocol in the CI model.}
    \label{fig:dbamc}
  \end{minipage}
  \vspace*{-0.4cm}
\end{figure}


In this work, motivated by the recent interest in population models
with catalytic agents and with transient leaks,
we study 
the well-known third-state dynamics protocols
\cite{angluin2008simple, perron2009consensus} for solving approximate majority in the presence of each of these variants separately as well as together.
To begin, 
we formalize the CI model
consisting of $n$ catalysts and $m$
workers, where $N = n + m$. While conceptually similar to other models
considering these types of catalytic agents 
\cite{alistarh2017robust, alistarh2020robust, dAmore20}, introducing 
the distinction between the two (possibly unrelated) population sizes 
provides a new level of generality for designing and analyzing
protocols in this setting, both with and without leaks.

Although the CI and original population models are almost identical,
we show a strong separation between the computational power
of the two. When $m = \Theta(n)$, we prove a lower bound 
showing that \textit{exact} majority (i.e., the majority problem 
on instances with input margin equal to one)
cannot be computed in fewer than $\Omega(n^2)$ interactions
with high probability\footnote{We define ``high probability'' to
mean with probability at least $1-n^{-c}$ where $n$ is the total 
number of agents and $c \ge 1$.} 
in the CI model. On the other hand, exact majority is known to be 
computable in the standard model within $O(n \polylog n)$ total steps 
with high probability as long as each agent has $\Omega(\log n)$ states
\cite{AlistarhAEGR17, AlistarhAG2018, BenNunKKP20}.
While some problems have strictly different lower bounds on running time in these two models, others do not and can in fact be solved using nearly identical techniques. In particular, we show that the \textit{approximate} majority problem can be solved in the CI model by naturally extending the \dbam\ protocol. 

In the approximate majority problem in the CI model, each catalytic input agent holds a persistent value of $I_X$ or $I_Y$ and each worker agent holds either an undecided, or blank value $B$, or an $X$ or $Y$ value corresponding to a belief in an $I_X$ or $I_Y$ input majority, respectively. 
The worker agents all start in state $B$ and 
seek to correctly determine the larger of $|I_X|$ and $|I_Y|$ 
so long as the input margin $||I_X| - |I_Y||$ is sufficiently large.
By adapting the third-state dynamics
process \cite{angluin2008simple} of the original model,
we present a constant-state protocol for approximate majority with
catalytic inputs called \dbamc\ (see Figure \ref{fig:dbamc}).
The protocol converges with high probability in $O(N \log N)$ total
steps when the
initial input margin is $\Omega(\sqrt{N \log N})$ and $m=\Theta(n)$.
We then show that this
input margin is optimal in the CI model up to a $O(\sqrt{\log N})$ factor
when $m = \Theta(n)$. 
Moreover, in the presence of transient leak events,
we show that both the third-state dynamics protocol in the original model
and our adapted protocol in the CI model exhibit a strong robustness
to leaks. When the probability of a leak event 
is bounded, we show that with high probability both protocols still quickly reach a configuration 
where nearly all agents share the correct input majority value.

Notice that the approximate majority problem in the CI model is equivalent to the comparison problem considered by \cite{alistarh2020robust}, so we demonstrate how our protocol compares to the results of this work. We show that our $\dbamc$ protocol converges correctly within the same time complexity of $O(n \log n)$ total steps, while only using  \textit{constant} state space (compared to the logarithmic state used by the protocols in their work). Moreover, in populations where $m = \Theta(n)$, our protocol tolerates a less restrictive bound on the input margin compared to \cite{alistarh2020robust} ($\Omega(\sqrt{n \log n})$ compared to $\Omega(n)$). In the presence of transient leaks, our protocol also shows robustness to a higher leak rate of $\beta \le O(\sqrt{n \log n}/n)$. However, unlike \cite{alistarh2020robust}, our protocol is not self-stabilizing and requires that the number of inputs be at least a constant fraction of the total population for our main results. In order to achieve these results, we leverage the random walk analysis techniques and analysis structure introduced by \cite{condon2019}.

Finally, we compare the impact of leaks on population protocols with that of faulty Byzantine processes. 
While the fast robust approximate majority protocol of \cite{angluin2008simple} is proven to be robust to a number of Byzantine agents that is bounded by the input margin \cite{angluin2008simple, condon2019}, we show that \dbam\ is robust to a similarly bounded leak rate and has sampling error matching the result from \cite{condon2019}.

\ifabs{}{\subsection{Overview}}
The structure of the remainder of the paper is as follows:
in Section~\ref{sec:prelims}
we introduce notation and definitions central to our results. 
Section~\ref{sec:ci-model-lb} presents our lower bounds
over the CI model, which demonstrates the separation between the
CI and original population models. 
In Section~\ref{sec:dbamc}, we analyze the correctness and 
efficiency of the $\dbamc$ protocol for approximate majority in the
CI model, and in Section~\ref{sec:leaks} we demonstrate the 
robustness of both the $\dbamc$ and original $\dbam$ protocols
to transient leak events. 
Then in Section~\ref{sec:byzantine-leaks-equiv}, 
we compare the notion of transient leaks with the adversarial 
Byzantine model, demonstrating parallels between previous results 
examining Byzantine behavior and our work.


\nocite{kosowski2018}
\section{Preliminaries}
\label{sec:prelims}

We begin with some definitions. Denote by $N$ the number of agents in the population. 

\paragraph*{Population Protocols}
\label{subsec:pop-protocols}
Population protocols are a class of algorithms which model interactions between mobile agents with limited communication range. Agents only interact with one another if they are within close enough proximity of each other. In order to model this type of system in an asynchronous setting, interactions between pairs of agents are executed in sequence. The interaction pattern of these agents is dictated by a \concept{scheduler}, which may be random or adversarial. In this work we will assume that the scheduler is uniformly random, meaning that an ordered pair of agents is chosen to interact at each time step independently and uniformly at random from all $N(N-1)$ ordered pairs of agents in the system.

As defined by \cite{AngluinADFP2006} which first introduced the model, a population protocol $\mathcal{P}$ consists of a \concept{state set} $\mathcal{S}=\{s_1, s_2, ..., s_{k}\}$, a \concept{rule set} $\mathcal{R} : \mathcal{S}^2 \mapsto \mathcal{S}^2$, an \concept{output alphabet} $\mathcal{O}$, and an \concept{output function} $f : \mathcal{S} \mapsto \mathcal{O}$. The output function computes the evaluation of some function on the population locally at each agent.
The \concept{configuration} of the population is denoted as a vector $\mathbf{c} = \langle c_1, c_2, ..., c_k \rangle$ such that each $c_i \ge 0$ is equal to the number of agents in the population in state $s_i$, from which it follows that $\sum c_i = N$. For convenience, we denote by $|s_i|$ the number of agents in the population in state $s_i$.

At each point in time, the scheduler chooses an ordered pair of agents $(a_i, a_j)$, where $a_i$ is the \concept{initiator} and $a_j$ is the \concept{responder} \cite{AngluinADFP2006}. The agents interact and update their state according to the corresponding rule in $\mathcal{R}$. In general, a rule in $\mathcal{R}$ is written as $A + B \longrightarrow C + D$ to convey that two agents, an initiator in state $A$ and a responder in state $B$, interact and update their states to be $C$ and $D$, respectively. 
By convention, $N/2$ interactions make one  unit of \concept{parallel time} \cite{angluin2008simple}. This convention is equivalent to assuming every agent interacts once per time unit on average.

An \concept{execution} is the sequence of configurations of a run of the protocol, which \concept{converges} when the population arrives at a configuration $\mathbf{d}$ such that all configurations chronologically after $\mathbf{d}$ have the same output at each agent as those in $\mathbf{d}$ \cite{AngluinADFP2006}. 
In order to determine the success or failure of an execution of $\mathcal{P}$, we will consider a sample of the population to signify the outcome of the protocol \cite{alistarh2017robust}. After the expected time to converge, one agent is selected at random and its state is observed. The output associated with the agent's state is considered the output of the protocol. The probability of sampling an agent whose state does not reflect the desired output of the protocol is called the \concept{sample error rate}. Multiple samples can be aggregated to improve the rate of success. 

\paragraph*{Catalysts and Leaks}
Following~\cite{alistarh2017robust}, in an interaction of the form $A+B \longrightarrow A+D$, we say $A$ \textit{catalyzes} the transformation of the agent in state $B$ to be in state $D$. If $A$ catalyzes every interaction it participates in, $A$ is referred to as a \concept{catalyst}.

In chemistry, a reaction that occurs in the presence of a catalyst also occurs at a lower rate in the absence of that catalyst.
For this reason, recent work in DNA strand displacement, chemical reactions networks, and population protocols \cite{TWS2015,alistarh2017robust, alistarh2020robust} have studied the notion of \textit{leakage}: When a catalytic reaction $A+B \longrightarrow A+D$ is possible, then there is some probability that a transition $B \longrightarrow D$ can occur without interacting with $A$ at all. This type of event, called a \concept{leak}, was introduced in \cite{TWS2015}.

The probability with which the non-catalyzed variation of a reaction takes place is the \concept{leak rate}, which we denote by $\beta$. 
We simulate a leak as follows: At each step with probability $1-\beta$, the scheduler samples an ordered pair of agents to interact with one another as described in the 
beginning of the section; the rest of the time (i.e. with probability $\beta$) one agent is chosen uniformly at random from all possible agents and the \concept{leak function} $\ell: \mathcal{S} \rightarrow \mathcal{S}$ is applied to update this agent's state.  
Note that we only consider \textit{non-catalytic} agents to 
be susceptible to these events. 

\paragraph*{Catalytic Input Model} 
In this work, we formalize a \concept{catalytic input} (CI) model consisting of $n$ catalytic agents that supply the input and $m$ worker agents that perform the computation and produce output. We define $N=m+n$ to be the total number of agents in the population. At each time step, the scheduler samples any two agents in the population to interact with one another. If two catalysts are chosen to interact, then the interaction is considered to be \concept{null} as no nontrivial state transition occurs. When $n=o(m)$, the probability that two catalysts are chosen to interact is upper bounded by a constant, and so the total running time of the protocol is asymptotically equivalent to the number of non-null interactions needed to reach convergence. In the CI model, convergence is a term that refers to the states of the worker agents only, as the catalytic agents never change state. Namely, for the approximate majority problem, successful convergence equates to all \textit{worker agents} being in the majority-accepting state. 
In general, we wish to obtain results that hold with high probability
with respect to the \textit{total} 
number of agents $N$.

\section{Catalytic Input Model Lower Bounds}
\label{sec:ci-model-lb}

In this section, we characterize the
computational power of the CI population protocol model.
Using information-theoretic arguments, we prove the following
two lower bounds over the catalytic input model when the number of input agents
is a constant fraction of the total population:


\begin{restatable}{theorem}{thmemlb}
  \label{thm:em-lb}
  In the catalytic input model with $n$ input agents and $m = \Theta(n)$
  worker agents, any protocol that computes the exact majority of the
  inputs with probability at least $1-N^{-\gamma}$ requires
  at least $\Omega(N^2)$ total steps for any $\gamma \ge 1$.
\end{restatable}

\begin{restatable}{theorem}{thminputmarginlb}
  \label{thm:input-margin-lb}
  In the catalytic input model with $n$ input agents and $m = \Theta(n)$
  worker agents, any protocol that computes the \majority\ of the inputs
  within $O(N \log N)$ total steps requires an input
  margin of at least $\Omega(\sqrt N)$ to be correct with
  probability at least $1 - N^{-\gamma}$ for any $\gamma \ge 1$. 
\end{restatable}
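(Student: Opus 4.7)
The plan is to adapt the information-theoretic coupling template underlying Theorem~\ref{thm:parity-lb} to a sharper two-point hypothesis-testing argument, specialized to two symmetric majority configurations that differ on only $\Delta$ catalysts. First I would fix two ``hard'' input configurations: let $\mathbf{c}^+$ have $n/2 + \Delta/2$ catalysts in state $I_X$ and $n/2 - \Delta/2$ in state $I_Y$, and let $\mathbf{c}^-$ be obtained by swapping the labels of $\Delta$ carefully chosen catalysts so that $\mathbf{c}^-$ has a $Y$-majority of margin $\Delta$. Any protocol correct with probability at least $1 - N^{-\gamma}$ on both configurations must produce sampled-worker output distributions separated in total variation by at least $1 - 2N^{-\gamma}$. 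By the Bretagnolle--Huber inequality $\mathrm{TV}(P, Q) \le 1 - \tfrac{1}{2}\exp(-D_{\mathrm{KL}}(P \,\|\, Q))$ and data processing, this separation forces
\begin{align*}
D_{\mathrm{KL}}(P^+_{\mathrm{exec}} \,\|\, P^-_{\mathrm{exec}}) \;\ge\; \gamma \log N - O(1),
\end{align*}
where $P^\pm_{\mathrm{exec}}$ is the joint distribution over the execution trajectory.

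Next I would upper bound this KL divergence via the chain rule and a per-step mixture argument. Couple the two executions by driving them with the same random pair sequence from the scheduler. Conditional on the history, the one-step distribution $P^\pm_{t+1}$ over the next multiset of worker states decomposes as $P^\pm_{t+1} = (1 - \epsilon) Q + \epsilon S^{\pm}$, where $\epsilon = O(\Delta/N)$ is the probability that the sampled pair involves one of the $\Delta$ swapped catalysts; the common component $Q$ is the outcome distribution over pairs that do not touch a swapped catalyst, and $S^{\pm}$ captures the differing contributions under $\mathbf{c}^+$ versus $\mathbf{c}^-$ when a swap is touched. Crucially, because transition rules depend on a catalyst's \emph{state} rather than its identity, the multiset change produced by interacting with a swapped $I_X$ catalyst is identical to the one produced by interacting with any non-swapped $I_X$ catalyst, so $S^{\pm}$ is absolutely continuous with respect to $Q$ with bounded Radon--Nikodym derivative. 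Expanding the per-step KL divergence in $\epsilon$, the linear-in-$\epsilon$ term vanishes because $\sum_x (S^+(x) - S^-(x)) = 0$, leaving a leading nonzero contribution of order $O(\epsilon^2) = O((\Delta/N)^2)$. Summing via the KL chain rule over $T = O(N \log N)$ steps gives $D_{\mathrm{KL}}(P^+_{\mathrm{exec}} \,\|\, P^-_{\mathrm{exec}}) = O(T \Delta^2 / N^2) = O(\Delta^2 \log N / N)$.

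Combining the lower and upper KL bounds yields $\Delta^2 \log N / N = \Omega(\log N)$, hence $\Delta = \Omega(\sqrt{N})$ for any fixed $\gamma \ge 1$, matching the claimed bound.

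The main obstacle I anticipate is justifying the per-step $O(\epsilon^2)$ KL estimate: a naive coupling bound gives only $O(\epsilon)$, which would yield the much weaker lower bound $\Delta = \Omega(1)$, and obtaining the $O(\epsilon^2)$ improvement relies on the state-based (rather than identity-based) structure of population-protocol transitions together with the absolute-continuity condition that $S^{\pm}$ is supported within the support of $Q$. A secondary subtlety is that the standard Pinsker inequality is not sharp enough to yield the $\sqrt{N}$ bound in the high-probability regime $1 - N^{-\gamma}$ (it would lose a $\sqrt{\log N}$ factor), which is why Bretagnolle--Huber must be invoked instead.
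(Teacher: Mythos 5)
Your proposal is correct, but it reaches the $\Omega(\sqrt N)$ bound by a genuinely different route than the paper. The paper first collapses the entire worker population into a single computationally unbounded ``super-agent'' (Lemma~\ref{lemma:super-ci}), so that every worker--input interaction becomes an i.i.d.\ sample with replacement from the inputs; it then proves via a Yao's-principle/likelihood-ratio argument (Lemma~\ref{lemma:ciamlb-maj-opt}) that the optimal estimator is the sample-majority map, and finishes with a \emph{reverse} Chernoff bound on the binomial lower tail, $\Pr[X_S \le S/2] \ge (2S)^{-1/2}\exp(-4S\delta^2)$, to show that $S = O(n\log n)$ samples force $\delta = \Omega(1/\sqrt n)$ (Lemma~\ref{lemma:maj-input-lb}). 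Your two-point method sidesteps both the super-agent reduction and the identification of the optimal estimator: data processing on the execution trajectory handles arbitrary protocols for free, and the Bretagnolle--Huber inequality plays exactly the role that the explicit binomial anti-concentration bound plays in the paper (you are right that Pinsker would lose a $\sqrt{\log N}$ factor; the paper's reverse-Chernoff bound is the analogous sharp tool). The price you pay is the per-step $O(\epsilon^2)$ KL estimate, and your identification of where it comes from is exactly right: since catalysts are distinguished only by their persistent state and $\Delta \le n/2$ leaves $n/2 - \Delta/2$ non-swapped catalysts of each type, the conditional outcome distributions $S^{\pm}$ are dominated by $Q$ with Radon--Nikodym derivative $O(N/n)$, uniformly over all reachable histories (the non-swapped catalyst counts never change), so the chain rule legitimately sums $T = O(N\log N)$ terms of size $O(\Delta^2/n^2)$. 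Two housekeeping points you should make explicit in a full write-up: the hypothesis $m = \Theta(n)$ is what makes the Radon--Nikodym bound a constant (for $n = o(N)$ your per-step bound degrades to $O(\Delta^2/n^2)$ and yields only $\Delta = \Omega(n/\sqrt N)$, which happens to still match the theorem's $N$-normalization only when $n = \Theta(N)$), and the regime $\Delta > n/2$ needs a one-line separate disposal since there $\Delta = \Omega(N) = \Omega(\sqrt N)$ trivially. With those noted, your argument is a complete and arguably more self-contained alternative to the paper's Lemmas~\ref{lemma:super-ci}, \ref{lemma:ciamlb-maj-opt}, and~\ref{lemma:maj-input-lb}.
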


The first result can be viewed as a separation between the CI and
original population models: as mentioned earlier, several works 
\cite{AlistarhAEGR17, AlistarhAG2018, Berenbrink2018APP}
have shown that \textit{exact} majority 
can be computed in the original model within $O(\polylog n)$ 
parallel time with high probability. Thus, our result 
indicates that in the CI model, when the input size is 
a constant fraction of the entire population, 
not all efficiently-computable functions
in the standard model can be solved in sub-linear parallel
time with high probability.
On the other hand, the second result indicates the existence of
a predicate --- approximate majority --- that does not
require a large increase in convergence time to be computed
with high probability in this new model.

\paragraph*{Sampling Catalytic Inputs}
One key characteristic of a CI population is the inability for worker agents to distinguish which
inputs have previously interacted with a worker. Instead, every worker-input interaction acts like a random sample with replacement from the input population. 
For proving lower bounds in this model, this characteristic
of a CI population leads to the following natural argument:
consider a population of $n$ catalytic input agents and a worker population
consisting of a single \concept{super-agent}. Here, we assume the
super-agent has unbounded state and computational power, and it is thus
able to simulate the entire worker population of any
protocol with more workers.
In this simulation, any interaction between a worker and an input agent is equivalent to the super-agent interacting with an input chosen uniformly at random: 
in other words, as a sample with replacement from the
input population. Thus we view the super-agent as running a central
randomized algorithm to simulate the random interactions that 
occur in population protocols. 
If the super-agent needs $S$ samples to compute
some predicate over the inputs with high probability, then so does
any multi-worker protocol in the CI model.
We denote this information-theoretic model as the \concept{Super CI model},
and restate the above argument more formally in the following lemma.

\begin{lemma}
  \label{lemma:super-ci}
  Consider a population with $n$ catalytic input agents
  and a worker population consisting of a single super-agent $W$.
  Let $P$ be a predicate over the input population that requires $S$ total
  interactions between $W$ and the input population in order for $W$ to correctly compute $P$ with probability $\epsilon$.
  Then for a CI population with $n$ catalytic inputs and $m$
  worker agents, computing $P$ correctly with probability $\epsilon$
  requires at least $S$ total interactions.
\end{lemma}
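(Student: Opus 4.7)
The plan is to argue by a direct simulation of the multi-worker CI protocol on the single super-agent $W$. Suppose, for contradiction, that there is a protocol $\mathcal{P}$ in the CI model with $n$ inputs and $m$ worker agents which correctly computes $P$ with probability $\epsilon$ using strictly fewer than $S$ total (equivalently, fewer than $S$ worker--input) interactions. We construct a super-agent strategy $\mathcal{P}'$ in the Super CI model that correctly computes $P$ with probability $\epsilon$ using strictly fewer than $S$ interactions with the input population, contradicting the hypothesis on $S$.

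\textbf{The simulation.} Since the super-agent has unbounded state, $W$ can store an internal representation of the joint configuration of all $m$ simulated workers of $\mathcal{P}$. $W$ also internally draws all scheduler randomness. Whenever the simulated scheduler selects a pair of simulated workers, $W$ applies $\mathcal{P}$'s transition rule locally and updates its internal state with no interaction with the input population at all. Whenever the simulated scheduler selects a worker--input pair (in either order), $W$ performs a single actual interaction with the input population in the Super CI model to obtain the state of one uniformly random input agent, and then applies $\mathcal{P}$'s transition rule to update the relevant simulated worker's state. When the scheduler selects an input--input pair, $W$ skips the step as the interaction is null. After the simulated execution of $\mathcal{P}$ terminates, $W$ outputs whatever the designated sampled simulated worker outputs in $\mathcal{P}$.

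\textbf{Correctness of the simulation.} The key observation is that inputs are catalytic and indistinguishable: interactions with inputs do not change their states, so each worker--input interaction in the true execution of $\mathcal{P}$ is distributed exactly as a fresh uniform sample with replacement from the input population. This is precisely the distribution $W$ obtains from each of its actual interactions in the Super CI model. Combined with the fact that $W$ simulates the scheduler using matching probabilities over all $N(N-1)$ ordered pairs, the joint distribution over configurations visited during the simulated execution of $\mathcal{P}'$ on $W$ is identical to the distribution over configurations of a true execution of $\mathcal{P}$. Hence $\mathcal{P}'$ succeeds with the same probability $\epsilon$, while using exactly as many $W$--input interactions as $\mathcal{P}$ uses worker--input interactions, which is strictly less than $S$ by assumption. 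This contradicts the minimality of $S$ in the Super CI model.

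\textbf{Main obstacle.} The only subtle point is the equivalence between a worker--input interaction in $\mathcal{P}$ and a uniform sample with replacement from the inputs; I expect this to be essentially immediate once one observes that inputs never change state and are anonymous in the population model, so the distribution of an input's state in any interaction depends only on the global input configuration (which is fixed), not on the history. Verifying that the super-agent's internally simulated scheduler matches the scheduler distribution of $\mathcal{P}$ is routine bookkeeping.
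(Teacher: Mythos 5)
Your proposal is correct and takes essentially the same route as the paper: the paper justifies Lemma~\ref{lemma:super-ci} only by the informal simulation argument in the preceding ``Sampling Catalytic Inputs'' paragraph (the unbounded super-agent simulates all $m$ workers internally, and each worker--input interaction is a uniform sample with replacement because catalytic inputs never change state), which is exactly the argument you formalize via contradiction. The only cosmetic quibble is your parenthetical ``(equivalently, fewer than $S$ worker--input)'' --- the correct relation is that fewer than $S$ total interactions \emph{implies} fewer than $S$ worker--input interactions, which is the direction your argument actually needs.
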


\subsection{Proof of Theorem~\ref{thm:em-lb}}
\label{sec:em-lb-overview}

In a CI model population with $n$ input agents and
$m$ worker agents where $m = \Theta(n)$, Theorem~\ref{thm:em-lb} shows
that computing the exact majority of the inputs requires at least 
$\Omega(n^2) = \Omega(N^2)$ 
total interactions to be correct with high probability.
We prove this by showing that in the Super CI model described in
the previous subsection, a computationally unbounded super-agent $W$
requires at least $\Omega(n^2)$ samples of the input population
to correctly compute exact majority with high probability.
Applying Lemma~\ref{lemma:super-ci} then gives
Theorem~\ref{thm:em-lb}.

\paragraph*{Optimality of the Sample Majority Map}

Recall that in the Super CI model, a predicate over the input population
$C$ is computed by a single super agent worker $W$ with unbounded
computational power. Thus, the output of $W$ can be viewed as a
mapping between a string of input values obtained from interactions
with between $W$ and the input population and the output set $\{0, 1\}$.
We refer to interactions between $W$ and the input population
as \concept{samples} of the input, and for a fixed number of samples $S$,
we refer to $W$'s output as its \concept{strategy}.

First, we show that for some fixed distribution over the input
values of $C$, the strategy that maximizes $W$'s probability
of correctly outputting the majority value of $C$ is simply to
output the majority value of its samples.
Let $I \in \{0, 1\}^S$ be the \concept{sample string} representing the
$S$ independent samples with replacement taken by $W$, 
and let $\mathbb{S}$ denote the set of all $2^S$ possible sample strings. 
We model the population of input agents as being generated by 
an adversary. Specifically, let $M$ denote the majority value (0 or 1) 
of the input population, where we treat $M$ as a a random variable whose 
distribution is unknown. In any realization of $M$, we assume
a fixed fraction $p > 1/2$ of the inputs hold the majority value. 

Given an input population, the objective of the worker agent is 
to correctly determine the value of $M$ through its input sample
string $I$.
By Yao's principle \cite{yao77}, the error of any randomized 
algorithm (i.e., the randomized simulation run by the super-agent) 
on the worst case value of $M$ is no smaller than the
error of the best deterministic algorithm on some fixed
distribution over $M$. So our strategy is to pick a distribution 
over $M$, and to use the the error of the best deterministic strategy 
with respect to this distribution as a lower bound on the worst-case
error of any randomized algorithm used by the super-agent.

Thus, assuming $M$ is chosen according to some fixed distribution,
we model the worker's strategy
as a fixed map $f: \{0, 1\}^S \to \{0, 1\}$. Letting $\mathcal{F}_S$ 
denote the set of all such maps, $W$ then faces the 
following optimization problem: 
$\max_{f \in \mathcal{F}_S} \Pr[f(I) = M]$.
For a given $f \in \mathcal{F}_S$, let $p_f = \Pr[f(I) = M]$, 
and let $\Phi \in \mathcal{F}_S$ denote the map that outputs the
majority value of the input sample string $I$. 
In the following lemma, we show that when the distribution 
over $M$ is uniform, setting $f := \Phi$ \  maximizes $p_f$. 
In other words, to maximize the probability of correctly guessing
the input population majority value, the worker's optimal strategy
is to simply guess the majority value of its $S$ independent samples. 
The proof of the lemma simply uses the definitions 
of conditional probability and the Law of Total Probability to 
obtain the result.

\begin{restatable}{lemma}{majopt}
\label{lemma:ciamlb-maj-opt}
Let $I = \{0, 1\}^S$ be a sample string of size $S$ drawn from an 
input population with majority value $M$ and majority ratio $p$,
and assume $\Pr[M = 1] = \Pr[M = 0] = 1/2$. 
Then 
$
\Pr[\Phi(I) = M] \geq \Pr[f(I) = M]
$
for all maps $f \in \mathcal{F}_S$, where $\Phi$ is the map 
that outputs the majority value of the sample string $I$.
\end{restatable}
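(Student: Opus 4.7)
The plan is to show that $\Phi$ coincides with the maximum a posteriori (MAP) estimator, which is pointwise optimal. First I would apply the law of total probability over the realized sample string $s \in \{0,1\}^S$ to write
\[
\Pr[f(I) = M] \;=\; \sum_{s \in \{0,1\}^S} \Pr[I = s]\cdot \Pr[M = f(s) \mid I = s].
\]
Since the weights $\Pr[I = s]$ are nonnegative and do not depend on $f$, and since $f(s) \in \{0,1\}$, the sum is maximized by choosing, for each $s$, the pointwise MAP value
\[
f^\star(s) \;=\; \arg\max_{b \in \{0,1\}} \Pr[M = b \mid I = s].
\]
So the task reduces to verifying that $\Phi = f^\star$ for every $s$.

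Next I would invoke Bayes' rule. Conditional on $M = 1$ each sample is i.i.d.\ Bernoulli$(p)$, and conditional on $M = 0$ it is Bernoulli$(1-p)$; so if $s$ contains $k$ ones and $S-k$ zeros the likelihood ratio is
\[
\frac{\Pr[I = s \mid M = 1]}{\Pr[I = s \mid M = 0]} \;=\; \left(\frac{p}{1-p}\right)^{2k - S},
\]
which exceeds $1$ exactly when $k > S/2$. Combined with the uniform prior $\Pr[M=0] = \Pr[M=1] = 1/2$ --- the prior delivered by Yao's principle on this symmetric problem --- the posterior ordering then matches the sample-majority ordering, so $f^\star(s) = \Phi(s)$ for all $s$ with $k \neq S/2$, while ties can be broken arbitrarily without affecting the success probability.

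The step I expect to require the most care is justifying why the uniform prior is the ``correct'' one: for a skewed prior on $M$, the MAP estimator is no longer the sample majority (for instance, a constant map can beat $\Phi$ when one value of $M$ has overwhelming prior). My plan is to lean on the symmetry of the sampling model under the label swap $0 \leftrightarrow 1$: this symmetry, together with Yao's principle as set up in the paragraph preceding the lemma, forces the distribution over $M$ that minimizes the best deterministic map's success probability to be the uniform one, and so it is against the uniform prior that $\Phi$ must be proven optimal. Everything else is a mechanical expansion of conditional probabilities, as suggested by the proof-sketch remark preceding the lemma.
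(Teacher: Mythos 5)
Your proposal is correct and follows essentially the same route as the paper: reduce to the uniform prior via Yao's principle, decompose the success probability over sample strings, and show term-by-term that outputting the sample majority maximizes each contribution (your likelihood-ratio comparison $\left(\tfrac{p}{1-p}\right)^{2k-S} \gtrless 1$ is exactly the paper's comparison of $p^k(1-p)^{S-k}$ against $(1-p)^k p^{S-k}$). Your explicit handling of the tie case $k = S/2$ and of why the uniform prior is the right one to analyze is, if anything, slightly more careful than the paper's treatment.
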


\begin{proof}
Recall that we model the majority value of the input population
$M$ as a 0-1 random variable. Assume here that the 
distribution of $M$ is fixed, and that $\Pr[M = 0] = \Pr[M = 1] = 1/2$.


For any map $f \in \mathcal{F}_S$, we can compute $p_f = \Pr[f(I) = M]$ by
\begin{align}
p_f &=
\Pr[f(I) = M]\\
&=
\Pr[f(I) = M, M = 0] + \Pr[f(I) = M, M=1] \\
&= 
\Pr[M = 0] \cdot \Pr[f(I) = M | M = 0] 
+ 
\Pr[M = 1] \cdot \Pr[f(I) = M | M = 1] \\
&= 
\frac{1}{2} \cdot \Pr[f(I) = M | M = 0] 
+ 
\frac{1}{2} \cdot \Pr[f(I) = M | M = 1],
\end{align}
where the last inequality follows from assuming 
$\Pr[M=0] = \Pr[M=1] = 1/2$. 
Recall that $I = \{0, 1\}^S$ is the input string of 
$S$ independent samples from the input population, and
$\mathbb{S}$ is the set of all possible values of $I$.
Thus for any $f \in \mathcal{F}_S$, the law of total probability gives
\begin{align}
\Pr[f(I) = M | M = 0]
&= \sum_{s \in \mathbb{S}} \Pr[f(I) = M, I = s | M = 0] \\
&= \sum_{s \in \mathbb{S}} \Pr[f(s) = M, I = s | M = 0] \\
&= 
\sum_{s \in \mathbb{S}} 
\Pr[f(s) = M | M = 0] \cdot \Pr[I = s | f(s) = M, M = 0]
\label{pr-fi-1}
\end{align}
Since the events $I = s$ and $f(s) = M$ are independent, 
$\Pr[f(s) = M | I = s, M = 0] = \Pr[f(s) = M | M = 0]$ 
for every $s \in \mathbb{S}$. Additionally, given that
every $f \in \mathcal{F}_s$ is a deterministic map, we can rewrite
$$
\Pr[f(s) = M | M = 0] = \Pr[f(s) = 0] = \mathbf{1}_{\{f(s) = 0\}},
$$
where $\mathbf{1}_{\{f(s) = 0\}}$ is the indicator random variable
of the event $f(s) = 0$. 
Thus for any $f \in \mathcal{F}_S$ and every $s \in \mathbb{S}$ we have 
\begin{align}
\Pr[f(s) = M | M = 0]
= 
\sum_{s \in \mathbb{S}} 
\Pr[I = s | M = 0] \cdot \mathbf{1}_{\{f(I) = 0\}}.
\end{align}
It can be similarly shown that
\begin{align}
\Pr[f(s) = M | M = 1]
= 
\sum_{s \in \mathbb{S}} 
\Pr[I = s | M = 1] \cdot \mathbf{1}_{\{f(s) = 1\}}
\end{align}
for every $s \in \mathbb{S}$ and a fixed $f \in \mathcal{F}_S$.
Thus substituting back into \eqref{pr-fi-1} gives
\begin{align}
\Pr[f(I) = M] 
&=
\frac{1}{2}
\sum_{s \in \mathbb{S}}
\left(\;
\Pr[I = s | M = 0] \cdot \mathbf{1}_{\{f(I) = 0\}}
+
\Pr[I = s | M = 1] \cdot \mathbf{1}_{\{f(I) = 1\}}
\;\right).
\label{pr-fi-2}
\end{align}
Now, let $\mathbb{S}_0 \subset \mathbb{S}$ denote the set of 
of input sample strings $\{0, 1\}^S$ with a 0-majority, and let
$\mathbb{S}_1 \subset \mathbb{S}$ denote the set
of sample strings with a 1-majority. Without loss of generality,
assume $\mathbb{S}_0$ and $\mathbb{S}_1$ are disjoint and that
$\mathbb{S}_0 \cup \mathbb{S}_1 = \mathbb{S}$. 
Additionally, for a fixed $f \in \mathcal{F}_S$ and any $s \in \mathbb{S}$ 
define $\alpha(f, s)$ by
$$
\alpha(f, s) 
= 
\Pr(I = s | M = 0) \cdot \mathbf{1}_{\{f(s) = 0\}}
+ 
\Pr(I = s | M = 1) \cdot \mathbf{1}_{\{f(s) = 1\}}.
$$
Thus for a fixed $f \in \mathcal{F}_S$ we can again rewrite
\begin{align}
\Pr[f(I) = M] 
= 
\frac{1}{2}
\left(
\sum_{s \in \mathbb{S}_0} \alpha(f, s) 
+ 
\sum_{s \in \mathbb{S}_1} \alpha(f, s)
\right).
\label{pr-fi-3}
\end{align}

Recall that $\Phi \in \mathcal{F}_S$ is the map that outputs
the majority value of the input sample string $I \in \{0, 1\}^S$. 
Fix any other map $f \neq \Phi \in \mathcal{F}_S$. Since $f \neq \Phi$, 
there exists at least one string $s \in \mathbb{S}$ such that
$f(s) \neq \Phi(s)$, and assume without loss of generality that
$s \in \mathbb{S}_0$. By definition, this means $\Phi(s) = 0$ and
$f(s) = 1$. Using the definition of $\alpha(f, s)$,
and recalling that $\Pr[s_i = 0 | M = 1] = 1-p < 0.5$ and 
$\Pr[s_i = 1 | M = 1] = p > 0.5$ are the probabilities that a single 
sample of $s$ is 0 or 1 respectively, we have
\begin{align}
\alpha(s, f) 
&=
\Pr[I = s | M = 0] \cdot \mathbf{1}_{\{f(s) = 0 \}} +
\Pr[I = s | M = 1] \cdot \mathbf{1}_{\{f(s) = 1 \}} \\
&= 
\Pr[I = s | M = 0] \cdot 0 + 
\Pr[I = s | M = 1] \cdot \mathbf{1}_{\{f(s) = 1 \}} \\
&= 
\Pr[s_i = 0 | M = 1]^k \cdot \Pr[s_i = 1 | M = 1]^{S-k} \\
&=
(1 - p)^k \cdot p^{S-k}
\end{align}
where $S/2 \le k < S$ since $s \in \mathbb{S}_0$. 
Meanwhile, for the same $s \in \mathbb{S}_0$, using the
majority sample map $\Phi$ gives
\begin{align}
\alpha(s, \Phi)
&=
\Pr[I = s | M = 0] \cdot \mathbf{1}_{\{\Phi(s) = 0 \}} + 
\Pr[I = s | M = 1] \cdot \mathbf{1}_{\{\Phi(s) = 1 \}} \\
&=
\Pr[I = s | M = 0] \cdot \mathbf{1}_{\{\Phi(s) = 0 \}} + 
\Pr[I = s | M = 1] \cdot 0 \\
&= 
\Pr[s_i = 0 | M = 0]^k \cdot \Pr[s_i = 1 | m = 0]^{S-k} \\
&=
p^k \cdot (1-p)^{S-k}
\end{align}
where again $S/2 \le k < S$ since $s \in \mathbb{S}_0$. 
Since by definition $p > 1/2$, it follows that
$\alpha(\Phi, s) > \alpha(f, s)$ for any
$f \in \mathbb{F}_S$ where $f \neq \Phi$, and for 
any $s \in \mathbb{S}_0$ where $f(s) \neq \Phi(s)$. It can
similarly be shown that 
$\alpha(\Phi, s) > \alpha(f, s)$ for any $s \in \mathbb{S}_1$
with $f(s) \neq \Phi(s)$. 
By the definition of $\Pr[f(I) = S]$ from \eqref{pr-fi-3}, it follows that
$\Pr[f(I) = M] < \Pr[\Phi(I) = M]$ for any $f \neq \Phi \in \mathcal{F}_S$,
thus proving the claim.
\end{proof}


\paragraph*{Sample Lower Bound for Majority With Input Margin 1}
We have established by Lemma~\ref{lemma:ciamlb-maj-opt} that to correctly
output the input population majority, the super worker agent's
error-minimizing strategy is to output the majority of its $S$ samples
when the distribution over $M$ is uniform.
Now the following lemma shows that when the input margin
of the population is 1, this strategy requires at least $\Omega(n^2)$
samples in order to output the input majority with
probability at least $1 -n^{-c}$ for some constant $c \ge 1$. 
The proof uses a tail bound on 
the Binomial distribution to show the desired trade off between
the error of probability and the requisite number of samples 
needed to achieve this error. 

\begin{restatable}{lemma}{majsamplelb}
  \label{lemma:maj-sample-lb}
  Let $C$ be a Super CI population of $n$ agents with majority value $M$ and
  input margin 1, and consider an input sample string $I = \{0, 1\}^S$
  obtained by a super worker agent $W$. 
  Then for any $c \ge 1$, letting $\Phi(I)$ denote the sample majority of $I$, 
  $
  \Pr[\Phi(I) \neq M] \le n^{-c}
  $
  only holds when $S \ge \Omega(n^2)$. 
\end{restatable}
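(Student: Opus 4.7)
The plan is to reduce the claim to an anti-concentration statement on the lower tail of a Binomial distribution. With input margin $1$ (so $n$ is odd) the fraction of inputs equal to $M$ is exactly $p = 1/2 + 1/(2n)$. Each of the super-agent's $S$ independent samples with replacement is therefore a Bernoulli$(p)$ trial, so the count $X$ of majority-valued samples in $I$ satisfies $X \sim \mathrm{Binomial}(S,p)$. By Lemma~\ref{lemma:ciamlb-maj-opt}, the error-minimizing strategy is the sample-majority map $\Phi$, so up to worst-case treatment of ties the event $\{\Phi(I) \neq M\}$ coincides with $\{X \leq S/2\}$; it therefore suffices to lower-bound $\Pr[X \leq S/2]$.

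First I would record the first two moments: $\mathbb{E}[X] = S/2 + S/(2n)$ and $\mathrm{Var}(X) = Sp(1-p) = \Theta(S)$. After standardizing, the event $\{X \leq S/2\}$ becomes $\{(X - \mathbb{E}[X])/\sqrt{\mathrm{Var}(X)} \leq -(1+o(1))\sqrt{S}/n\}$, so heuristically its probability is the Gaussian lower tail at $-\sqrt{S}/n$. I would make this rigorous by invoking a quantitative central limit theorem---the Berry-Esseen bound for sums of i.i.d.\ Bernoullis---to conclude
\[
\Pr[X \leq S/2] \;\geq\; \Psi\!\left(-\sqrt{S}/n\right) - O\!\left(1/\sqrt{S}\right),
\]
where $\Psi$ denotes the standard normal CDF.

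Finally I would argue the contrapositive. If $S = o(n^2)$ then $\sqrt{S}/n = o(1)$, so $\Psi(-\sqrt{S}/n) \to 1/2$ while the Berry-Esseen correction $O(1/\sqrt{S})$ vanishes; hence $\Pr[\Phi(I) \neq M] \geq 1/2 - o(1)$, which is incompatible with the target bound $n^{-c}$ for any $c \geq 1$ once $n$ is large. Equivalently, achieving error at most $n^{-c}$ forces $\sqrt{S}/n = \Omega(1)$, i.e.\ $S = \Omega(n^2)$, as claimed.

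The main point needing care is that the Berry-Esseen error must not swamp the Gaussian main term in the critical regime $S = O(n^2)$; but there $1/\sqrt{S} = O(1/n)$ is negligible against the constant-order main term $\Psi(-\sqrt{S}/n)$, so the inequality goes through cleanly. An alternative route that avoids the CLT entirely is a direct Stirling estimate: summing the Binomial pmf over the $\Theta(\sqrt{S})$ values $k \in [S/2 - \sqrt{S},\, S/2]$ yields $\Pr[X \leq S/2] = \Omega(1)$ for $S = O(n^2)$, giving the same conclusion. I expect no other real obstacle; the lemma is essentially a one-step anti-concentration argument once the reduction to $\mathrm{Binomial}(S,1/2+1/(2n))$ is in place.
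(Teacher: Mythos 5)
Your proposal is correct, but it reaches the conclusion by a genuinely different route from the paper. You both reduce the claim to lower-bounding the lower tail $\Pr[X \le S/2]$ of a $\mathrm{Binomial}(S, 1/2 + 1/(2n))$ variable; the divergence is in how that tail is bounded. The paper invokes an explicit information-theoretic lower bound $\Pr[X \le S/2] \ge (2S)^{-1/2}\exp\bigl(-S\cdot D_{KL}(\tfrac12 \| p)\bigr)$, estimates $D_{KL}(\tfrac12\|p) \le 4\delta^2 = 1/n^2$, and then shows by taking logarithms that no $S = o(n^2)$ can push $(2S)^{-1/2}e^{-S/n^2}$ below $n^{-c}$. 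You instead use Berry--Esseen (or a direct Stirling estimate) to show that for $S = o(n^2)$ the standardized threshold is $-(1+o(1))\sqrt{S}/n = -o(1)$, so the error probability is $1/2 - o(1)$. Your conclusion is quantitatively stronger --- the error stays $\Omega(1)$, versus the paper's $\Omega(1/\sqrt{S}) = \omega(1/n)$ --- and arguably more transparent about why $n^2$ is the right threshold; the paper's route avoids the CLT machinery and yields a clean closed-form necessary condition on $S$ that it reuses almost verbatim in Lemma~\ref{lemma:maj-input-lb} for the input-margin bound. One small point to tidy in your write-up: the Berry--Esseen correction $O(1/\sqrt{S})$ only vanishes when $S \to \infty$, so you should dispose of bounded $S$ separately (there the error is trivially at least $(1-p)^S = \Omega(1)$); your alternative Stirling route already covers this uniformly.
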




\begin{proof}
  We will assume $M = 1$ without loss of generality, meaning that
  $p = 1/2 + 1/2n$. Since $\Pr[\Phi(I) \neq M] = \Pr[X_S \le S/2]$,
  we will prove that $S \ge \Omega(n^2)$ is a necessary constraint to
  satisfy $\Pr[X_S \le S/2] \le n^{-c}$.

  Here $\Pr[X_S \le S/2]$ is just the lower tail of the CDF of a
  binomial distribution with parameter $p$. Thus when $p = 1/2 + \delta$
  for $\delta > 0$, we have the following lower bound on $\Pr[X_S \le S/2]$
  (see \cite{ash1990-infotheory}):
  \begin{align}
    \Pr[X_S \le S/2]
    \ge
    \frac{1}{\sqrt{2S}}
    \cdot
    \exp\left(-S \cdot D_{KL}( \tfrac{1}{2} \;||\; p ) \right).
    \label{eq:binom-lb-1}
  \end{align}
  Here, $D_{KL}$ denotes the Kullback-Leibler (KL) divergence between
  a fair coin and a Bernoulli random variable with bias $p$.
  This can be rewritten as
  \begin{align}
    D_{KL}( \tfrac{1}{2} \;||\; p )
    &=
    \frac{1}{2}\cdot\log\frac{1/2}{p} + \frac{1}{2}\cdot\log\frac{1/2}{1-p}\\
    &= 
    \frac{1}{2} \cdot \log \frac{1}{4p\cdot (1-p)} \\
    &\le
      4 \delta^2,
      \label{eq:kl-delta}
  \end{align}
  where the last inequality holds for $0 < \delta \le 1/3$.

  Substituting \eqref{eq:kl-delta} into \eqref{eq:binom-lb-1} then gives
  \begin{align}
    \Pr[X_S \le S/2]
    \ge
    \frac{1}{\sqrt{2S}}
    \cdot
    \exp\left(-4S \delta^2  \right),
    \label{eq:binom-lb-2}
  \end{align}
  and since we are assuming $p = 1/2 + 1/2n$, we have
  \begin{align}
    \Pr[X_S \le S/2]
    \ge
    \frac{1}{\sqrt{2S}}
    \cdot
    \exp\left(-\frac{S}{n^2} \right),
  \end{align}
  where $\delta = 1/2n \le 1/3$ for all $n \ge 2$. 
  Thus to ensure $\Pr[X_S \le S/2] \le n^{-c}$, it is necessary to
  have $(1/\sqrt{2S}) \cdot \exp\left(-S/n^2\right)\le n^{-c}$.
  
  Taking natural logarithms then yields the following constraint on
  $S$:
  \begin{align}
    \frac{S}{n^2}+ 0.5 \log S + 0.5 \ge c \cdot \log n
    \label{eq:s-constr-parity}.
  \end{align}

  We now want to show $S \ge \Omega(n^2)$ is needed to satisfy
  \eqref{eq:s-constr-parity}. To do this, consider any $S' = o(n^2)$.
  Observe then that $S'/n^2 = o(1)$ and $0.5 \log S' < \log n$.
  It follows that
  \begin{align*}
    \frac{S'}{n^2} + 0.5 \log S' + 0.5
    &<
      o(1) + \log n \\
    &\le
      c \log n,
  \end{align*}
  where the final inequality necessarily holds for all $c \ge 1$
  for large enough $n$. 

  Thus no value $S = o(n^2)$ can satisfy the necessary condition of
  \eqref{eq:s-constr-parity}, which means that we must have
  $S \ge \Omega(n^2)$ in order to ensure
  $\Pr[X_S \le S/2] = \Pr[\Phi(X_S) \neq M] \le n^{-c}$ holds
  for any $c \ge 1$. 
\end{proof}



The proof of Theorem~\ref{thm:em-lb}
(which is restated for convenience) follows from Lemmas~\ref{lemma:super-ci},~\ref{lemma:ciamlb-maj-opt}, 
and~\ref{lemma:maj-sample-lb} by invoking Yao's principle. 

\thmemlb*

\begin{proof}
  By Lemmas~\ref{lemma:ciamlb-maj-opt} and~\ref{lemma:maj-sample-lb}
  and using Yao's principle,
  in the Super CI model with an input population $C$ of $n$
  agents and input margin 1,
  a single super-agent worker $W$ can only compute
  the majority value of $C$ with high probability by taking
  at least $S = \Omega(n^2)$ input samples in the worst case.   
  By Lemma~\ref{lemma:super-ci}, this means that in the
  regular CI model with an input population of $n$ agents,
  any protocol for majority with input margin 1 requires
  at least $\Omega(n^2)$ total steps to be computed correctly
  with probability at least $1 - n^{-c}$. 
  When the size of the worker population is $m = \Theta(n)$, this
  means that $N = m + n = \Theta(n)$. Thus for an appropriate 
  choice of $c$, computing exact majority on such populations
  requires at least $\Omega(N^2)$ samples to be correct
  with probability at least $1 - N^{-\gamma}$ for any
  $\gamma \ge 1$. 
\end{proof}


\subsection{Proof of Theorem~\ref{thm:input-margin-lb}}

As mentioned, Theorem~\ref{thm:em-lb} 
implies a strong separation between the CI model and original population model, 
as prior works have
shown that exact majority is computable with
high probability within $O(n \polylog n)$ total steps
in the original model \cite{AlistarhAEGR17, AlistarhAG2018, Berenbrink2018APP}.
Thus, the persistent-state nature of input agents in the CI model
may seem to pose greater challenges than in the original model
for computing predicates quickly with high probability.
However, using the same sampling-based lower bound techniques
developed in the preceding section, Theorem~\ref{thm:input-margin-lb}
shows that when $m = \Theta(n)$, and 
when restricted only to $S = O(n \log n)$ total steps, 
any protocol computing majority in the CI model
requires an input margin of at least $\Omega(\sqrt{n}) = \Omega(\sqrt N)$ 
to be correct with high probability in $N$. 

Moreover, in Section~\ref{sec:dbamc}
we present a protocol for approximate majority 
in the CI model that converges correctly 
with high probability within $O(N \log N)$ total steps, 
so long as the initial input margin is $\Omega(\sqrt{N \log N})$. 
Thus, the existence of such a protocol indicates that the 
$\Omega(\sqrt{N})$ lower bound on the input margin is nearly tight 
(up to $\sqrt{\log N}$ factors) for protocols limited 
to $O(N \log N)$ total steps when $m = \Theta(n)$. 

\paragraph*{Input Margin Lower Bound for Majority}
We return to the Super CI model and use the same notation developed
in Section~\ref{sec:em-lb-overview}. 
We want to show that when $S = O(n \log n)$, we must have
$p \ge 1/2 + \Omega(1/\sqrt{n})$ in order to ensure
$\Pr[\Phi(I) \neq M] = \Pr[X_S \le S/2] \le n^{-c}$.
This lower bound on $p$ (the proportion of 1-agents, wlog, in the
input population) corresponds to an input margin
lower bound of $\Omega(\sqrt{n})$.

\begin{lemma}
  \label{lemma:maj-input-lb}
  Assume a 0-1 population of $n$ agents with majority value $M$ and
  majority proportion $p = 1/2 + \delta$, 
  and consider an input sample string $I = \{0, 1\}^S$ where
  $S = O(n \log n)$. 
  Then for any $c \ge 1$,
  $ \Pr[\Phi(I) \neq M] \le n^{-c}$
  only holds when $\delta \ge \Omega(1/\sqrt{n})$, where $\Phi$ is
  the map that outputs the majority value of $I$. 
\end{lemma}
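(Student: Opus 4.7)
The plan is to mirror almost exactly the strategy used for Lemma~\ref{lemma:maj-sample-lb}, but with the roles of $S$ and $\delta$ swapped: there $S$ was the unknown and $\delta = 1/(2n)$ was fixed, while here $S = O(n \log n)$ is fixed and we want to derive a necessary lower bound on $\delta$. The underlying inequality is the standard lower tail bound on the binomial CDF,
\begin{align*}
  \Pr[X_S \le S/2]
  \;\ge\;
  \frac{1}{\sqrt{2S}} \cdot \exp\bigl(-S \cdot D_{KL}(\tfrac{1}{2} \,\|\, p)\bigr),
\end{align*}
combined with the inequality $D_{KL}(1/2 \,\|\, 1/2 + \delta) \le 4\delta^2$ for small $\delta$, both of which are already used in the earlier proof.

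First I would assume without loss of generality that $M = 1$, so that $\Pr[\Phi(I) \ne M] = \Pr[X_S \le S/2]$, and plug in the KL bound to get
\begin{align*}
  \Pr[\Phi(I) \ne M]
  \;\ge\;
  \frac{1}{\sqrt{2S}} \cdot \exp(-4 S \delta^2).
\end{align*}
Requiring the right-hand side to be at most $n^{-c}$ and taking natural logarithms yields the necessary condition
\begin{align*}
  4 S \delta^2 + \tfrac{1}{2} \log(2S) \;\ge\; c \log n.
\end{align*}
Next I would substitute $S = O(n \log n)$, so $\tfrac{1}{2}\log(2S) = \tfrac{1}{2}\log n + O(\log \log n)$, which gives the condition
\begin{align*}
  4 S \delta^2 \;\ge\; (c - \tfrac{1}{2}) \log n - O(\log \log n).
\end{align*}
For any constant $c \ge 1$, the right-hand side is $\Omega(\log n)$, and dividing by $S = O(n \log n)$ yields $\delta^2 \ge \Omega(1/n)$, i.e.\ $\delta \ge \Omega(1/\sqrt{n})$. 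Equivalently, any $\delta = o(1/\sqrt n)$ forces $4 S \delta^2 = o(\log n)$, so the above constraint fails for large $n$, and therefore the error probability cannot be pushed below $n^{-c}$ with only $O(n \log n)$ samples.

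The main obstacle is mostly bookkeeping: one has to ensure the hidden constants inside $S = O(n \log n)$ are absorbed properly so that the conclusion $\delta \ge \Omega(1/\sqrt n)$ (rather than, say, $\Omega(1/\sqrt{n \log n})$) falls out cleanly, and to verify that $\delta = O(1/3)$ so that the quadratic bound on $D_{KL}$ is valid in the regime we care about. Once the lemma is established, Theorem~\ref{thm:input-margin-lb} follows by combining it with Lemmas~\ref{lemma:super-ci} and~\ref{lemma:ciamlb-maj-opt} and recalling that $N = m + n = \Theta(n)$ when $m = \Theta(n)$, so the input margin lower bound of $\Omega(\sqrt n)$ translates to $\Omega(\sqrt N)$.
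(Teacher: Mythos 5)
Your proposal is correct and follows essentially the same route as the paper's proof: the same binomial lower-tail bound via the KL divergence, the same quadratic bound $D_{KL}(\tfrac12\,\|\,p)\le 4\delta^2$, and the same logarithmic necessary condition, with the paper phrasing the final step as the contrapositive (any $\delta'=o(1/\sqrt n)$ makes the left-hand side $0.5\log n+o(\log n)<c\log n$) rather than dividing through by $S$. The bookkeeping concerns you flag (absorbing the constant in $S=an\log n$ and checking $\delta\le 1/3$) are handled in the paper exactly as you anticipate.
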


\begin{proof}
  Again wlog assume $M = 1$. Since $\Pr[\Phi(I) \neq M] = \Pr[X_S \le S/2]$,
  we will show that $\delta \ge \Omega(1/\sqrt{n})$ is a necessary condition
  to have $\Pr[X \le S/2] \le n^{-c}$ when $S = O(n \log n)$.

  Recall the lower bound on $\Pr[X_S \le S/2]$ from Lemma~\ref{lemma:maj-sample-lb}:
  \begin{align}
    \Pr[X_S \le S/2]
    \ge
    \frac{1}{\sqrt{2S}} \cdot \exp\left(- 4S\delta^2 \right),
    \label{eq:binom-lb-3}
  \end{align}
  which holds for $0 < \delta \le 1/3$. (Note that when $\delta = 1/\sqrt{n}$,
  $\delta \le 1/3$ for all $n \ge 9$). 
  
  Setting $S = an \log n$ for some $a > 0$ lets us write
  \eqref{eq:binom-lb-3} as
  \begin{align}
    \Pr[X_S \le S/2]
    \ge
    \frac{1}{\sqrt{2an \log n}}
    \cdot \exp\left(- 4an\log n \cdot \delta^2 \right),
  \end{align}
  which means it is necessary to have
  $(1/\sqrt{2an\log n})\cdot \exp(-4an \log n \cdot \delta^2) \le n^{-c}$ to
  ensure that $\Pr[X_S \le S/2] \le n^{-c}$. 

  Again by taking natural logarithms, we find that we require
  \begin{align}
    4an \log n \cdot \delta^2 + 0.5 \log n + 0.5 \log \log n + 0.5\log a + 0.5
    \ge c \log n.
    \label{eq:s-constr-input}
  \end{align}
  To show that $\delta \ge \Omega(1/\sqrt{n})$ is needed to satisfy
  \eqref{eq:s-constr-input}, we use a similar strategy as in
  Lemma~\ref{lemma:maj-sample-lb} and consider any $\delta' = o(1/\sqrt{n})$.
  This would imply $4an \log n\cdot (\delta')^2 = o(4a\log n) = o(\log n)$,
  and since $0.5 \log \log n = o(\log n)$ and $a > 0$ is a constant, we have
  \begin{align}
    4an\log n \cdot (\delta')^2 + 0.5 \log n + 0.5 \log \log n + 0.5\log a + 0.5
    &=
      0.5\log n + o(\log n)  \\
    &<
      c\log n,
  \end{align}
  where the final equality will hold for all $c \ge 1$ and large enough $n$. 

  Thus if $\delta = o(1/\sqrt{n})$, then the necessary condition
  \eqref{eq:s-constr-input} will be violated, meaning that 
  that we must have $\delta = \Omega(1/\sqrt{n})$ to ensure
  $\Pr[X_S \le S/2] \le n^{-c}$ holds for any $c \ge 1$ when $S = O(n \log n)$.
  Since we defined $p = 1/2 + \delta$, this corresponds to requiring an
  input margin of at least $\Omega(\sqrt{n})$ when $S = O(n \log n)$. 
\end{proof}

We now formally prove Theorem~\ref{thm:input-margin-lb}, which is 
restated for convenience.

\thminputmarginlb*

\begin{proof}
  By Lemmas~\ref{lemma:ciamlb-maj-opt} and ~\ref{lemma:maj-input-lb},
  in the Super CI model with an input population $C$ of size $n$,
  computing the majority of inputs correctly in $S = O(n \log n)$ samples 
  with probability at least $1 - n^{-c}$ requires an input margin of at 
  least $\Omega(\sqrt{n})$.
  By an argument similar to Lemma~\ref{lemma:super-ci} and Theorem~\ref{thm:em-lb},
  note that this implies that any protocol that computes $\majority$
  in the regular CI model within $O(n \log n)$ total steps also requires
  an input margin of $\Omega(\sqrt{n})$ to be correct with probability
  at least $1 - n^{-c}$. 
  Now consider that the size of the worker population is $m = \Theta(n)$,
  which means that $N = m + n = \Theta(n)$.
  This implies that for an appropriate choice of constant $c$ and taking
  only $O(N \log N)$ total steps, the input margin must be at least 
  $\Omega(\sqrt{N})$ in order for $\majority$ to be computed correctly
  with probability at least $1 - N^{-\gamma}$ for any $\gamma \ge 1$.
\end{proof}


 
\section{Approximate Majority with Catalytic Inputs} 
\label{sec:dbamc}

We now present and analyze the $\dbamc$ protocol for computing approximate
majority in the CI model. 
The protocol is a natural adaptation of the third-state dynamics from the 
original model, where we now account for the behavior of $n$
catalytic input agents and $m$ worker agents. 
Using the CI model notation introduced in Section~\ref{sec:prelims}, 
we consider a population with $N = n + m$ total agents.
Each input agent begins (and remains) in 
state $I_X$ or $I_Y$, and we assume each worker agent begins in 
a \textit{blank} state $B$, but may transition to states $X$ or $Y$
according to the transition rules found in Figure~\ref{fig:dbam}. 
Letting $i_X$ and $i_Y$ (and similarly $x, y$ and $b$) be random variables 
denoting the number of agents in states $I_X$ and $I_Y$ (and 
respectively $X, Y$, and $B$), we denote the \textit{input margin}
of the population by $\epsilon = |i_X - i_Y|$.
Throughout the section, we assume without loss of generality
that $i_X > i_Y$. 

Intuitively, an undecided (blank) worker agent
adopts the state of a decided agent (either an input or worker), 
but decided workers only revert back to a blank state upon 
interactions with other workers of the opposite opinion. 
Thus the protocol shares the opinion-spreading 
behavior of the original $\dbam$ protocol, but note that the 
inability for decided worker agents to revert back to 
the blank state upon subsequent interactions with an input
allows the protocol to converge to a configuration where 
all workers share the same $X$ or $Y$ opinion.

\paragraph*{Main Result}
The main result of the section characterizes the convergence 
behavior of the $\dbamc$ protocol when the input margin $\epsilon$ 
is sufficiently large. 
Recall that we say the protocol \textit{correctly computes} 
the $\majority$ of the inputs if we reach a configuration where
$x = m$. The following theorem shows that, subject to
mild constraints on the population sizes, when the
input margin is $\Omega(\sqrt{N \log N})$, the protocol 
correctly computes the majority value of the inputs
in roughly logarithmic parallel time with high probability. \\

\begin{restatable}{theorem}{thmdbamcoverall}
\label{thm:dbamc-overall}
There exists some constant $\alpha \ge 1$ such that, for
a population of $n$ inputs, $m$ workers, and initial input margin 
$\epsilon \ge \alpha \sqrt{N \log N}$, the 
$\dbamc$ protocol correctly computes the majority
value of the inputs within 
$O\left(\frac{N^4}{m^3} \log N\right)$ total interactions with 
probability at least $1- N^{-c}$ for any $c \ge 1$ 
when $m \ge n/10$ and $N$ is sufficiently large.
\end{restatable}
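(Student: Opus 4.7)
My plan is to analyze $\dbamc$ by decomposing its execution into phases and reducing each phase to the random-walk / martingale analysis developed for $\dbam$ and $\triam$ in \cite{condon2019}, augmented by a drift term contributed by the catalytic input agents. Throughout, let $x, y, b$ denote the counts of worker agents in states $X, Y, B$ respectively with $x + y + b = m$, and let $d = x - y$ denote the worker imbalance. The key observation is that the rules $I_X + B \to I_X + X$ and $I_Y + B \to I_Y + Y$ act as a persistent asymmetric source: conditioned on a blank worker interacting with any input, the worker joins the $X$-camp with probability $i_X/n$ and the $Y$-camp with probability $i_Y/n$, producing a positive conditional increment of magnitude $\epsilon / n$ on $d$. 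Because the input counts never change, this drift persists throughout the execution and cannot be exhausted by worker-worker cancellations.

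I would then decompose the execution into three phases. In the \emph{seeding phase}, starting from $b = m$ and $d = 0$, I would show that after $O(N^2/m)$ total interactions the drift from worker-input collisions produces $d = \Omega(\epsilon \cdot m / N)$ with high probability, via a Chernoff bound on a sum of essentially i.i.d.\ increments and the observation that worker-worker cancellation occurs at rate only $O((x+y)^2/m^2)$ per step, which is initially negligible. Under the hypothesis $\epsilon \ge \alpha \sqrt{N \log N}$ and $m \ge n/10$ this yields an initial worker imbalance of $\Omega(\sqrt{N \log N})$ entering the next phase. In the \emph{amplification phase}, I would couple the worker subpopulation with a biased variant of $\dbam$ and invoke the random-walk bounds of Condon et al.\ to show that $|d|$ grows (with inverse-polynomial failure probability) without ever flipping sign, reaching $\Omega(m)$ after an appropriate accounting of the effective interaction rates. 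Finally, in the \emph{convergence phase}, once the sum of the minority count $y$ and the blank count $b$ has been driven below a threshold, a direct coupon-collector / Chernoff argument handles the conversion of the remaining workers by both worker-worker and input-worker interactions. A union bound over the three phases and an accounting of the total step count in terms of $N$ and $m$ then yields the claimed $O((N^4/m^3)\log N)$ upper bound, which specialises to $O(N \log N)$ in the regime $m \ge n/10$.

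The main obstacle will be in the amplification phase, where the random-walk bounds from the original model must be carefully transferred to the CI setting despite two complications: only a $(m/N)^2$ fraction of sampled pairs are worker-worker, which slows the effective dynamics and accounts for most of the $N^4/m^3$ factor, and the input drift is superposed on the worker-worker random walk, so one must argue that symmetric fluctuations cannot combine pathologically with this drift to flip the sign of $d$. The second point should in fact ease the analysis compared to pure $\dbam$, since the drift always points toward the true majority; the dominant technical burden will therefore be bookkeeping the mixing-time contributions of worker-worker, worker-input, and (null) input-input interactions, and verifying that the constant hidden in $\alpha \sqrt{N \log N}$ is large enough to make all three phases' failure probabilities sum to at most $N^{-c}$.
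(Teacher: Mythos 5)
Your plan is essentially the paper's argument in a slightly different costume, so it is workable; the main difference is bookkeeping. The paper does not track the worker imbalance $d = x-y$ and the input drift separately: it folds them into a single progress measure $P = \epsilon + \hat x - \hat y$ (with $\hat x = x+b/2$, $\hat y = y+b/2$, so that $P = \epsilon + d$ exactly), and observes that conditioned on a blank-consuming step the probability of a $B+X$ or $B+I_X$ event is $\tfrac{i_X+x}{i_X+x+i_Y+y} \ge \tfrac12 + \tfrac{P}{10N}$. This makes your ``seeding phase'' unnecessary: since $P$ starts at $\epsilon \ge \alpha\sqrt{N\log N}$, the walk is biased from the very first step, and what you call seeding is literally the first doubling stage of the paper's Phase~1 (Lemma~\ref{lemma:doubling-difference}). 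The unified measure also disposes of your worry about ``symmetric fluctuations combining pathologically with the drift'': a single application of the gambler's-ruin bound (Lemma~\ref{lemma:one-dim-rw}) to $P$ shows it never halves, which simultaneously controls the sign of $d$ and the worker-worker amplification. The remaining structure matches yours: a doubling phase for $P$, a halving phase for $\hat y$, an endgame driving $\hat y$ to $0$, then Lemma~\ref{lemma:dbamc-blank-ub} (showing $x \ge 3m/8$ persists) to convert productive-step counts into total-step counts, which is where the $N^4/m^3$ factor arises, as you correctly diagnose.

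Two soft spots in your version. First, in the seeding phase the increments of $d$ are not ``essentially i.i.d.'' once any workers are decided, because $B+X$ and $B+Y$ steps amplify whichever side is currently ahead; you would end up re-deriving the biased-random-walk bound on $\epsilon + d$ anyway, so the separate phase buys nothing and invites an error if the cancellation term is dismissed too casually. Second, the endgame is not a coupon-collector process: a blank worker can still be converted to $Y$ by an $I_Y$ input or a $Y$ worker, and $X+Y$ cancellations regenerate blanks, so $\hat y$ is a downward-biased random walk rather than a monotone count. The Chernoff half of your ``coupon-collector / Chernoff'' argument is the part that actually works (this is how the paper's Lemma~\ref{lemma:dbamc23-corr-mk} handles Phase~3); a pure coupon-collector bound would be incorrect.
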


Because the CI model allows for distinct (and possibly
unrelated) input and worker population sizes, we aim to 
characterize all error and success probabilities with respect
to the \textit{total} population size $N$.
The analysis in the proof of Theorem~\ref{thm:dbamc-overall}
characterizes the convergence behavior of the protocol 
in terms of both population sizes $m$ and $n$, and thus
the convergence time of $O((N^4/m^3) \log N)$ is not 
always equivalent to $O(N \log N)$.
On the other hand, in the case
when $m = \Theta(n)$ --- which is an assumption 
used to provide lower bounds over the CI model from 
Section~\ref{sec:ci-model-lb} --- we have as
a corollary (stated further below) that the protocol
correctly computes the majority of the inputs within
$O(N \log N)$ total steps with probability
at least $1- N^{-\alpha}$. 

\subsection{Analysis Overview}
The proof of the main result leverages and applies 
the random walk tools from \cite{condon2019} (in 
their analysis of the original $\dbam$ protocol) 
to the $\dbamc$ protocol. 
Given the uniformly-random behavior of the interaction
scheduler, the random variables $x, y$ and $b$ (which
represent the count of $X$, $Y$, and $B$ worker agents
in the population) each behave according to some
one-dimensional random walk, where the biases in the
walks change dynamically as the values of these
random variables fluctuate.
Based on the coupling principle that an upper bound on 
the number of steps for a random walk with success 
probability $p$ to reach a certain position is an 
\textit{upper bound} on the step requirement for a second random 
walk with probability $\hat p \ge p$ to reach 
the same position, we make use of several
\textit{progress measures} that give the behavior 
of the protocol a natural structure. 
As used in the analysis of Condon et al. \cite{condon2019},
we define $\hat x = x + b/2$, $\hat y = y + b/2$, and 
$P = \epsilon + \hat x - \hat y$. 
It can be easily seen that $\hat x + \hat y = m$
will hold throughout the protocol. On the other hand,
the progress measure $P$ captures the collective 
gap between the majority and non-majority opinions in 
the population.
Observe that the protocol has 
correctly computed the input majority value when 
$P = \epsilon + m$ and $\hat y = 0$. 

%
%
Now, similar again to the analysis of \cite{condon2019},
we define the following \textbf{Phases} and \textbf{Stages}
of the $\dbamc$ protocol. 
Intuitively, every correctly-completed stage of 
Phase 1 results in the progress measure $P$ doubling,
and every correctly-completed stage of Phase 2 results in 
the progress measure $\hat y$ decreasing by a factor of two. 
Described more formally:

\begin{enumerate}
\item
\textbf{Phase 1} of the protocol starts with $P = \epsilon$
and completes \textit{correctly} once $P \ge \epsilon + 7m/8$. 
Each \textbf{stage of Phase 1} begins with 
$P = 2^{t} \cdot \epsilon$ and completes correctly once
$P = 2^{t+1} \cdot \epsilon$ or when $P \ge \epsilon + 7m/8$, 
where $t \in \{0, \dots, O(\log N)\}$. 
\item
\textbf{Phase 2} of the protocol starts with 
$\hat y = 7m/8$ (equivalent to $P \ge \epsilon + 7m/8$)
and completes correctly once $\hat y \le \alpha \log m$.
Each \textbf{stage of Phase 2} begins with 
$\hat y = 2^{-s} \cdot m/16$ and completes correctly
once $\hat y \le 2^{-(s+1)} \cdot m/16$ or when
$\hat y \le \alpha \log m$, where
$s \in \{0, \dots, O(\log N)\}$.  
\item
\textbf{Phase 3} of the protocol starts with 
$\hat y = \alpha \log m$ and completes correctly
once $\hat y = 0$. 
\end{enumerate}

Note that among the protocol's non-null transitions 
(see Figure~\ref{fig:dbamc}), only the
interactions $I_X + B$, $I_Y + B$, $X + B$, and $Y + B$ 
change the value of either progress measure. For this
reason, we refer to the set of non-null transitions 
(which includes $X+Y$ interactions) as \concept{productive}
steps, and the subset of interactions that change our
progress measures as the set of \concept{blank-consuming}
productive steps (sometimes referred to as productive-b steps).
The analysis strategy for every phase and stage is to employ
a combination of standard Chernoff bounds and
martingale techniques (in general, see \cite{gs-book} 
and \cite{feller-vol-1}) to
obtain with-high-probability estimates of 
(1) the number of productive steps needed to complete each 
phase/stage correctly, 
and (2) the number of total steps needed to obtain the
productive step requirements. 
We call these two steps \textit{correctness} and 
\textit{efficiency}, respectively. 
Given an input margin that is sufficiently large,
and also assuming a
population where the number of worker agents is at least
a small constant fraction of the input size, we can
then sum over the error probabilities of each phase/stage
and apply a union bound to yield the final 
result of Theorem~\ref{thm:dbamc-overall}.

While the $\dbamc$ protocol is conceptually
similar to the original $\dbam$ protocol, the presence
of persistent-state catalysts whose opinions 
never change requires a careful analysis of the convergence 
behavior. Moreover, simulation results presented
in Section~\ref{sec:leaks} show interesting differences
in the evolution of the protocol for varying population sizes.

%
%
\subsection{Proof of Theorem~\ref{thm:dbamc-overall}}
In this section we develop the 
tools used to prove
Theorem~\ref{thm:dbamc-overall}. 
To begin, we state the following standard 
probabilistic tools used throughout the analysis: 
absorption probabilities for one-dimensional random walks, 
and standard upper and lower Chernoff bounds. 

\begin{lemma}
\label{lemma:one-dim-rw} \cite{feller-vol-1}
If we run an arbitrarily long sequence
of independent trials, each with success probability at least $p>1/2$, then the probability that the number of failures ever
exceeds the number of successes by $b$ is at most $\left(\frac{1-p}{p}\right)^b$.
\end{lemma}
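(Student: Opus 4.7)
The plan is to reduce to the standard biased random walk with step $+1$ on success and $-1$ on failure, and bound the probability that such a walk ever hits $-b$. I would proceed as follows.

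First, I would handle the case where every trial has success probability exactly $p$, so the partial sums $S_n = (\text{successes}) - (\text{failures})$ form a simple asymmetric random walk on $\mathbb{Z}$ starting at $0$, with step $+1$ w.p.\ $p$ and $-1$ w.p.\ $q := 1-p$. Let $\rho$ denote the probability that the walk ever visits $-1$ starting from $0$, and let $\rho_b$ be the probability of ever visiting $-b$. By the strong Markov property applied at the successive first-passage times to $-1, -2, \dots, -b$, one has $\rho_b = \rho^{\,b}$, so the whole task reduces to computing $\rho$.

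Next, I would determine $\rho$ by a first-step analysis. Conditioning on the first step: with probability $q$ the walk reaches $-1$ immediately, and with probability $p$ it moves to $+1$, from which reaching $-1$ requires first descending from $+1$ to $0$ and then from $0$ to $-1$, each an independent event of probability $\rho$ by translation invariance and the strong Markov property. This yields the quadratic
\begin{equation*}
\rho = q + p\rho^2,
\end{equation*}
whose roots are $\rho = 1$ and $\rho = q/p$. Since $p > 1/2$, the walk has positive drift and so $\rho < 1$; the correct root is therefore $\rho = q/p = (1-p)/p$. Combining with $\rho_b = \rho^b$ gives the exact bound $\left(\frac{1-p}{p}\right)^b$ in the homogeneous case.

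Finally, I would remove the assumption that every trial has success probability \emph{equal to} $p$. Since the trials are independent with success probability $\geq p$, I can couple them step-by-step with an i.i.d.\ sequence of Bernoulli$(p)$ trials so that every success in the coupled sequence is also a success in the original sequence. Under this coupling, the original partial-sum walk dominates the Bernoulli$(p)$ walk at every step, so if the original walk ever reaches $-b$, the coupled one does as well. The probability that the original walk's failures exceed its successes by $b$ is therefore at most the corresponding probability for the homogeneous walk, which was just shown to be $\bigl((1-p)/p\bigr)^b$.

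The only delicate step is the justification that $\rho < 1$ when $p > 1/2$; I would invoke either the strong law of large numbers (so that $S_n/n \to p - q > 0$ a.s., which precludes $\rho = 1$) or optional stopping applied to the martingale $(q/p)^{S_n}$ stopped at the first-passage time to $-1$, truncated at $n$ and then passed to the limit via bounded convergence. Everything else is a routine Markov-chain computation and a standard monotone coupling, so this is where I would spend the most care.
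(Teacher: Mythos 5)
Your proof is correct, and it is essentially the classical gambler's-ruin/first-passage argument that the paper itself does not reproduce: Lemma~\ref{lemma:one-dim-rw} is stated without proof and attributed to \cite{feller-vol-1}, where exactly this derivation (first-step analysis giving $\rho = q + p\rho^2$, selecting the root $\rho = (1-p)/p$ via the positive drift, and $\rho_b = \rho^b$ by the strong Markov property) appears. Your additional monotone-coupling step to pass from ``probability exactly $p$'' to ``probability at least $p$'' is also correct and is the right way to justify the inequality form of the statement.
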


\begin{lemma}
\label{lemma:chernoff-tail-bounds} \cite{chernoff1952}
If we run $N$ independent Bernoulli trials, each with success probability $p$, 
then the number of successes $S_N$ has expected value $\mu=Np$,
and for $0 < \delta < 1$, $\mathbb{P}[S_N \leq (1-\delta) \mu] \leq \exp(- \frac{\delta^2 \mu}{2})$, and $\mathbb{P}[S_N \geq (1+\delta) \mu] \leq \exp(- \frac{\delta^2 \mu}{3})$.
\end{lemma}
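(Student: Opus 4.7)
The plan is to prove both tail bounds via the standard Chernoff method: bound the moment generating function of $S_N$, apply Markov's inequality to an exponential transform, and then optimize over the free parameter. Since the $X_i$ are independent Bernoullis with $S_N = \sum_i X_i$, I have the factorization $\mathbb{E}[e^{tS_N}] = \prod_i \mathbb{E}[e^{tX_i}] = (1 + p(e^t - 1))^N$, and using $1 + x \leq e^x$ this gives the clean bound $\mathbb{E}[e^{tS_N}] \leq e^{\mu(e^t - 1)}$ for every real $t$, where $\mu = Np$. This single inequality feeds both tail bounds.

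For the upper tail, I would fix $t > 0$ and apply Markov's inequality to obtain
\begin{equation*}
\mathbb{P}[S_N \geq (1+\delta)\mu] = \mathbb{P}[e^{tS_N} \geq e^{t(1+\delta)\mu}] \leq e^{\mu(e^t - 1) - t(1+\delta)\mu}.
\end{equation*}
Optimizing by setting $t = \ln(1+\delta)$ yields the tight Chernoff bound $e^{-\mu\,\varphi_+(\delta)}$ with $\varphi_+(\delta) = (1+\delta)\ln(1+\delta) - \delta$. To reach the stated clean form, I would then invoke the calculus inequality $\varphi_+(\delta) \geq \delta^2/3$ valid on $0 < \delta < 1$; this can be verified by Taylor expansion around $\delta = 0$ and checking the derivatives of $\varphi_+(\delta) - \delta^2/3$ at the endpoints.

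For the lower tail, I would use $t < 0$: writing $t = -s$ with $s > 0$, Markov's inequality applied to $e^{-sS_N} \geq e^{-s(1-\delta)\mu}$ yields $\mathbb{P}[S_N \leq (1-\delta)\mu] \leq e^{\mu(e^{-s} - 1) + s(1-\delta)\mu}$. Optimizing at $s = -\ln(1-\delta)$ gives the bound $e^{-\mu\,\varphi_-(\delta)}$ with $\varphi_-(\delta) = (1-\delta)\ln(1-\delta) + \delta$. The desired $e^{-\mu\delta^2/2}$ then follows from the sharper inequality $\varphi_-(\delta) \geq \delta^2/2$ on $0 < \delta < 1$, again checkable by a short calculus argument (Taylor's theorem around $0$ with remainder, using that $\varphi_-''(\delta) = 1/(1-\delta) \geq 1$).

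The only substantive obstacle is verifying the two scalar inequalities $\varphi_+(\delta) \geq \delta^2/3$ and $\varphi_-(\delta) \geq \delta^2/2$ on $(0,1)$; everything else is routine manipulation of MGFs and Markov. Since this lemma is a classical result stated with citation, I would keep the write-up brief and either cite a standard reference for these two scalar inequalities or dispatch them in one line each via a second-derivative comparison.
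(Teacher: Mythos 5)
Your proof is correct: the MGF bound $\mathbb{E}[e^{tS_N}] \leq e^{\mu(e^t-1)}$, the optimizations at $t = \ln(1+\delta)$ and $s = -\ln(1-\delta)$, and the two scalar inequalities $\varphi_+(\delta) \geq \delta^2/3$ and $\varphi_-(\delta) \geq \delta^2/2$ on $(0,1)$ all check out. The paper does not prove this lemma at all --- it is stated with a citation to Chernoff's 1952 paper --- and your argument is exactly the standard derivation one would find in the cited literature.
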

The following subsections proceed to prove the correctness
and efficiency of the stages and phases of the protocol.

\subsubsection{Phase 1: Blank-Consuming Step Bounds}

For a population with an initial input margin $\epsilon \ge \alpha \sqrt{N \log N}$,
the following lemma gives an upper bound on the number of productive-b (blank-consuming)
steps needed to complete each stage of Phase 1 correctly. Recall that each stage of Phase
1 completes correctly when the progress measure $P = \epsilon + \hat x - \hat y$ 
doubles from its initial value. 

%
\begin{lemma}
\label{lemma:doubling-difference}

During Phase 1 of the $\dbamc$ protocol on a population with 
input margin $\epsilon \ge \alpha \sqrt{N \log N}$ for some $\alpha \ge 1$, 
starting at $P = \Delta$, within $100N$ productive-b steps 
$P$ will increase to $\min\{2P \;,\; 7m/8 + \epsilon\}$  with
probability at least $1 - N^{-(\alpha/6)} - N^{-(\alpha/2)}$.
\end{lemma}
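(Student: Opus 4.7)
The plan is to track how $P$ evolves over productive-b steps via a coupling with a biased one-dimensional random walk. First, I would verify from Figure~\ref{fig:protocols} that each of the four productive-b transitions $B+I_X$, $B+X$, $B+I_Y$, $B+Y$ changes $P$ by exactly $\pm 1$, with the first two raising $P$ and the last two lowering it. A direct count of ordered pairs then shows that, conditioned on the next step being productive-b,
\[
\Pr[P \text{ increases}] \;=\; \frac{i_X + x}{(i_X + x) + (i_Y + y)} \;=\; \frac{1}{2} + \frac{P}{2(n + m - b)},
\]
using the identity $(i_X - i_Y) + (x - y) = \epsilon + (P - \epsilon) = P$. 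The per-step drift is therefore at least $P/N$, and whenever $P \ge \Delta/2$ the productive-b walk on $P$ is stochastically dominated from below by a fixed-bias walk with success probability $p_* := 1/2 + \Delta/(4N)$.

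Next, I would split the failure event into two pieces. The first bad event is that $P$ dips below $\Delta/2$ at some point during the $100N$ productive-b steps. By Lemma~\ref{lemma:one-dim-rw} applied to the dominating fixed-bias walk with gap $b = \Delta/2$, this has probability at most $((1-p_*)/p_*)^{\Delta/2} \le \exp(-\Omega(\Delta^2/N))$; substituting $\Delta \ge \epsilon \ge \alpha\sqrt{N \log N}$ yields a bound of the form $N^{-\Theta(\alpha)}$, giving one of the two terms in the statement. The second bad event, on the complement of the first, is that after $T = 100N$ productive-b steps $P$ has still not reached $\min\{2\Delta,\; 7m/8 + \epsilon\}$. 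On that complement the drift $\Delta/(2N)$ is sustained throughout, so if $U$ denotes the number of up-steps in the coupled fixed-bias walk, then $\mathbb{E}[U] = Tp_* = 50N + 25\Delta$ while the bad event forces $U \le 50N + \Delta/2$. Applying Lemma~\ref{lemma:chernoff-tail-bounds} with deviation $\delta = \Theta(\Delta/N)$ bounds this by $\exp(-\Theta(\Delta^2/N)) \le N^{-\Theta(\alpha)}$, the second term.

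A union bound over the two failure modes, combined with the observation that $P$ reaching $7m/8 + \epsilon$ also closes the stage correctly, yields the claimed $1 - N^{-\alpha/6} - N^{-\alpha/2}$ success probability. The main obstacle is the state-dependent drift: the bias $P/(n+m-b)$ of each productive-b step grows with $P$ but could shrink if $P$ happens to decrease, so one cannot apply Chernoff to the raw $P$-walk directly. Isolating the ``$P$ never drops below $\Delta/2$'' event via Lemma~\ref{lemma:one-dim-rw} is the key reduction, since on its complement the walk is dominated by a fixed-bias walk with drift $\Delta/(2N)$ and the remainder becomes a standard Chernoff calculation; tuning the drop threshold together with the Chernoff deviation is what pins down the precise exponents $\alpha/6$ and $\alpha/2$ in the statement.
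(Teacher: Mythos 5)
Your proposal is correct and follows essentially the same route as the paper's proof: lower-bound the conditional probability of a $P$-increasing blank-consuming step by $1/2 + \Theta(\Delta/N)$ while $P \ge \Delta/2$, invoke Lemma~\ref{lemma:one-dim-rw} to rule out ever dropping to $\Delta/2$, then apply the Chernoff bound of Lemma~\ref{lemma:chernoff-tail-bounds} to the count of up-steps among the $100N$ productive-b steps, and union-bound the two failure events. The only differences are cosmetic constants (your drift bound $\Delta/(4N)$ versus the paper's $\Delta/(10N)$), which if anything yield slightly stronger exponents.
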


\begin{proof}
Observe that from the time at which $P=\Delta$ until the point 
(if ever) $P$ decreases below $\frac{1}{2}\Delta$, $P$ is strictly greater 
than $\frac{1}{2}\Delta$. Therefore, until $P$ reaches $\frac{1}{2}\Delta$, 
the probability of a successful interaction (conditioned on having
a blank-consuming interaction) is at least 
\begin{align}
    p = \frac{bi_X+bx}{bi_X+bx+bi_Y+by} 
    = \frac{i_X+x}{i_X+x+i_Y+y}
    &= \frac{P+i_Y+y}{P+2(i_Y+y)} \\
    &=\frac{1}{2}+\frac{P}{4(i_Y+y)+2P}
    > \frac{1}{2} + \frac{\Delta}{10N}
\end{align}
where the final inequality holds because $P > \frac{1}{2}\Delta$, $i_Y+y < N$, 
and $\Delta < N$.
The change in $P$ can thus be viewed as a biased random walk starting at 
$\Delta$ with success probability $p$. 

Now, by Lemma \ref{lemma:one-dim-rw}, starting at $P=\Delta$, 
the probability of ever having an excess margin of $\frac{1}{2}\Delta$ 
between  $B+Y$ or $B+I_Y$ steps to $B+X$ or $B+I_X$ steps is at most
\begin{align}
    q=
    \left(\frac{1-p}{p}\right)^{\Delta/2} 
    = \left(\frac{5N-\Delta}{5N+\Delta}\right)^{\Delta/2} 
    &= \left(1-\frac{2\Delta}{5N+\Delta}\right)^{\Delta/2} \\
    &\le \exp(-\Delta^2/(5N+\Delta)) \\
    &\le N^{-(\alpha/6)}
\end{align}
where the final inequality holds given the assumption that 
$\epsilon \ge \alpha\sqrt{N\log N}$. 
Thus with high probability, $P$ will never drop below $\frac{1}{2}\Delta$
when starting initially from $\Delta$. 

Now in a sequence of $100N$ productive-b steps, 
in order for $P$ to reach $\min\{2P, 7m/8 + \epsilon\}$, 
it is sufficient to ensure that the number of $B+X$ or $B+I_X$ steps
within the sequence (which we denote by $S_x$) exceeds the number 
of $B+Y$ or $B+I_Y$ steps within the sequence (which we denote by $S_y$) 
by at least $\Delta$. 

Assuming that $P \ge \Delta/2$ holds, we can see that in expectation over
the sequence of $100N$ productive-b steps that $\E[S_X] = 50N + 10\Delta$.
As long as $S_X \ge 50 N + 0.5\Delta$ it follows that $S_X - S_Y \ge \Delta$,
and thus applying an upper Chernoff bound shows that 
\begin{align}
    \Pr[S_X < (50N + 0.5\Delta) ] 
    &= 
    \Pr\left[\; S_X < (1 - (9.5\Delta/(50N + 10\Delta)))\cdot \E[S_X]\; \right] \\
    &\le
    \exp\left(
        - \frac{1}{2} \cdot \frac{9.5^2 \Delta^2}{50N + 10\Delta} 
    \right)\\
    &\le 
    \exp\left(- \frac{1}{2} \cdot \alpha \log N\right)
    \le N^{-(\alpha/2)},
\end{align}
where again the penultimate inequality is due to the 
assumption that $\Delta \ge \epsilon \ge \alpha \sqrt{N \log N}$. 

Now, summing over all error probabilities and taking a union bound
shows that $P$ will increase to $\min\{2P, 7m/8 + \epsilon\}$ 
within $100N$ productive-b steps with probability 
at least $1 - N^{-(\alpha/6)} - N^{-(\alpha/2)}$.
\end{proof}

\subsubsection{Phases 2 and 3: Blank-Consuming Step Bounds}
The following lemma gives analogous bounds on the number of 
productive-b steps needed to complete stages of Phase 2, and Phase 3,
correctly with high probability.
Recall that each stage of Phase 2 of the protocol begins with 
$\hat y = m/k$, where $16 \le k \le m/(\alpha \log m)$, and ends
correctly when $\hat y$ decreases by a factor of 2 from its original
value (similarly, Phase 3 starts with $\hat y = \alpha \log m$ 
and ends correctly once $\hat y$ reaches 0).
Note that the following lemma proves a slightly stronger result by
showing the number of productive-b steps needed to bring $\hat y$ to $0$,
which will always be an upper bound on the number of steps needed
to complete a stage of Phase 2, or Phase 3 correctly.

\begin{lemma}
\label{lemma:dbamc23-corr-mk}
Say $\hat y = \frac{m}{k}$ for $16 \le k \le m/(\alpha \log m)$
during Phase 2 of the $\dbamc$ protocol on a population with 
input margin $\epsilon \ge \alpha \sqrt{N \log N}$ for some $\alpha \ge 1$.
Assuming that $\hat y$ remains below $2m/k$, then after at most
$480 \tfrac{N^2}{m^2}(\alpha \log N + \tfrac{m}{k})$
productive-b steps, $\hat y$ goes to 0 with probability at least 
$1 - N^{-\alpha}$ for $\alpha \ge 1$. 
\end{lemma}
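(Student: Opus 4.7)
The plan is to analyze the trajectory of $\hat y$ during the specified window as a biased one-dimensional random walk driven solely by the productive-b (blank-consuming) transitions, and then apply a standard Chernoff-type concentration bound to conclude that the prescribed number of steps suffices with the desired probability.

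First, I would pin down the per-step bias. Each productive-b transition changes $\hat y$ by $\pm \tfrac{1}{2}$, and conditional on a productive-b interaction occurring, the probability of a ``good'' step (a $B+X$ or $B+I_X$ interaction, which decreases $\hat y$) is
\[
p \;=\; \frac{x+i_X}{x+i_X+y+i_Y} \;=\; \tfrac{1}{2} + \frac{P}{2(N-b)}.
\]
Under the standing assumption $\hat y < 2m/k$, we have $\hat x > m-2m/k$ and thus $P > \epsilon + m - 4m/k \ge 3m/4$ for $k \ge 16$, so $p - \tfrac{1}{2} \ge 3m/(8N)$ at every step inside the window. By a standard coupling argument, the true walk stochastically dominates a homogeneous biased walk with success probability $p^\star = \tfrac{1}{2} + 3m/(8N)$, so it suffices to analyze that simpler walk.

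Second, I would reduce the target event to a condition on the step counts. Starting from $\hat y = m/k$ and moving by $\pm \tfrac{1}{2}$ per productive-b step, hitting $\hat y = 0$ is equivalent to $S_X \ge S/2 + m/k$, where $S_X$ is the number of good steps in a total of $S$ productive-b steps. Plugging in $S = 480\,\frac{N^2}{m^2}(\alpha \log N + m/k)$ and $p \ge p^\star$ gives
\[
\E[S_X] \;\ge\; \tfrac{S}{2} \;+\; \tfrac{3m}{8N}\,S \;=\; \tfrac{S}{2} \;+\; 180\,\tfrac{N}{m}(\alpha \log N + m/k),
\]
so, using $N \ge m$, the slack $a := \E[S_X] - (S/2 + m/k)$ is at least $179\bigl(\alpha \log N + m/k\bigr)$. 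Applying a Chernoff/Hoeffding bound on $S_X$, viewed as a sum of $S$ independent Bernoulli indicators, then gives $\Pr[S_X < \E[S_X] - a] \le \exp(-2a^2/S)$. Plugging in the bounds on $a$ and $S$ and using $m^2/N^2 \ge 1/121$ (which follows from the theorem-level constraint $m \ge n/10$, so $N \le 11m$), the exponent becomes a constant multiple of $\alpha \log N + m/k$, which is at least $\alpha \log N$, yielding the required error bound of $N^{-\alpha}$.

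The main obstacle I anticipate is constant chasing: verifying that the numerical factor $480$ precisely absorbs the losses from the $3m/(8N)$ bias bound, the $1/121$ factor coming from $N/m \le 11$, and the constant in the Chernoff exponent, so that the final exponent lands at $\alpha \log N$ rather than some fraction of it. A subtler point is that $p$ is state-dependent, but the standing assumption $\hat y < 2m/k$ preserves the pointwise lower bound on $p$ at every step inside the window, which legitimizes the coupling to the homogeneous $p^\star$-walk and lets the Chernoff bound be applied as if the trials were i.i.d.
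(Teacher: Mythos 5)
Your proposal is correct and follows essentially the same route as the paper: lower-bound the conditional probability of a $B+X$/$B+I_X$ step given a blank-consuming step (using the standing assumption $\hat y < 2m/k$ to keep $P$ large), reduce ``$\hat y$ reaches $0$'' to a requirement on the excess of good over bad steps in a window of $S$ productive-b steps, and close with a concentration bound; your bias bound $p \ge \tfrac12 + 3m/(8N)$ is in fact slightly tighter than the paper's $p \ge \tfrac12 + \tfrac{(k-4)}{10k}\cdot\tfrac{m}{N}$, and you use additive Hoeffding where the paper uses multiplicative Chernoff. One small blemish: by weakening the slack via $N \ge m$ to $a \ge 179(\alpha\log N + m/k)$, you are forced to invoke $N \le 11m$ (i.e.\ $m \ge n/10$) at the end, a hypothesis that is not part of this lemma's statement (the paper's constant-chasing avoids it); this is harmless in context since the lemma is only applied under that assumption, and in any case your own argument repairs itself if you retain the factor $N/m$ in the slack, since then $a^2/S$ carries a $(N/m)^2$ against the $N^2/m^2$ in $S$ and the ratio cancels, giving exponent $\tfrac{2\cdot 179^2}{480}(\alpha\log N + m/k) \ge \alpha\log N$ with no restriction on $N/m$.
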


\begin{proof}
Recall from Lemma~\ref{lemma:doubling-difference} that throughout the
execution of $\dbamc$ on a population with 
input margin $\epsilon \ge \alpha \sqrt{N \log N}$, 
the probability of a $B+X$ or $B+I_X$ step 
(conditioned on having a blank-consuming step) is bounded
from below by $\frac{1}{2} + \frac{\Delta}{10N}$, where
$\Delta = \epsilon + \hat x - \hat y$.  
This lower bound holds as long as 
$\Delta$ never drops below $\epsilon / 2$, which is ensured
given the starting conditions of Phase 2 and by the 
assumption that $\hat y \le 2m/k$. 

Now, given that $x + y + b = m$ is invariant 
throughout the execution, $\Delta$ can be rewritten as
$\Delta = m - 2\yhat + \epsilon$. 
Also, since we assume $\yhat$ starts
at $\frac{m}{k}$ and never exceeds $\frac{2m}{k}$ with high probability, 
it follows that
\begin{align}
\Delta = m - 2\yhat + \epsilon
\ge m - \tfrac{4m}{k} + \epsilon
\tfrac{m(k-4)}{k} + \epsilon.
\end{align}
Denoting $B+I_X$ and $B+X$ steps as \textit{succeeding}
and $B+I_Y$ and $B+Y$ steps as \textit{failing}, 
then the probability of a succeeding, productive-b interaction
(conditioned on a blank-consuming step) can be rewritten as
\begin{align}
p = \frac{1}{2} + \frac{\Delta}{10N}
\ge \frac{1}{2} + \frac{(k-4)}{10k} \cdot \frac{m}{N}
\end{align}
Now, consider a sequence of $\lambda$ productive-b interactions. 
Note that succeeding productive-b interactions 
each increase $\yhat$ by 1/2, whereas the remaining failing
productive-b interactions decrease $\yhat$ by 1/2. 
Letting $S_x$ and $S_y$ denote the number of 
succeeding and failing steps among the $\lambda$ total prod-b 
interactions respectively, observe that $S_x$ must exceed
$S_y$ by $(2m)/k$ in order for $\yhat$ to decrease to 0
within the sequence.
This means that having $S_x \ge  (\lambda k + 2m)/(2k)$ is sufficient
to ensure that $\yhat$ decreases to $0$. 

In expectation, we have $\E[S_x] = \lambda p$, and thus $\yhat$ will reach 0 
as long as $S_x$ is no more than $\lambda p - \frac{\lambda k + 2m}{2k}$ 
smaller than its expected value. Using a Chernoff lower tail 
bound and setting $\delta = \frac{\lambda k (2p -1) - 2m}{2k\lambda p}$, 
the probability of $\yhat$ failing to reach 0 can thus be bounded by
\begin{align}
p_f = \Pr \left[ \text{$\yhat > 0$ after $\lambda$ prod-b steps} \right] 
&= 
\Pr \left[\; S_x \le \tfrac{\lambda k + 2m}{2k} \; \right] \\
&= 
\Pr\left[\; S_x \le (1 - \delta) \cdot \E[S_x] \; \right]  \\
&\le 
\exp\left(- \tfrac{\delta^2}{2}\cdot \E[S_x] \right)
= 
\exp \left(- \tfrac{(\lambda k(2p-1) - 2m)^2}{8k^2 \lambda p} \right) .
\label{pf-error}
\end{align}

The upper bound on $p_f$ can be slightly exaggerated by observing
\begin{align}
\exp \left(- \tfrac{(\lambda k(2p-1) - 2m)^2}{8k^2 \lambda p} \right)
&= 
\exp \left( - 
\left(
\tfrac{\lambda}{8p} \cdot (2p-1)^2 - 
\tfrac{(2p-1)m}{2pk} + 
\tfrac{m^2}{2\lambda k^2 p}
\right)
\right) \\
&\le 
\exp \left( - 
\left(
\tfrac{\lambda (2p-1)^2}{8p} - \tfrac{m}{k}
\right)
\right),
\label{pf-error-ub}
\end{align}
where the last inequality follows from $\tfrac{2p-1}{p} \le 1$ for 
$p \le 1$, and from $\tfrac{m^2}{2\lambda k^2 p} \ge 0$. 

To ensure that $p_f$ is no larger than $N^{-\alpha}$, it is then sufficient
to find $\lambda$ such that \eqref{pf-error-ub} is bounded from above by
$\exp(- \alpha \log N)$. This holds when 
\begin{align}
\frac{\alpha(2p-1)^2}{8p} - \frac{m}{k} \ge \alpha \log N 
\;\;\iff\;\;
\alpha \ge \frac{8p}{(2p-1)^2} \cdot \left( \alpha \log N + \frac{m}{k} \right).
\label{lambda-cond}
\end{align}
The term $\frac{8p}{(2p-1)^2}$ is decreasing in $p$ for $p > 1/2$, 
and since we have 
$
p \ge \tfrac{1}{2} + \tfrac{(k-4)}{10k}\cdot\tfrac{m}{N} =
\tfrac{5kN + (k-4)m}{10kN}
$,
we see that
\begin{align*}
\frac{8p}{(2p-1)^2} 
&\le 
\frac{40kN + 8(k-4)m}{10kN} \cdot \frac{25k^2 N^2}{(k-4)^2 m^2} \\
&= 
\frac{(100kN + 20(k-4)m) \cdot kN}{(k-4)^2 m^2} \\
&\le 
\frac{120 k^2 N^2}{(k-4)^2 m^2} \\
&\le 
\frac{120 k^2 N^2}{(\tfrac{k}{2})^2 m^2} \\
&= 
480 \cdot \frac{N^2}{m^2},
\end{align*}
where the last inequality holds given $\tfrac{k}{2} \le k - 4$
when $k \ge 8$, which is satisfied by our assumption that
$16 \le k \le \tfrac{m}{\alpha \log m}$. 

By this upper bound on $\tfrac{8p}{(2p-1)^2}$ and the
condition on $\lambda$ from \eqref{lambda-cond}, it 
follows that setting 
$$
\lambda = 480 \cdot \frac{N^2}{m^2}
\left( \alpha \log N + \frac{m}{k} \right)
$$
is a sufficient number of productive-b interactions to guarantee that 
$\yhat$ decreases to 0 with probability at least $1 - N^{-\alpha}$. 
\end{proof}

Lemma~\ref{lemma:dbamc23-corr-mk} relies on the assumption that
$\yhat$ starts at $m/k$ and never exceeds $(2m/k)$
(for $16 \le k \le \tfrac{m}{\alpha \log m}$). In the following lemma,
we prove this latter assumption holds with high probability, subject
to a mild set of constraints on the sizes of the population.

\begin{lemma}
\label{lemma:dbamc23-corr-mk-nodouble}
Assuming $m \ge n/10$ and $m \ge 8 \cdot \alpha \log_{12/11} 11m$
for $\alpha \ge 1$, 
if $\yhat$ reaches $m/k$ during the execution of $\dbamc$ for
$16 \le k \le \tfrac{m}{\alpha \log m}$, then $\yhat$ will 
never exceed $2m/k$ with probability at least $1 - N^{-\alpha}$. 
\end{lemma}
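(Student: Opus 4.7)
The plan is to view $\yhat = y + b/2$ as a one-dimensional random walk whose increments coincide with productive-blank (prod-b) interactions, and then invoke Lemma~\ref{lemma:one-dim-rw} to bound the probability that this walk, started at $m/k$, ever reaches $2m/k$. As a first step, I would classify the non-null $\dbamc$ reactions: only the four prod-b reactions $I_X+B$, $X+B$, $I_Y+B$, $Y+B$ alter $\yhat$, each by exactly $\pm 1/2$, while $X+Y$ leaves $\yhat$ invariant. So every change in $\yhat$ comes from a prod-b step, and the target event is that the excess of up-steps over down-steps among prod-b steps first reaches $2m/k$.

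Next I would bound the per-step bias. Conditioning on a prod-b step, the down-step probability is $p = (i_X + x)/(n + x + y)$ and the up-step probability is $q = (i_Y + y)/(n + x + y)$, giving the ratio $q/p = (i_Y+y)/(i_X+x)$. Over the feasible set $\{\yhat \le 2m/k,\ \xhat + \yhat = m,\ i_X \ge i_Y\}$, a short monotonicity check (first in $b$ at fixed $y$, then in $y$) shows the worst case occurs at $b=0$, $y=2m/k$, $x=m(k-2)/k$, and using $i_X \ge n/2 \ge i_Y$ this yields
\[
\frac{q}{p} \;\le\; \frac{n/2 + 2m/k}{n/2 + m(k-2)/k}.
\]
A direct calculation using $k \ge 16$ and $n \le 10m$ (from $m \ge n/10$) shows the right-hand side is at most $11/12$.

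Then I would apply Lemma~\ref{lemma:one-dim-rw} with down-steps as successes (probability $\ge 12/23$): raising $\yhat$ from $m/k$ to $2m/k$ requires a net excess of $2m/k$ up-steps, so the probability that this ever occurs is at most $(q/p)^{2m/k} \le (11/12)^{2m/k}$. To conclude this is at most $N^{-\alpha}$, I would take logs: it suffices that $(2m/k)\log(12/11) \ge \alpha \log N$, and using $N = n+m \le 11m$, it is enough to have $2m/k \ge \alpha \log_{12/11}(11m)$. The hypothesis $m \ge 8\alpha \log_{12/11}(11m)$ supplies exactly this slack at the tightest part of the admissible range of $k$.

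The main obstacle is the bias calculation: Lemma~\ref{lemma:one-dim-rw} is vacuous unless $q/p$ is bounded by a constant strictly less than $1$, and identifying the worst-case corner of the feasible region (and checking it gives the clean value $11/12$) is what ties the three hypotheses $m \ge n/10$, $k \ge 16$, and $m \ge 8\alpha \log_{12/11}(11m)$ together. For the upper end of the range, where $2m/k$ is smallest, I anticipate needing to refine the bias estimate using the margin $D = \epsilon + m - 2\yhat \ge \Omega(m)$ and total $T = n + m - b \le N$ via $q/p \le (T-D)/(T+D)$ to extract the extra decay needed for $N^{-\alpha}$ to hold uniformly across the admissible range.
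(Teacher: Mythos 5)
Your proposal follows essentially the same route as the paper's proof: treat $\yhat$ as a $\pm 1/2$ random walk driven by blank-consuming steps, bound the conditional failure-to-success ratio by $11/12$ (the paper obtains this by lower-bounding the success probability $p \ge 1 - (n/2 + 4m/k)/N \ge 12/23$ using $k \ge 16$ and $n \le 10m$, which is equivalent to your worst-case-corner computation at $b=0$), then apply Lemma~\ref{lemma:one-dim-rw} to the event that the excess of up-steps ever reaches $2m/k$, and close using $N \le 11m$ together with the hypothesis $m \ge 8\alpha\log_{12/11}(11m)$.

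The one place you hedge is exactly the step the paper glosses over, and your worry is well-founded. You correctly observe that $(11/12)^{2m/k} \le N^{-\alpha}$ cannot hold uniformly over $16 \le k \le m/(\alpha\log m)$: at the top of that range $2m/k = 2\alpha\log m$, which is far too small an exponent for base $11/12$. The paper handles this by writing $(11/12)^{2m/k} \le (11/12)^{m/8}$ ``from assuming $k \ge 16$,'' but that inequality requires $2m/k \ge m/8$, i.e.\ $k \le 16$, so it runs the wrong way for every $k$ strictly above the bottom of the admissible range. However, the repair you sketch does not close this gap either: the refined bias $q/p \le (T-D)/(T+D)$ with $D = \Omega(m)$ and $T \le N \le 11m$ still only gives a ratio bounded away from $1$ by a constant (roughly $41/47$ in the worst case $n = 10m$), and a constant ratio raised to a $\Theta(\log m)$ excess yields only $m^{-\Theta(\alpha)}$ with a small hidden constant, not $N^{-\alpha}$. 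Making the small-$\yhat$ stages work genuinely requires something more --- e.g.\ a stronger restriction on $n/m$, a weaker exponent in the failure probability, or a barrier larger than $2m/k$ --- rather than a sharper per-step bias alone.
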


\begin{proof}
As before, let $p$ denote the probability of a successful prod-b step, 
conditioned on a prod-b step. Recall that
$p = \frac{i_x + x}{i_x + i_y + x + y}$, which, up until the moment that
$\yhat > 4m/k$ can be rewritten as
\begin{align}
p 
= 
\frac{i_x + x}{i_x + i_y + x + y}
&\ge 
\frac{N - (i_y + y + b)}{N} \\
&\ge
1 - \frac{i_y + 2\yhat}{N} \\
&\ge
1 - \frac{\tfrac{n}{2} + \tfrac{4m}{k}}{N} \\
&\ge
1 - \frac{2n + m}{4N} \\
\end{align}
where the final inequality holds since $16 \le k$,
and we can show that $p \ge \frac{12}{23}$ by observing that
\begin{align}
\frac{2n + m}{4N} 
= 
\frac{2n+m}{4n + 4m}
\le 
\frac{11}{23}
\;\; \iff \;\;
2n \le 21 m,
\end{align}
which is satisfied by the assumption that $m \ge 10n$. 

If initially $\yhat = 2m/k$, then $\yhat$ only exceeds $2m/k$
if the number of failing prod-b steps exceeds the number of succeeding
prod-b steps by $2m/k$, since each prod-b step changes $\yhat$ in 
magnitude by $1/2$. 

By Lemma~\ref{lemma:one-dim-rw}, this event
occurs with probability at most 
\begin{align}
\left(
\frac{1-p}{p}
\right)^{2m/k}
\le
\left(
\frac{11}{12}
\right) ^{2m/k}
\le
\left(
\frac{11}{12}
\right)^{m/8},
\end{align}
where the first inequality follows from the bound 
$p \ge 12/23$, and the second from assuming $k \ge 16$. 
We can then observe that
\begin{align}
\left(
\frac{11}{12}
\right)^{m/8}
\le 
N^{-\alpha}
\;\;\iff\;\;
m
\ge
8\alpha \cdot \log_{12/11} N,
\end{align}
where we note that $8 \alpha \cdot \log_{12/11} N$ is at 
most  $8 \alpha \cdot \log_{12/11}(11m)$ by the assumption 
that $m \ge n/10$.
Thus assuming we have $m \ge 8 \alpha \log_{12/11} (11m)$ and $m \ge n/10$, 
the probability  that $\yhat$ ever exceeds $2m/k$ is bounded from 
above by $N^{-\alpha}$.
\end{proof}

Using Lemmas~\ref{lemma:dbamc23-corr-mk} and 
\ref{lemma:dbamc23-corr-mk-nodouble} and taking a union bound, 
we have the following upper bounds on the number of productive-b steps 
needed to complete each stage of Phases 2, and Phase 3 of the 
protocol correctly: 
\begin{corollary}
\label{corr:dbamc23-prodb-steps}
During the $\dbamc$ protocol on a population with initial input 
margin $\epsilon \ge \alpha \sqrt{N \log N}$ for some $\alpha \ge 1$, 
assuming that $m \ge n/10$ and $m \ge 8 \alpha \cdot \log(11m)$, 
then with probability at least $1- 2N^{-\alpha}$:
\begin{itemize}
\item
At most 
$
480 \cdot
\frac{N^2}{m^2}
\left(
\alpha \log N + \frac{m}{2^s}
\right)
$
productive-b steps are required to complete 
stage $s$ of Phase 2 correctly.
\item
At most
$
480 \cdot 
\frac{N^2}{m^2} 
\left(
\alpha \log N + \alpha \log m
\right)
$
productive-b steps are required to complete 
Phase 3 correctly.
\end{itemize}
\end{corollary}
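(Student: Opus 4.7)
The plan is to obtain both bullets directly by combining Lemmas~\ref{lemma:dbamc23-corr-mk} and~\ref{lemma:dbamc23-corr-mk-nodouble} via a union bound, instantiating the parameter $k$ appropriately for each Phase 2 stage and for Phase 3. No new probabilistic argument is needed; the corollary is essentially a bookkeeping consolidation of the two preceding results.

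For stage $s$ of Phase 2, the definition $\yhat = 2^{-s}\cdot m/16$ at the start of the stage corresponds to setting $k = 16\cdot 2^s$ in the notation of the two lemmas. First I would apply Lemma~\ref{lemma:dbamc23-corr-mk-nodouble} (whose hypotheses $m \ge n/10$ and the logarithmic lower bound on $m$ match the corollary's assumptions) to conclude that $\yhat$ never exceeds $2m/k$ over the stage, with probability at least $1 - N^{-\alpha}$. Conditioned on this non-doubling event, Lemma~\ref{lemma:dbamc23-corr-mk} guarantees that within $480\tfrac{N^2}{m^2}(\alpha \log N + m/k)$ productive-b steps, $\yhat$ drops to $0$ with probability at least $1 - N^{-\alpha}$. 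Driving $\yhat$ to $0$ is strictly stronger than completing stage $s$ (which only requires $\yhat$ to halve or to fall below $\alpha \log m$), so the step bound transfers. Since $m/k = m/(16\cdot 2^s) \le m/2^s$, the claimed expression follows, and a union bound over the two failure events yields the stated $1 - 2N^{-\alpha}$ guarantee.

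For Phase 3, the argument is identical with $k = m/(\alpha \log m)$, which is the upper endpoint of the range $16 \le k \le m/(\alpha \log m)$ permitted by Lemma~\ref{lemma:dbamc23-corr-mk}; substituting gives $480\tfrac{N^2}{m^2}(\alpha \log N + \alpha \log m)$ as claimed. The only point requiring care is checking that the range $[16,\, m/(\alpha \log m)]$ covers all invocations: this is immediate, since the smallest $k$ arises at the start of Phase 2 ($s = 0$, so $k = 16$) and the largest at the start of Phase 3 ($k = m/(\alpha \log m)$). The main obstacle is thus purely notational---aligning the $k$-parametrization of the two lemmas with the $s$-indexing of Phase 2 and verifying that the corollary's hypotheses feed cleanly into the hypotheses of the underlying lemmas.
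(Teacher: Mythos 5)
Your proposal is correct and takes essentially the same route as the paper, which likewise derives the corollary by combining Lemma~\ref{lemma:dbamc23-corr-mk} and Lemma~\ref{lemma:dbamc23-corr-mk-nodouble} with a union bound; your instantiation $k = 16\cdot 2^s$ for stage $s$ (with the slack $m/(16\cdot 2^s) \le m/2^s$) and $k = m/(\alpha\log m)$ for Phase 3 is exactly the intended bookkeeping, and your observation that driving $\yhat$ to $0$ dominates the weaker stage-completion requirement matches the paper's own remark preceding Lemma~\ref{lemma:dbamc23-corr-mk}.
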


\subsubsection{Total Productive Step Upper Bounds}

Lemma \ref{lemma:doubling-difference}
and Corollary \ref{corr:dbamc23-prodb-steps} give upper bounds on the 
number of \textit{blank-consuming} (productive-b) steps needed to 
complete the stages of Phase 1, the stages of Phase 2, and Phase 3 
of the protocol correctly. 
The following simple lemma then gives corresponding upper bounds on the
\textit{total} number of productive steps (including $X + Y$ interactions) 
that are sufficient to ensure each stage/Phase completes correctly 
with high probability.

\begin{lemma}
\label{lemma:dbamc-prod-steps}
During the $\dbamc$ protocol on a population with input margin
$\epsilon \ge \alpha \sqrt{N \log N}$ for some $\alpha \ge 1$,
assuming $m \ge n/10$ and $m \ge 8\alpha \cdot \log(11m)$, then 
\begin{itemize}
\item
At most $201N$ total productive steps are required to complete
stage $t$ of Phase 1 correctly with probability at least 
$1 - N^{-(\alpha/6)} - N^{-(\alpha/2)}$
\item
At most $961\frac{N^2}{m^2}\left(\alpha \log N + \frac{m}{2^s}\right)$
total productive steps are required to complete stage $s$ of 
Phase 2 correctly with probability at least $1 - 2N^{-\alpha}$.
\item
At most $961\frac{N^2}{m^2} (\alpha \log N + \alpha \log m)$ 
total productive steps are required to complete Phase 3 correctly
with probability at least $1 - 2N^{-\alpha}$. 
\end{itemize}
\end{lemma}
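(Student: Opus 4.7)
The plan is to derive each total-step bound from the corresponding productive-b bound (Lemma~\ref{lemma:doubling-difference} for Phase~1, Corollary~\ref{corr:dbamc23-prodb-steps} for Phases~2 and~3) via a deterministic conservation argument on the blank count, so that no additional error probability accrues beyond what is already charged in those results. The key observation is that among the productive transitions of $\dbamc$, every $X+Y \to B+B$ step creates two blanks while every blank-consuming step (i.e., $X+B$, $Y+B$, $I_X+B$, or $I_Y+B$) consumes one blank. Thus, over any window of $T = K_{BC}+K_{XY}$ productive steps, the net change in the blank count satisfies the identity $\Delta b = 2K_{XY} - K_{BC}$, which rearranges to $K_{BC} = (2T - \Delta b)/3$. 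The task in each phase is then to upper bound $|\Delta b|$ tightly enough that the stated choice of $T$ forces $K_{BC}$ to meet the productive-b requirement.

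For Phase~1 the crude bound $b \le m$ already suffices. Since $m \le N$, taking $T = 201N$ gives $K_{BC} \ge (402N - m)/3 \ge 100N$, which is exactly the hypothesis of Lemma~\ref{lemma:doubling-difference}; the stated probability $1 - N^{-\alpha/6} - N^{-\alpha/2}$ follows verbatim. For Phase~2 stage~$s$ and for Phase~3, however, the naive $|\Delta b| \le m$ is too wasteful, because the productive-b target $L$ from Corollary~\ref{corr:dbamc23-prodb-steps} can be asymptotically much smaller than $m$. To close this gap, I would invoke the high-probability invariant from Lemma~\ref{lemma:dbamc23-corr-mk-nodouble} that $\hat y$ never exceeds $2m/2^s$ throughout Phase~2 stage~$s$ (respectively $2\alpha\log m$ throughout Phase~3), which forces $b \le 2\hat y \le 4m/2^s$ (respectively $4\alpha\log m$) at every moment in the window. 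Substituting this sharper bound into the conservation identity and plugging in the stated $T = 961\tfrac{N^2}{m^2}(\alpha \log N + m/2^s)$, the required inequality $K_{BC} \ge L$ reduces to $482\tfrac{N^2}{m^2}(\alpha \log N + m/2^s) \ge 4m/2^s$, which holds because $N \ge m$ already gives $N^2/m^2 \ge 1$ and $\alpha \log N + m/2^s \ge m/2^s$. The Phase~3 case is identical in form.

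The main obstacle is reconciling the roughly $2\times$ slack between the productive and productive-b constants (e.g., $961$ versus $480$) with the $\tfrac{3}{2}$ loss inherent in the conservation law $K_{BC} = (2T-\Delta b)/3$. Applying only the crude $\Delta b \le m$ fails precisely in the regime where $L \ll m$, namely late Phase~2 and all of Phase~3; routing through the tighter pointwise bound $b \le 2\hat y$ supplied by the doubling invariant of Lemma~\ref{lemma:dbamc23-corr-mk-nodouble} is what salvages the argument. Because that invariant is already folded into the $1 - 2N^{-\alpha}$ failure bound of Corollary~\ref{corr:dbamc23-prodb-steps}, the success probabilities in the present lemma inherit directly, with no additional union bound required.
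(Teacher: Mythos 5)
Your proposal is correct, and it reaches the same three bounds as the paper, but via a different ledger. The paper's proof counts $Y$-agents as fuel: each $X+Y$ step consumes a $Y$, and new $Y$'s are produced only by blank-consuming steps, so $S_{xy} \le y_0 + S_b$ and hence $\lambda \le y_0 + 2S_b$; the boundary term $y_0 \le \hat y_0$ is then read off directly from the stage's starting condition ($\hat y_0 \le m$ in Phase 1, $\le m/16$ in Phase 2, $\le \alpha\log m$ in Phase 3), so no additional probabilistic event is invoked. Your proof instead uses the exact blank-conservation identity $\Delta b = 2K_{XY} - K_{BC}$, i.e.\ $K_{BC} = (2T - \Delta b)/3$, which trades the factor-$2$ overhead for a factor $3/2$ but shifts the boundary term from $y_0$ at the stage's start to the blank count within the window; in the regime $L \ll m$ (late Phase 2, Phase 3) this forces you to route through the no-doubling invariant of Lemma~\ref{lemma:dbamc23-corr-mk-nodouble} to get $b \le 2\hat y = O(m/2^s)$, an extra dependency the paper's argument avoids. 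Since that invariant is already charged inside the $1-2N^{-\alpha}$ of Corollary~\ref{corr:dbamc23-prodb-steps}, your claim that no new union bound is needed is sound. Two harmless imprecisions: your stated invariant $\hat y \le 2m/2^s$ is looser than what the lemma actually gives for stage $s$ (which starts at $\hat y = m/(16\cdot 2^s)$, so the true bound is $m/(8\cdot 2^s)$), but this only adds slack; and $\Delta b \le b_{\mathrm{end}}$ should be stated via $b_{\mathrm{start}} \ge 0$, which you implicitly use. Both approaches are valid deterministic accounting arguments; the paper's is marginally more self-contained, yours yields a slightly sharper constant on the $K_{BC}$ term.
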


\begin{proof}
Consider any sequence of $\lambda$ total productive steps
consisting of $S_{xy}$ $X+Y$ steps, and $S_b$ 
productive-b steps. 
At the very least, we require $S_{xy} \le y_0 + S_b$, 
(where $y_0$ denotes the number of $Y$ agents at the start of 
the sequence). Otherwise, the total number of $X+Y$ steps
would exceed the maximum number of $Y$ agents able to ``fuel"
these interactions.  
Since $\lambda = S_{xy} + S_b$, we require that
$\lambda \le y_0 + 2 \cdot S_b$. 

In each stage of Phase 1, we know $y_0 \le \yhat_0 \le m$, 
and from Lemma~\ref{lemma:doubling-difference} we require
$S_b \le 100N$ to complete the stage correctly with 
probability at least $1 - N^{-(\alpha/6)} - N^{-(\alpha / 2)}$. 
Therefore each stage of Phase 1 requires at most 
$\lambda \le m + 2\cdot100N \le 201N$ total productive
steps to complete correctly with this same probability. 

In stage $s$ of Phase 2, we know that $y_0 \le \yhat_0 \le m/16$, 
and from Corollary~\ref{corr:dbamc23-prodb-steps} we require
$S_b \le 480 \cdot (N^2/m^2) (\alpha \log N + m/2^s)$
to complete the stage correctly with probability at least $1- 2N^{-\alpha}$.
So stage $s$ of Phase 2 requires at most
$
\lambda \le m/16 + 960 \cdot (N^2/m^2) (\alpha \log N + m/2^s)
\le 961 \cdot (N^2/m^2) (\alpha \log N + m/2^s)
$ 
total productive steps to complete correctly with this same probability. 

Similarly in Phase 3, we have $y_0 \le \yhat_0 \le \alpha \log m$,
and again from Corollary~\ref{corr:dbamc23-prodb-steps} we require
$S_b \le 480 \cdot (N^2/m^2) (\alpha \log N + \alpha \log m)$
to complete the Phase correctly again with probability at least $1-2N^{-\alpha}$.
Thus 
$
\lambda \le \alpha \log m + 
960 \cdot (N^2/m^2) (\alpha \log N + \alpha \log m)
\le
961 \cdot (N^2/m^2) (\alpha \log N + \alpha \log m)
$
total productive steps is sufficient to complete the Phase 
correctly this same probability. 
\end{proof}

\subsubsection{Efficiency of the Protocol}

Lemma~\ref{lemma:dbamc-prod-steps} gives bounds on the number of 
\textit{productive} steps needed for the protocol to complete
correctly with high probability. Here, we give upper bounds on 
the \textit{total} number of interactions needed to ensure that the 
requisite number of productive steps are met in each stage and 
Phase with high probability. 

To bound the total number of steps $\phi$ needed to obtain at least 
$\lambda$ productive steps with high probability, 
we provide a lower bound on the probability of a productive step
during a given stage or Phase and then apply Chernoff bounds. 
Letting $\phat$ denote the probability that the next interaction
is a productive step, we can observe that
\begin{align}
\phat 
=
\frac{b (i_x + i_y + x + y) + xy}{\binom{N}{2}} 
\ge
\frac{bx + xy}{\binom{N}{2}}
\ge
\frac{2x\cdot(y + b/2)}{N^2} = \frac{2x\cdot\yhat}{N^2}
\end{align}
The correct starting and ending conditions of each stage in Phase 1
and Phase 2 correspond to lower bounds on $\yhat$, and thus a 
sufficient lower bound on $\phat$ can be obtained by providing a 
lower bound on $x$ throughout the protocol. Specifically, 
we wish to show that the number of $X$ workers eventually remains
above some constant fraction of $m$ with high probability, which 
we prove in the following lemma.

\begin{lemma}
\label{lemma:dbamc-blank-ub}
During the $\dbamc$ protocol on a population with input margin
$\epsilon \ge \alpha \sqrt{N \log N}$ for some $\alpha \ge 1$,
and with $m \ge n/10$ for sufficiently large $m$,
by the end of Phase 1, the number of 
$X$ worker agents will exceed and never drop below $3m/8$ for each
remaining stage and Phase of the protocol with probability at 
least $1 - N^{-\alpha}$.
\end{lemma}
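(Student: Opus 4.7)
The plan is to reduce the claimed lower bound on $x$ to an upper bound on the progress measure $\hat{y}$, and then to invoke Lemma~\ref{lemma:dbamc23-corr-mk-nodouble} with $k=16$ to control $\hat{y}$ for the entire remainder of the execution. From the invariant $\hat{x} + \hat{y} = m$ together with the definition $\hat{x} = x + b/2$, I would derive the identity $x = m - \hat{y} - b/2$. Combined with the elementary inequality $b/2 \le \hat{y}$ (immediate from $\hat{y} = y + b/2$ and $y \ge 0$), this yields
\[
x \;\ge\; m - 2\hat{y}.
\]
Hence it suffices to show $\hat{y} \le m/8$ for every step after Phase~1 terminates, since this gives $x \ge 3m/4 > 3m/8$.

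To bound $\hat{y}$, I first observe that Phase~1 terminates with $P = \epsilon + \hat{x} - \hat{y} \ge \epsilon + 7m/8$, which combined with $\hat{x} + \hat{y} = m$ forces $\hat{y} \le m/16$ at the moment of completion. I would then apply Lemma~\ref{lemma:dbamc23-corr-mk-nodouble} once with $k = 16$. Its hypotheses ($m \ge n/10$ and $m$ large enough that $m \ge 8\alpha \log_{12/11}(11m)$ and $16 \le m/(\alpha \log m)$) are all subsumed by the assumptions of the present lemma. A case split then finishes the argument: either $\hat{y}$ never reaches $m/16$ during Phases~2 and~3, in which case $\hat{y} < m/16 \le m/8$ trivially, or else $\hat{y}$ first attains $m/16$ at some step, and from that step onward Lemma~\ref{lemma:dbamc23-corr-mk-nodouble} guarantees that $\hat{y} \le 2m/16 = m/8$ for the rest of the execution, with failure probability at most $N^{-\alpha}$.

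The main subtlety to get right is that a \emph{single} invocation of Lemma~\ref{lemma:dbamc23-corr-mk-nodouble} at $k=16$ suffices: the lemma controls $\hat{y}$ for \emph{all} subsequent steps after its first upcrossing of $m/16$, so no union bound over the $O(\log N)$ stages of Phase~2 is needed, and the stated failure probability $N^{-\alpha}$ can be preserved without inflating the constant $\alpha$. Everything else is routine algebra on the invariants relating $x$, $\hat{x}$, $\hat{y}$, and $b$.
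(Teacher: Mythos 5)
Your proposal is correct and follows essentially the same route as the paper: both arguments reduce the claim to showing $\hat y \le m/8$ after Phase~1 (using the Phase~1 ending condition $\hat y \le m/16$ and a single invocation of Lemma~\ref{lemma:dbamc23-corr-mk-nodouble} with $k=16$). The one substantive difference is your final algebraic step $x = m - \hat y - b/2 \ge m - 2\hat y$, which yields $x \ge 3m/4$ directly from the bound on $\hat y$ alone; the paper instead passes through $b \le m/4 \Rightarrow x+y \ge 3m/4$ and separately invokes $x \ge y$ (justified via the correct completion of Phase~1 stages) to conclude $x \ge 3m/8$, so your version is marginally tighter and dispenses with that extra appeal.
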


\begin{proof}
Since $m = x + y + b$ throughout the protocol, observe that
$b \le m/4$ implies that $x + y \ge 3m/4$.
Therefore, once the protocol reaches a point where the 
difference $x - y$ remains positive with high probability, 
it follows that $x \ge 3m/8$ will also continue to hold
with high probability. 
So to show that $x \ge 3m/8$ by the end of Phase 1, it is 
sufficient to show that $b$ will reach and remain below 
$m/4$ by the end of Phase 1 with high probability. 

First, we recall that Phase 1 of the protocol ends when 
$P \ge \hat x - \hat y + \epsilon = 7m/8 + \epsilon$, 
where $\epsilon$ is the input margin of the population. 
Since Lemma~\ref{lemma:dbamc-prod-steps} tells us that
each stage of Phase 1 completes correctly with high probability, 
it follows that $x - y \ge 0$ must hold if 
the protocol has reached a configuration where Phase 1 is complete.


Moreover, because the correct ending conditions of Phase 1
imply that $\hat y = y + b/2 \le m/16$, it follows that if Phase 1 
completes correctly, we must have $b \le m/8$ --- any
larger value of $b$ would imply a value of $\yhat$ that exceeds $m/16$.
By Lemma~\ref{lemma:dbamc23-corr-mk-nodouble}, in  
Phase 2 of the protocol, which begins with $\yhat = m/16$,
the value of $\yhat$ will never exceed $m/8$ with probability 
at least $1-N^{-\alpha}$ assuming that $m \ge n/10$ and 
$m$ is sufficiently large. 
Furthermore, this means that each subsequent stage of Phase 2, 
and Phase 3 of the protocol will also ensure that $\hat y \le m/8$
with probability at least $1-N^{-\alpha}$ under the same assumptions.
In turn, this means that from the end of Phase 1 and onward, 
it will hold that $b \le m/4$ during each subsequent stages of Phase 2 
and Phase 3 of protocol with this same probability. 
\end{proof}

The fact that $x \ge 3m/8$ will hold with high probability
for the remainder of the protocol by the end of Phase 1 can now
be used to provide lower bounds on $\phat$ and subsequent 
high probability upper bounds on the total number of steps 
needed to complete each Phase and stage correctly. The following lemma gives 
these total step upper bounds for each part of the protocol:

\begin{lemma}
\label{lemma:dbamc-eff}
During the $\dbamc$ protocol on a population with 
input margin $\epsilon \ge \alpha \sqrt{N \log N}$ for $\alpha \ge 1$, 
assuming that $m \ge n/10$ for large enough $m$:
\begin{itemize}
\item
at most 
$
\max
\left\{
\tfrac{1608N^3}{mn}, \tfrac{8576N^3}{m^2}
\right\}
$
total steps are needed to complete each stage of Phase 1 correctly.
\item
at most
$
82,000 \cdot \frac{N^4}{m^3} + 
41,000 \cdot 2^{s+1} \frac{N^2}{m^2} \cdot \alpha \log N
$
total steps are needed to complete stage $s$ of Phase 2 correctly.
\item
at most 
$2563 \cdot \frac{N^4}{m^3}(\alpha \log m + \alpha \log N)$
total steps are needed to complete Phase 3 correctly,
\end{itemize}
each with probability at least
$1 - 2N^{-\alpha} - N^{-(\alpha/6)} - N^{-(\alpha/2)}$ when $N$ is 
sufficiently large.
\end{lemma}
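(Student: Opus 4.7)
The plan is to convert the productive-step requirements of Lemma~\ref{lemma:dbamc-prod-steps} into total-step bounds by producing, in each stage/Phase, a uniform lower bound $p_0$ on the productive-interaction probability $\hat{p}$, then applying the Chernoff lower tail of Lemma~\ref{lemma:chernoff-tail-bounds}. Concretely, running $\phi \ge 4\lambda/p_0$ total steps yields at least $\lambda$ productive steps with probability at least $1 - \exp(-\lambda/4)$; since $\lambda = \Omega(\alpha \log N)$ in every part of the protocol, the Chernoff failure probability is absorbed into an $N^{-\alpha}$ term, and the final confidence follows by a union bound over this event together with the productive-step guarantees of Lemma~\ref{lemma:dbamc-prod-steps} and the $x$-lower-bound of Lemma~\ref{lemma:dbamc-blank-ub}.

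The Phase~1 bound is the hardest to obtain because $\hat{p}$ admits no single uniform lower bound across the phase: at the start $x=0$ makes the estimate $\hat{p} \ge 2 x \hat{y}/N^2$ useless, while once blanks are consumed the alternative bound $\hat{p} \ge b(i_X + i_Y)/\binom{N}{2} \ge bn/N^2$ weakens. I would therefore case-split on the current value of $b$. When $b \ge m/2$, the latter bound gives $\hat{p} \ge mn/(2N^2)$, and with $\lambda \le 201N$ this yields a total-step count of order $N^3/(mn)$. When $b < m/2$, the invariant $P \ge \epsilon$ (which holds with high probability throughout Phase~1 by Lemma~\ref{lemma:doubling-difference}) forces $\hat{x} \ge \hat{y}$ and hence $x \ge y$, so that $x + y \ge m/2$ implies $x \ge m/4$; combined with the stage-invariant $\hat{y} \ge m/16$ (enforced throughout Phase~1, which ends only once $\hat{y}$ reaches $m/16$), this gives $\hat{p} \ge m^2/(32 N^2)$ and a total-step count of order $N^3/m^2$. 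Since at least one of the two cases must hold in every reachable Phase~1 configuration, taking the maximum of the two bounds yields the first bullet.

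For Phases~2 and~3, Lemma~\ref{lemma:dbamc-blank-ub} yields the uniform lower bound $x \ge 3m/8$ with probability $1 - N^{-\alpha}$, so I would base all subsequent estimates on $\hat{p} \ge 2 x \hat{y}/N^2 \ge 3 m \hat{y}/(4 N^2)$. For stage $s$ of Phase~2, the stage definition gives $\hat{y} \ge m/2^{s+5}$ while the stage is ongoing, and Lemma~\ref{lemma:dbamc23-corr-mk-nodouble} ensures this regime is not exited via an upward jump; hence $\hat{p} \ge 3 m^2 / (128 \cdot 2^s N^2)$, and combining with the productive-step bound $\lambda \le 961 (N^2/m^2)(\alpha \log N + m/2^s)$ from Lemma~\ref{lemma:dbamc-prod-steps} yields the second bullet. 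For Phase~3, I would observe that $\hat{y} = y + b/2$ takes values in $\tfrac{1}{2}\mathbb{Z}_{\ge 0}$ and hence satisfies $\hat{y} \ge 1/2$ until the protocol has converged; this uniform bound gives $\hat{p} \ge 3m/(8 N^2)$, and combining with $\lambda \le 961(N^2/m^2)(\alpha \log N + \alpha \log m)$ yields the third bullet.

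The main obstacle is the non-uniform nature of $\hat{p}$ in Phase~1, which forces the case split on $b$ described above; the subtlety is that $b$ can oscillate (rising via $X+Y \to B+B$ collisions and falling via blank-consuming reactions), so the argument must hold in every reachable configuration rather than assuming a monotone schedule. Once this case split is in hand, the remaining work is routine bookkeeping: the Chernoff failure contributes at most $N^{-\alpha}$ per stage/Phase, the productive-step errors from Lemma~\ref{lemma:dbamc-prod-steps} contribute $N^{-\alpha/6} + N^{-\alpha/2}$ in Phase~1 and $2 N^{-\alpha}$ in Phases~2 and~3, and Lemma~\ref{lemma:dbamc-blank-ub} contributes $N^{-\alpha}$ in Phases~2 and~3; a union bound then yields the stated confidence $1 - 2N^{-\alpha} - N^{-\alpha/6} - N^{-\alpha/2}$.
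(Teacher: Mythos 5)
Your proposal follows essentially the same route as the paper's proof: a case split on the blank count $b$ in Phase~1 (yielding the two bounds $\hat p \gtrsim bn/N^2$ and $\hat p \gtrsim x\hat y/N^2$), the uniform bounds $x \ge 3m/8$ and the stage-wise lower bounds $\hat y \ge m/(16\cdot 2^{s+1})$ (resp.\ $\hat y \ge 1/2$) for Phases~2 and~3, followed by a Chernoff lower tail to convert the productive-step counts of Lemma~\ref{lemma:dbamc-prod-steps} into total-step counts and a union bound over all error events. The only differences are cosmetic --- a threshold of $m/2$ rather than $m/4$ in the Phase~1 case split and the choice $\phi \ge 4\lambda/p_0$ rather than $2\lambda/p_0$, which perturb the numerical constants but not the asymptotics or the argument.
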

   
\begin{proof}
Let $\phi_p$ denote the number of productive steps obtained 
in a sequence of $\phi$ total steps, and let $\lambda$ denote
the required number of productive steps for a Phase/stage to complete
correctly whp. 
If the probability of a productive step is $\phat \ge p'$, 
then the lower Chernoff bound from Lemma~\ref{lemma:chernoff-tail-bounds}
gives 
$
\Pr(\phi_p \le \tfrac{\phi\cdot p'}{2}) 
\le \exp(-\tfrac{\phi\cdot p'}{8})
$. 
So to provide upper bounds on $\phi$ for each Phase/stage,
we first give lower bounds on $\phat$ and then choose
$\phi$ such that $\phi_p \ge \lambda$ holds whp using the 
Chernoff bound. We do this for each Phase as follows:

\begin{enumerate}[i.]
\item
\textit{Stages of Phase 1:}
By Lemma~\ref{lemma:dbamc-blank-ub}, $b \le \frac{m}{4}$ will
eventually hold by the end of Phase 1 with probability at least 
$1 - N^{-\alpha}$. 
Thus, for the stages of Phase 1, we provide lower bounds on $\phat$ in 
two cases. 
For stages where $b \ge \frac{m}{4}$, we have
$$\phat \ge (b\cdot i_x)/N^2 \ge mn/4N^2,$$ since $i_x \ge n/2$ 
given an $I_X$ input majority. 
For stages where $b \le \frac{m}{4}$ holds, it follows that
$$\phat \ge (2x\cdot \yhat)/N^2 \ge 3m^2/64N^2$$ since
$\yhat \ge m/16$ for all of Phase 1. 

Now, by Lemma~\ref{lemma:dbamc-prod-steps}, each stage of Phase 1
completes correctly within $\lambda = 201N$ productive steps with 
probability at least $1- N^{-(\alpha/6)} - N^{-(\alpha/2)}$.
So taking $\phi$ total steps such that $\phi\cdot\phat/2 = \lambda = 201N$ 
means that $\phi_p \le \lambda$ with probability at most 
$\exp(-\lambda/4) \le \exp(-50N) \le N^{-\alpha}$.
In the case that $b \ge \frac{m}{4}$ and $\phat \ge \frac{mn}{4N^2}$, 
it is sufficient to set
$$
\phi = 2 \cdot \frac{4N^2}{mn} \cdot 201N = 1608 \cdot \frac{N^3}{mn}
$$
in order to ensure $\phi_p \ge 201N$ with probability at 
least $1 - N^{-\alpha}$.
In the case that $b \le \frac{m}{4}$ and $\phat \ge \frac{3m^2}{64N^2}$, 
it is sufficient to set
$$
\phi = 2 \cdot \frac{64N^2}{3m^2} \cdot 201 N = 8576 \cdot \frac{N^3}{m^2}
$$
in order to ensure the same result. 
Summing over all error probabilities and taking a union bound, 
it follows that each stage of Phase 1 requires at most
$$\phi \le \max\left\{1608\cdot\frac{N^3}{mn} \;,\; 8576\cdot\frac{N^3}{m^2}\right\}$$
total steps to complete correctly with probability at
least $1 - 2N^{-\alpha} - N^{-(\alpha/6)} - N^{-(\alpha/2)}$. 

\item
\textit{Stages of Phase 2}:
For each stage $s$ of Phase 2, we have $\yhat \ge \frac{m}{16\cdot 2^{s+1}}$, 
which means that
$$
\phat
\ge \frac{2x \cdot \yhat}{N^2} 
\ge \frac{3m}{4}\cdot\frac{m}{16\cdot2^{s+1}} \cdot \frac{1}{N^2}
= \frac{3}{64 \cdot 2^{s+1}}\cdot \frac{m^2}{N^2},
$$
where we use the fact that $x \ge \frac{3m}{8}$ holds with probability
at least $1 - N^{-\alpha}$ from Lemma~\ref{lemma:dbamc-blank-ub}.
By Lemma~\ref{lemma:dbamc-prod-steps}, we have that stage $s$ of
the Phase completes correctly within 
$
\lambda = 
961 (N^2/m^2)(\lambda \log N + m/2^2)
$
productive steps with probability at least $1 - 2N^{-\alpha}$,
and so taking $\phi$ total steps such that
$
(\phi\phat)/2 = \lambda 
$ 
ensures that $\phi_p \le \lambda$  with probability at most 
$\exp(-\lambda/4) \le \exp(-\alpha \log N)$ by the
Chernoff bound. 

Given the lower bound on $\phat$, it is thus sufficient to set
$$
\phi = 
2 \cdot \frac{64 \cdot 2^{s+1} N^2}{3m^2}
\cdot 961 \frac{N^2}{m^2}\left(\lambda \log N + \frac{m}{2^s}\right)
= 
82,000 \cdot \frac{N^4}{m^3} + 
41,000 \cdot 2^{s+1} \cdot \frac{N^2}{m^2}\cdot \alpha \log N
$$
to ensure $\phi_p \ge \lambda$ with probability 
at least $1 - N^{-\alpha}$ for each stage $s$ of the Phase.
Summing over all error probabilities and taking a union bound,
it follows that $\phi$ total steps are sufficient to complete
each stage of Phase 2 correctly with total probability 
at least $1 - 4N^{-\alpha}$

\item
\textit{Phase 3}:
Again because $\yhat \ge 1/2$ and $x \ge 3m/8$ with probability at 
least $1-N^{-\alpha}$ throughout Phase 3 by Lemma~\ref{lemma:dbamc-blank-ub},
we have
$$
\phat \ge \frac{2x\cdot \yhat}{N^2} 
\ge 
\frac{3}{4}\cdot\frac{m}{N^2}
$$
throughout the Phase with this same probability. 
By Lemma~\ref{lemma:dbamc-prod-steps},
$\lambda = 961 (N^2/m^2)(\alpha \log m + \alpha \log N)$
productive steps are needed to complete the Phase correctly
with probability at least $1 - 2N^{-\alpha}$. 
By a similar argument as in Phases 1 and 2, this means taking
$$
\phi 
= 
2 \cdot \frac{4N^2}{3m^2} \cdot 961 
\cdot\frac{N^2}{m^2}\left(\alpha \log m + \log N \right)
= 
2563 \cdot \frac{N^4}{m^3}
\left(\alpha \log m + \log N \right)
$$
total steps is sufficient to ensure that
$\phi_p \ge \lambda$ with probability at least
$1 - N^{-\alpha}$.
Now again by the union bound, it follows that these 
$\phi$ total steps are sufficient to ensure that
Phase 3 of the protocol completes correctly with 
probability at least $1 - 4N^{-\alpha}$. 
\end{enumerate}
We conclude the proof by observing that the success probability 
of stages of Phase 2 and Phase 3 are at least that of the stages of 
Phase 1, thus giving the stated result. 
\end{proof}

\subsubsection{Concluding the Proof}

Using the Lemmas and Corollaries established in the preceding
subsections, we now conclude to prove Theorem~\ref{thm:dbamc-overall}. 
For convenience, we restate the Theorem below:

\thmdbamcoverall*

\begin{proof}
By Lemma~\ref{lemma:dbamc-eff}, provided that
$m \ge n/10$ and $N$ is sufficiently large, each of the 
$O(\log N)$ stages of Phase 1 of the protocol 
require $\max\{O(N^3/(mn))\;,\; O(N^3/m^2)\} \le O(N^4/m^3)$
total steps to complete correctly with probability
at least $1 - 2N^{-\alpha} - N^{-(\alpha/6)} - N^{-(\alpha/2)}$;
each of the $O(\log N)$ stages of Phase 2 of the 
protocol require $O(N^4/m^3)$ total steps to complete
correctly with probability
at least $1 - 2N^{-\alpha} - N^{-(\alpha/6)} - N^{-(\alpha/2)}$;
and Phase 3 of the protocol requires $O((N^4/m^3)\log N)$ total
steps to complete correctly with probability at least 
$1 - 2N^{-\alpha} - N^{-(\alpha/6)} - N^{-(\alpha/2)}$.
Now for any $c \ge 1$, it follows by the union bound that 
for an appropriate choice of constant $\alpha \ge 1$, the entire
$\dbamc$ protocol will complete correctly within 
$O((N^4/m^3) \log N)$ total interactions with probability
at least $1 - N^{-c}$ given that $m \ge n/10$ and for
sufficiently large $N$, which gives the stated result.
\end{proof}

As a corollary of Theorem~\ref{thm:dbamc-overall},
we can simplify the
convergence guarantees of the $\dbamc$ protocol in the 
case when $m = \Theta(n)$. 

\begin{restatable}{corollary}{corrdbamcrelated}
\label{corr:dbamc-related-pop}
There exists some constant $\alpha \ge 1$ such that,
for a population of $n$ inputs, 
$m = cn$ workers where $c \ge 1$, and an initial input 
margin $\epsilon \ge \alpha\sqrt{N \log N}$, the
$\dbamc$ protocol correctly computes the majority
of the inputs within $O(N \log N)$ total 
interactions with probability at least $1 - N^{-a}$
for any $a \ge 1$ when $N$ is sufficiently large
\end{restatable}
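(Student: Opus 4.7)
The plan is to derive the corollary as a direct specialization of Theorem~\ref{thm:dbamc-overall} to the regime $m = cn$ with $c \ge 1$. First I would verify that the hypotheses of the theorem are satisfied under the corollary's assumptions. Since $c \ge 1$, we have $m = cn \ge n \ge n/10$, so the population-ratio constraint $m \ge n/10$ holds. The assumption that $N$ is sufficiently large carries over verbatim, and the input margin hypothesis $\epsilon \ge \alpha\sqrt{N\log N}$ is identical in both statements, so no additional work is needed to match hypotheses.

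Next I would simplify the running-time bound $O((N^4/m^3)\log N)$ guaranteed by Theorem~\ref{thm:dbamc-overall} under $m = cn$. Here $N = m + n = (c+1)n$, so $n = N/(c+1)$ and $m = cN/(c+1)$. In particular $m = \Theta(N)$ with constant of proportionality depending only on $c$, so
\[
\frac{N^4}{m^3} \;=\; \frac{N^4}{(cN/(c+1))^3} \;=\; \frac{(c+1)^3}{c^3}\cdot N \;=\; \Theta(N).
\]
Substituting into the bound from the theorem yields a running time of $O(N\log N)$ total interactions, with the hidden constant now also depending on $c$.

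Finally I would transfer the probability guarantee. Theorem~\ref{thm:dbamc-overall} states that for any target exponent the constant $\alpha$ can be chosen so that the success probability is at least $1 - N^{-c'}$ for any $c' \ge 1$; renaming variables to avoid the clash with the ratio constant $c$ and fixing the choice of $\alpha$ promised by the theorem, we obtain correctness with probability at least $1 - N^{-\alpha}$, which is the form required by the corollary. Combining these three steps completes the proof.

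The entire argument is thus essentially a substitution, and I do not anticipate any substantive obstacle: the only mild care required is to keep the two uses of the letter $c$ (the population-ratio constant in the corollary statement versus the probability-exponent constant $c'$ in the theorem) notationally distinct when quoting Theorem~\ref{thm:dbamc-overall}.
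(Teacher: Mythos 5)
Your proposal is correct and matches the paper's intent exactly: the corollary is presented as an immediate specialization of Theorem~\ref{thm:dbamc-overall}, obtained by checking $m = cn \ge n/10$ and observing that $N^4/m^3 = \Theta(N)$ when $m = \Theta(N)$. Your added care about the constant depending on $c$ and the notational clash between the ratio constant and the probability exponent is sensible but does not change the substance.
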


We note that the result of Corollary~\ref{corr:dbamc-related-pop}
implies that the CI model input margin lower bound from 
Theorem~\ref{thm:input-margin-lb} is tight up to 
a multiplicative $O(\sqrt{\log N})$ factor. 


%
%

\section{Approximate Majority with Transient Leaks}
\label{sec:leaks}

We now consider the behavior of 
the $\dbam$ and $\dbamc$ protocols in the presence of
transient leak faults. 
Even in the presence of these adversarial events
(which occur up to some bounded rate $\beta$), 
both the $\dbam$ and $\dbamc$ 
protocols will, with high probability, reach configurations
where nearly all agents share the input majority opinion.
In the presence of leaks, we consider the approximate 
majority predicate to be computed correctly upon reaching
these low sample-error configurations. 

Recall that a transient leak is an event where
an agent spuriously changes its state according to some leak function $\ell$.
For example, we denote by $U \rightarrow V$ the event that an agent in 
state $U$ transitions to state $V$ due to a leak event, 
where the timing of such events are dictated by the random scheduler and
occur with probability $\beta$ at each subsequent interaction step. 
In both the \dbam\ and \dbamc\ protocols, the only state changes that could 
possibly take place due to leaks are $X \rightarrow B$, $Y \rightarrow B$,
$B \rightarrow X$, and $B \rightarrow Y$ because these describe 
all possible state changes that could take place in the presence of 
an interacting partner. However, our analysis considers an
\textit{adversarial} leak event $X \rightarrow Y$, which
maximally decreases our progress measures and can be considered 
the ``worst'' possible leak. Though this leak event is not chemically sound 
(because no normal interaction can cause an $X$ agent to transition 
to the $Y$ state), our results demonstrate that both the
$\dbam$ and $\dbamc$ protocols are robust to this strong
adversarial leak event. 
Thus in a more realistic chemically sound setting, 
our results will also hold, as the set of transitions working against our
progress measures are weaker.

\subsection{Leak Robustness of the $\dbam$ Protocol}
\label{sec:dbl}

We start by showing the leak-robustness of the $\dbam$ protocol 
for approximate majority in the original population protocol model. 
Recall that in the standard model, \textit{all} agents are susceptible to leaks. 
Our main result shows that when the leak rate $\beta$ 
is sufficiently small, the protocol still reaches a
configuration with bounded \textit{sample error} 
(the proportion of agents in the non-initial-majority state) 
within $O(n \log n)$ total interactions with high probability. 
Unlike the scenario without leak events, note that the 
protocol will never be 
able to fully converge to a configuration where all agents 
remain in the majority opinion. However, reaching a configuration
where despite leaks, \textit{nearly} all agents hold the 
input majority value state matches similar results of 
\cite{angluin2008simple, condon2019, alistarh2017robust, alistarh2020robust}. 
Formally, we have the following theorem, which 
characterizes the eventual sample error of the protocol with respect
to the magnitude of the leak rate $\beta$. 
\begin{restatable}{theorem}{thmdblmain}
  \label{thm:dbl-main}
  There exists some constant $\alpha \ge 1$ such that, for a 
  population with initial input margin
  $\epsilon \ge \alpha \sqrt{n \log n}$
  and adversarial leak rate $\beta \le (\alpha \sqrt{n \log n})/12672n$,
  an execution of the $\dbam$ protocol will reach a configuration with
  \begin{enumerate}
    \item 
    sample error $O(\log n/n)$ when
    $\beta \le O(\log n /n)$
    \item
    sample error $O(\beta)$ when 
    $\omega(\log n /n ) \le \beta \le (\alpha \sqrt{n \log n})/12672n$
  \end{enumerate}
  within $O(n \log n)$ total interactions with probability 
  at least $1 - n^{-c}$ for any $c \ge 1$ when $n$ is
  sufficiently large. 
\end{restatable}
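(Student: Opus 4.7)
The plan is to adapt the phase-based analysis of Theorem~\ref{thm:dbamc-overall} to the original $\dbam$ protocol---which corresponds to the degenerate CI case where every agent is a worker and the $I_X+B$, $I_Y+B$ rules are absent---while accounting for the adversarial $X \to Y$ leaks that occur independently with probability $\beta$ at each step. I keep the same progress measures $\hat x = x + b/2$, $\hat y = y + b/2$, and $P = \epsilon + \hat x - \hat y$, and partition the execution into Phase 1 (doubling $P$ until $P \ge \epsilon + 7n/8$), Phase 2 (halving $\hat y$ starting from $n/16$), and Phase 3 (maintaining a small $\hat y$). The critical difference from the leak-free setting is that the protocol can no longer converge to $\hat y = 0$; instead, I will show it reaches and holds an equilibrium value $T^*$ of order $\Theta(\beta n)$ when $\beta = \omega(\log n / n)$ and of order $\Theta(\log n)$ when $\beta \le O(\log n / n)$, which translates directly into the claimed sample-error bounds.

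The first step is to couple each time step as either a scheduled pairwise interaction (with probability $1-\beta$) or a leak event (with probability $\beta$). Over any window of $T$ steps a Chernoff bound gives at most $2\beta T$ leaks whp, so Phases 1 and 2 can be analyzed essentially as in Lemmas~\ref{lemma:doubling-difference} and~\ref{lemma:dbamc23-corr-mk}, with an additive $O(\beta T)$ correction to the $Y$-count. In Phase 1, the $O(n)$ productive steps per stage introduce at most $O(\beta n) = O(\sqrt{n \log n})$ additional $Y$ agents from leaks; since $\Delta \ge \alpha \sqrt{n \log n}$, the random-walk drift toward $X$ still dominates and $P$ doubles in each stage whp. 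In Phase 2, the same halving structure persists until $\hat y$ falls to the scale where the rate of leak-induced $Y$ creation (roughly $\beta$ per step) balances the rate of $Y$-removal through $X+Y$ and $B+X$ interactions (roughly $x \hat y / n^2$ per step); solving this balance recovers $T^* = \Theta(\beta n)$ in the large-$\beta$ regime.

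For Phase 3, the objective is to show that once $\hat y$ drops to $O(T^*)$, it remains at $O(T^*)$ for the remaining $O(n \log n)$ steps whp. I plan to establish this via a supermartingale/drift argument: define a potential function centered at $T^*$ and prove that whenever $\hat y \ge 2 T^*$ the expected decrement per step is bounded below by a positive constant; combining this drift with a Freedman-type concentration inequality and a union bound over the $O(n \log n)$ horizon yields the stability claim. The two stated sample-error bounds then follow by translating $\hat y = O(T^*)$ and $b = O(T^*)$ into the probability of sampling a non-$X$ agent uniformly from the population.

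The main obstacle I anticipate is the regime boundary. When $\beta = \omega(\log n / n)$ the equilibrium scale $T^* = \Theta(\beta n)$ admits concentration from standard law-of-large-numbers arguments, but when $\beta \le O(\log n / n)$ the equilibrium count is below $\log n$ and a single unlucky run of leaks can temporarily inflate $\hat y$ by a constant factor; the drift argument must therefore be sharpened with a large-deviation estimate strong enough to rule out $\hat y$ ever exceeding $c \log n$ across all $O(n \log n)$ steps. A secondary calibration concern is that the precise constant $1/12672$ in the bound on $\beta$ should emerge from requiring that the total leak budget over the entire execution remain small enough relative to $\epsilon$ that Phase 1 cannot be derailed by an early burst of leaks, and the proof will need to track these constants carefully through the phase accounting.
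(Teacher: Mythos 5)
Your plan for Phases 1 and 2 is essentially the paper's proof: the paper also splits $\beta$ into a large regime ($\omega(\log n/n) \le \beta$) and a small regime ($\beta \le O(\log n/n)$), bounds the number of leak events in each fixed window of productive steps by a Chernoff bound (Lemmas~\ref{lemma:dbl-a1.1},~\ref{lemma:dbl-a2.1-largeB},~\ref{lemma:dbl-a2.1-smallB}), and then shows that the biased random walk on $\hat x - \hat y$ (resp.\ $\hat y$) still doubles (resp.\ halves) per stage after absorbing the leak-induced loss (Lemmas~\ref{lemma:dbl-a1.2},~\ref{lemma:dbl-a2.2-largeB},~\ref{lemma:dbl-a2.2-smallB}). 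Your equilibrium scale $T^* = \Theta(\beta n)$ matches the paper's Phase~2 stopping condition $\hat y \le 25\beta n$ (or $25a\log n$ for small $\beta$). Two calibration remarks: the constant $1/12672$ arises from a \emph{per-stage} budget --- the $O(\sqrt{n\log n})$ leaks occurring within one window of $66n$ productive steps must be at most $\epsilon/8$ so the drift within that same window can absorb them --- not from a global budget over the whole execution, which would be $\Theta(\beta n \log n) \gg \epsilon$; and your blanket ``at most $2\beta T$ leaks whp'' needs to be read as ``at most $\max\{2\beta T, \Theta(\log n)\}$'' for very small $\beta$, which is exactly why the paper carries the two regimes separately through Phase~2.

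The one genuine divergence is your Phase~3. The theorem only asserts that the protocol \emph{reaches} a configuration with the stated sample error within $O(n\log n)$ interactions; it makes no claim that $\hat y$ stays at $O(T^*)$ afterward. The paper accordingly has no stability phase in the leak analysis, and its Conclusion explicitly identifies long-term persistence of the low-sample-error configuration as an open problem. The supermartingale-plus-Freedman argument you sketch for holding $\hat y$ near $T^*$ over the full horizon --- and especially the large-deviation sharpening you flag as the main obstacle in the small-$\beta$ regime --- is therefore effort spent on a strictly stronger statement than the one being proved. Dropping Phase~3 entirely and ending the argument when $\hat y$ first crosses the threshold (as the paper does) closes the proof without confronting that obstacle.
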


\paragraph*{Analysis Overview}

To prove Theorem~\ref{thm:dbl-main}, we again make modified 
use of the random walk tools from \cite{condon2019}.
Using the progress measures $\hat y = y + b/2$ and 
$P = \hat x - \hat y$,
observe that an $X \to Y$ leak event incurs twice as much 
negative progress to both measures as opposed to 
$X+B$ events. 
Compared to the analysis from the non-leak setting,
the analysis \textit{with} adversarial leaks must account for the 
stagnation (or potentially the reversal) of the protocol's 
progress toward reaching a low-sample-error configuration.
Note that since the sample error of a configuration is
defined to be $(y+b)/n$ (since we assume an initial $x$ majority wlog), 
we will use the value $\hat y/n$ to approximate the sample error of 
a configuration.

%

%

Moreover, throughout the analysis, 
we make use of the following structure, which is partially
adapted from the non-leak $\dbam$ analysis in \cite{condon2019}:
\begin{enumerate}
\item 
First, we classify the leak rate $\beta$ into two categories:
when $\omega(\log n /n) \le \beta \le O(\sqrt{n\log n}/n)$
we say that $\beta$ is \textit{large}. 
When $\beta \le O(\log n/n)$, we say that $\beta$ is \textit{small}.
\item
\textbf{Phase 1} begins with the start of the protocol
and ends correctly when $P = \hat x - \hat y \ge 2n/3$.
Each \textbf{stage of phase 1} begins with $P = \Delta_0 \cdot 2^t$, 
and ends correctly once $P = \Delta_0 \cdot 2^{t+1}$ for 
$t \in \{0,1, \dots, O(\log n)\}$. 
\item
\textbf{Phase 2} of the protocol begins once $\hat y \le n/6$
(which is equivalent to $\hat x - \hat y \ge 2n/3$) and
ends correctly when $\hat y \le 25 \beta n$ when $\beta$ is large,
and when $\hat y \le 25 a \log n$ (where $a \ge 1$) when 
$\beta$ is small. 

For large $\beta$, each normal \textbf{stage of Phase 2} 
begins with $\hat y \le n/k$ where $6 \le k \le 1/(50\beta)$
and completes correctly when reaching 
$\hat y \le \max\{n/2k, 50\beta n\}$. 
Here, the final stage of Phase 2 begins with 
$\hat y = n/ k = 50\beta n$
and ends once $\hat  y \le 25 \beta n$. 

For small $\beta$, each normal stage of Phase 2 begins
with $\hat y= n/k$ where $6 \le k < n/(50 a \log n)$ and
completes correctly when reaching 
$\hat y \le \max\{n/2k, 50 a \log n\}$.
The final stage of Phase 2 in this case begins with $\hat y = n/k = 50 a \log n$ and ends once $\hat y \le 25 a \log n$.
\end{enumerate}

In this leak-prone setting, we refer to \textit{productive 
interactions} as any of the non-null transitions found
in Figure~\ref{fig:dbam} in addition to a leak event. 
The three non-null and non-leak transitions are referred
to as \textit{non-leak productive steps}.
For each phase and stage, we obtain high-probability 
estimates of the number of productive and total steps needed
to complete the phase/stage correctly in two steps: 
first, we bound
the number of leak events that can occur
during a fixed interval of productive events. Then, we 
show that a smaller sub-sequence of non-leak productive
steps is sufficient to ensure that enough positive progress is made
to offset the negative progress of the leak events.
We again rely on a combination of 
Chernoff concentration bounds and martingale inequalities in
order to show this progress at every phase and stage. 

Theorem~\ref{thm:dbl-main} also separates 
the behavior of the protocol into two classes: 
when $\beta \le O(\log n/n)$ (\textit{small} leak rate),
and when $\omega(\log n/n) \le \beta \le O(\sqrt{n \log n}/n)$ 
(\textit{large} leak rate). 
When the leak rate is large, 
the probability of a leak event conditioned on a productive step
becomes roughly equal to the conditional probability of a 
non-leak productive step when $\hat y = O(\beta n)$. 
Thus, we cannot expect the protocol to make further ``progress''
toward a lower sample-error configuration with high probability 
beyond $\hat y = O(\beta n)$. The same holds for small leak rate
when $\beta = O(\log n/n)$, and for even smaller values of $\beta$, 
our analysis tools only allow for the high-probability guarantee that
$\hat y$ eventually drops to $O(\log n)$. 

We also give additional arguments
showing that the protocol remains in a configuration 
with sample error $O(\log n/n)$ for small leak rate, and with
sample error $O(\sqrt{n \log n}/n)$ for large leak rate, for
at least a polynomial number of interactions with high probability
following the completion of Phase 2 of the protocol.


%
%

\subsubsection{Proof of Theorem~\ref{thm:dbl-main}}
We now develop and prove the lemmas which lead to 
a proof of our main result considering the $\dbam$ protocol 
with transient leaks. 


\paragraph*{Utility Lemmas}
To begin, we state and prove several lemmas that will be
used repeatedly throughout the analysis. The first lemma proves that,
with bounded leak rate $\beta$ and a bounded number of leak events during
a sequence of productive steps, the progress measure $\hat x - \hat y$ 
will never decrease by more than a factor of two during a stage of 
Phase 1. 

\begin{lemma}
  \label{lemma:dbl1-no-double}
  Throughout the $\dbam$ protocol on a population with
  initial input margin \ $\alpha \sqrt{n \log n}$ and adversarial
  leak rate $\beta \le \frac{\alpha\sqrt{n \log n}}{12672n}$,
  the probability that, during the stage of Phase 1 starting at
  $\hat x - \hat y = \Delta$, the protocol 
  reaches a configuration with $\hat x - \hat y = \Delta/2$ within
  a sequence of $66n$ productive steps before finishing correctly is at most
  $n^{-(\alpha/6)}$ when the number of leak events
  in the sequence is at most $(\alpha \sqrt{n \log n})/8$.
\end{lemma}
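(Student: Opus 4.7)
The plan is to adapt the random-walk argument from Lemma~\ref{lemma:doubling-difference} (proved for the CI model) to the plain $\dbam$ model, treating leak events as an additional source of negative drift on the progress measure $P = \hat x - \hat y$. In the $\dbam$ transitions of Figure~\ref{fig:protocols}, an $X+Y$ step leaves $P$ unchanged, a $B+X$ step increases $P$ by one, a $B+Y$ step decreases $P$ by one, and an adversarial $X \to Y$ leak decreases $P$ by exactly two. The strategy is therefore to first absorb the worst-case cumulative leak contribution into the target deficit we must rule out, and then apply Lemma~\ref{lemma:one-dim-rw} to the remaining blank-consuming steps.

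Using the hypothesis that the number of leak events $L$ in the window is at most $(\alpha\sqrt{n\log n})/8$, the cumulative negative leak contribution to $P$ is at most $2L \le (\alpha \sqrt{n \log n})/4 \le \Delta/4$, where the last inequality uses $\Delta \ge \epsilon \ge \alpha\sqrt{n \log n}$. For $P$ to drop from $\Delta$ to $\Delta/2$, the number of $B+Y$ steps must therefore exceed the number of $B+X$ steps in the window by at least $\Delta/4$. Conditioned on a blank-consuming step, the probability that it is of type $B+X$ is $p = x/(x+y)$; since $x - y = \hat x - \hat y = P$ and $x + y \le n$, we obtain $p \ge 1/2 + P/(2n)$, and until the first time $P$ hits $\Delta/2$ we have $p \ge 1/2 + \Delta/(4n)$. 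The $66n$ productive-step budget only serves to bound the window length; Lemma~\ref{lemma:one-dim-rw} already handles arbitrarily long trial sequences, so its exact value does not appear in the final bound.

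Applying Lemma~\ref{lemma:one-dim-rw}, the hitting probability is at most $((1-p)/p)^{\Delta/4}$. Standard estimates yield $(1-p)/p \le (2n - \Delta)/(2n + \Delta) \le \exp(-2\Delta/(2n+\Delta))$, and combining this with $\Delta \ge \alpha\sqrt{n \log n}$ and the crude bound $\Delta \le n$ gives the desired $\exp(-\Delta^2/(6n)) \le n^{-\alpha/6}$ for $\alpha \ge 1$. The main technical care required will be justifying the application of Lemma~\ref{lemma:one-dim-rw} in the presence of interspersed non-blank-consuming events (leaks, $X+Y$ steps, and null steps): a standard coupling argument shows that, prior to the first hitting time of $\Delta/2$, the sequence of blank-consuming steps stochastically dominates a fixed biased walk with success probability $1/2 + \Delta/(4n)$, at which point the one-dimensional bound applies verbatim. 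This coupling, rather than the calculation itself, is the only nonroutine step, and it is essentially identical to the corresponding argument used to establish Lemma~\ref{lemma:doubling-difference}.
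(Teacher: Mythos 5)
Your proposal is correct and follows essentially the same route as the paper's proof: absorb the at most $\Delta/4$ leak-induced decrease into the deficit, bound the conditional probability of a $B+X$ step below by $1/2 + \Delta/(4n)$ until $P$ first reaches $\Delta/2$, and apply Lemma~\ref{lemma:one-dim-rw} with barrier $\Delta/4$ to get $\left(\frac{2n-\Delta}{2n+\Delta}\right)^{\Delta/4} \le n^{-\alpha/6}$. The only addition is your explicit remark about the coupling/domination step needed to apply the one-dimensional walk bound amid interspersed non-blank-consuming events, which the paper leaves implicit.
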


\begin{proof}
  For a population with initial input margin $\Delta_0 = \alpha \sqrt{n \log n}$,
  the number of leak events in the sequence of $66n$ productive steps
  is at most $(\alpha \sqrt{n \log n})/4 \le \Delta_0/8$. Because
  each leak event decreases the value $\hat x - \hat y$ by two, in total
  these leak events can decrease the $\hat x - \hat y$ by at most $\Delta_0/4$
  in magnitude. 

  For the stage of Phase 1 that starts with $\hat x - \hat y = \Delta$,
  up until the point that $\hat x - \hat y < \Delta/2$, the probability
  of a $B+X$ interaction conditioned on a blank-consuming productive step,
  denoted by $p(bx)$, is at least
  \begin{align}
    p(bx) \ge \frac{x}{n} \ge \frac{1}{2} + \frac{\hat x - \hat y}{4n}
    = \frac{2n + \Delta}{4n}.
  \end{align}

  Given the at most $\Delta/4$ decrease to $\hat x - \hat y$ contributed
  by the leak events, $\hat x  - \hat y$ can only decrease to $\Delta/2$
  if the number of $Y+B$ interactions exceeds the number of $X+B$ interactions
  by at least $\Delta/4$ before the stage completes correctly.

  Using Lemma~\ref{lemma:one-dim-rw}, observe that this bad event occurs
  with probability at most
  \begin{align}
    \left(
    \frac{2n - \Delta}{2n + \Delta}
    \right)^{\Delta/4}
    =
    \left(
    1- \frac{2\Delta}{2n+\Delta}
    \right)^{\Delta/4}
    &\le
    \exp\left(
    - \frac{\Delta^2}{4n + 2\Delta}
    \right)\\
    &\le
      \exp\left(
      - \frac{\alpha \log n}{6}
      \right)
      \le n^{-(\alpha/6)}.
  \end{align}
\end{proof}

The next lemma shows that during a fixed sequence of productive steps
during a stage of Phase 2 of the protocol with a bounded number of leaks,
the progress measure $\hat x - \hat y$ will never drop below $n/2$ with 
high probability. 
In turn, this implies that the probability of a $B+X$ interaction
(conditioned on a blank-consuming step) is bounded from below by
a constant fraction greater than $1/2$. 

\begin{lemma}
  \label{lemma:dbl2-no-double}
  During Phase 2 of the $\dbam$ protocol with adversarial leak
  rate $\beta \le O(\sqrt{n \log n}/n)$, the probability
  of a $X+B$ interaction conditioned on a blank-consuming productive
  step is at least $5/8$ throughout any sequence of \ $16n$ productive steps
  with probability at least $1 - n^{-a}$ for $a \ge 1$ and sufficiently large $n$
  when the number of leak events within this sequence is at most $n/24$.
\end{lemma}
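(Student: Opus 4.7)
The plan is to reduce the probabilistic condition $p(bx) \geq 5/8$ to a deterministic lower bound on the progress measure $\hat x - \hat y$, and then apply a stopping-time / random-walk coupling argument to show this bound holds throughout the sequence. Observe that $p(bx) = x/(x+y) = \tfrac{1}{2} + (x-y)/(2(x+y)) \geq \tfrac{1}{2} + (\hat x - \hat y)/(2n)$, using $x+y \leq n$ and $\hat x - \hat y = x - y$. Hence the condition $\hat x - \hat y \geq n/4$ already implies $p(bx) \geq 5/8$, so it suffices to show that $\hat x - \hat y$ never drops below $n/4$ during the $16n$-productive-step sequence, except with probability $n^{-a}$.

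At the start of the sequence, being in Phase 2 gives $\hat y \leq n/6$ and so $\hat x - \hat y = n - 2\hat y \geq 2n/3$. Each adversarial leak of the form $X \to Y$ decreases $x$ by $1$ and increases $y$ by $1$, reducing $\hat x - \hat y$ by exactly $2$; hence the at most $n/24$ leaks in the sequence contribute a total decrease of at most $n/12$. Among the non-leak productive transitions, $X+Y \to B+B$ preserves $\hat x - \hat y$, while $B+X$ and $B+Y$ change it by $+1$ and $-1$ respectively. Therefore for $\hat x - \hat y$ to drop from at least $2n/3$ down to $n/4$, the number of $B+Y$ events in the sequence must exceed the number of $B+X$ events by at least $2n/3 - n/4 - n/12 = n/3$.

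To control this excess, I would let $\tau$ be the first time within the sequence at which $\hat x - \hat y = n/4$ (setting $\tau = \infty$ otherwise). Prior to $\tau$, the threshold $\hat x - \hat y \geq n/4$ holds, so $p(bx) \geq 5/8$ at every blank-consuming step before $\tau$. Coupling the sequence of blank-consuming outcomes up to $\tau$ with a biased random walk of success probability $5/8$, Lemma~\ref{lemma:one-dim-rw} bounds the probability that failures ever exceed successes by $n/3$ by $((3/8)/(5/8))^{n/3} = (3/5)^{n/3}$, which is $\leq n^{-a}$ for any $a \geq 1$ and sufficiently large $n$. The main delicate step is formalizing this dynamic coupling, since the true bias at each step depends on the current state: one must verify that the worst-case walk with fixed bias $5/8$ (to which the lemma applies) stochastically dominates the actual process up to $\tau$, for example by introducing auxiliary coin flips to ``correct down'' the excess bias. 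Once this coupling is in place, combining with the leak accounting above yields the desired probability bound.
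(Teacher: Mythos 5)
Your proposal is correct and follows essentially the same route as the paper: charge the at most $n/24$ leaks a deterministic total decrease of $n/12$ to $\hat x - \hat y$, then apply the biased-random-walk bound of Lemma~\ref{lemma:one-dim-rw} (which already covers the ``bias at least $p$ at every step before the stopping time'' situation you flag) to show the blank-consuming walk never accumulates the required deficit. The only differences are cosmetic: you use the sharper bound $p(bx) \ge \tfrac{1}{2} + \tfrac{\hat x - \hat y}{2n}$ and hence guard the threshold $\hat x - \hat y \ge n/4$ with deficit $n/3$, whereas the paper guards $\hat x - \hat y \ge n/2$ with deficit $n/12$; both yield the stated $1 - n^{-a}$ bound.
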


\begin{proof}
  Recall that, conditioned on a blank-consuming productive step, the
  probability of a $X+B$ interaction, denoted by $p(bx)$, can be
  written as
  \begin{align}
    p(bx) \ge \frac{1}{2} + \frac{\hat x - \hat y}{4n}.
  \end{align}
  To show that $p(bx) \ge 5/8$ holds throughout a sequence of interactions
  during Phase 2, it is sufficient to show that $\hat x - \hat y \ge n/2$
  holds during the sequence.

  For this, recall that Phase 2 of the protocol begins when $\hat x - \hat y$
  reaches $2n/3$. Assuming that in a sequence of $16n$ productive steps we have
  at most $n/24$ leaks, it follows that in order for $\hat x - \hat y$ to drop
  further to $n/2$ requires the number of $Y+B$ steps to exceed the number
  of $X+B$ steps by at least $n/12$ throughout the sequence in the worst case. 

  Thus again using Lemma~\ref{lemma:one-dim-rw}, we have that this
  event occurs with probability at most
  \begin{align}
    \left(
    \frac{3/8}{5/8}
    \right)^{n/12}
    =
    \left(
    \frac{3}{5}
    \right)^{n/12}
    \le
    n^{-a}
  \end{align}
  for $a \ge 1$ when $n$ is sufficiently large. Thus with high probability,
  we have that $\hat x - \hat y$ will remain above $n/2$ throughout
  a sequence of at most $16n$ productive steps with high probability
  during Phase 2 of the protocol.
  It follows then that $p(bx) \ge 5/8$ will hold with probability
  at least $1 - n^{-a}$ throughout each such sequence in Phase 2. 
\end{proof}

\begin{lemma}
  \label{lemma:dbl-x-lb}
  Throughout Phase 1 of the $\dbam$ protocol on a population with
  initial input margin $\alpha \sqrt{n \log n}$ and adversarial
  leak rate $\beta \le O(\sqrt{n \log n}/n)$, 
  the number of $x$ agents in the population remains above
  $n/4$ with probability at least $1 - n^{-\alpha} - n^{-(\alpha/6)}$ when 
  $n$ is sufficiently large.
\end{lemma}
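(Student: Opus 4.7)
The plan is to reduce the claim $x \ge n/4$ to a uniform upper bound on the blank count $b$, prove that $b$-bound using a state-dependent drift argument, and finally absorb the leak effects via a Chernoff tail bound.

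First, I would apply Lemma~\ref{lemma:dbl1-no-double} to each of the $O(\log n)$ stages of Phase~1 and union-bound over them, yielding, with probability at least $1 - n^{-\alpha/6}$ (absorbing the logarithmic factor by slightly inflating the constant $\alpha$), that $\hat x - \hat y \ge \Delta_0/2$ holds throughout Phase~1. Combined with the invariant $\hat x + \hat y = n$ (which is preserved by $X \to Y$ leaks, since such a leak shifts $\hat x$ by $-1$ and $\hat y$ by $+1$), this gives $\hat y \le n/2 - \Delta_0/4$. Substituting $y = \hat y - b/2$ into $x + y + b = n$ then yields $x = n - \hat y - b/2 \ge n/2 + \Delta_0/4 - b/2$, so the claim $x \ge n/4$ reduces to showing $b \le n/2 + \Delta_0/2$ uniformly throughout Phase~1.

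Second, I would bound $b$ using its state-dependent expected drift. A direct calculation shows that the conditional expected change of $b$ per interaction is proportional to $2xy - b(x+y)$; since $xy \le (x+y)^2/4$ and $x+y = n-b$, this quantity is non-positive once $b \ge n/3$, and the strict inequality $x > y$ guaranteed by the first step makes the drift \emph{strictly} negative there. I would then construct an exponential supermartingale of the form $\exp\bigl(\lambda(b_t - n/3)^+\bigr)$ for a suitably small $\lambda > 0$, verify it is a supermartingale whenever $b \ge n/3$, and apply Doob's maximal inequality; alternatively, decompose the trajectory into excursions above $n/3$ and control each with a one-dimensional random-walk argument (cf.\ Lemma~\ref{lemma:one-dim-rw}). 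Combining with the $O(n \log n)$ step horizon of Phase~1 and a union bound yields $b_t \le n/2 + \Delta_0/2$ throughout Phase~1 with probability at least $1 - n^{-\alpha}$.

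Third, I would account for leaks: by a Chernoff tail bound, the total number of leak events in Phase~1 is at most $O(\beta \cdot n \log n) = O(\sqrt{n} \log^{3/2} n)$ with high probability. Leaks do not change $b$ directly, but they perturb the state $(x,y)$ (keeping $x+y$ fixed), which shifts the drift of $b$ by an amount that is absorbed by the $\Delta_0$-sized slack in the bound above, provided $\alpha$ is chosen large enough. A final union bound over the three failure events yields the stated probability $1 - n^{-\alpha} - n^{-\alpha/6}$. The main obstacle will be the second step: because $b_t$ is not itself a martingale and its drift is only negative in the regime $b \ge n/3$, obtaining a \emph{uniform} tail bound on $\max_{t\in\text{Phase 1}} b_t$ (rather than at a single time) requires either the tuned exponential supermartingale combined with Doob's inequality or the excursion-based stopping-time decomposition, both of which must handle the state-dependent drift uniformly across the Phase~1 state space.
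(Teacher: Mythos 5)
Your proposal is correct and follows essentially the same route as the paper's proof: both reduce the bound $x \ge n/4$ to a uniform upper bound of roughly $n/2$ on the blank count $b$ (using Lemma~\ref{lemma:dbl1-no-double} to keep $x$ ahead of $y$), and both control $b$ by showing it is a supermartingale in the high-$b$ regime and applying a concentration inequality. If anything, your version is more careful on two points the paper glosses over --- the conditional drift $2xy - b(x+y)$ is only guaranteed non-positive once $b \ge n/3$ (the paper asserts this already at $b \ge n/4$, which fails when $x \approx y \approx 3n/8$), and a genuinely uniform-in-time bound on $\max_t b_t$ requires a maximal inequality or an excursion decomposition rather than Azuma's inequality applied at a fixed horizon.
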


\begin{proof}
  When $b \le n/4$, we have $x + y \ge 3n/4$, which by
  Lemma~\ref{lemma:dbl1-no-double} implies that $x \ge 3n/8 > n/4$ with
  probability at least $1- n^{-(\alpha/6)}$.

  On the other hand, say $n/4 \le b \le n/2$ at some point during Phase 1
  of the protocol. As long as $b \ge n/2$,
  we have $x + y \ge n/2$, which by similar reasoning as the previous case
  implies $x \ge n/4$ with probability at least $1 - n^{-(\alpha/6)}$.
  Thus, to prove the claim it is sufficient to  show that $b \le n/2$ will
  hold with high probability throughout protocol.

  To do this, let $b_0$ denote the number of blank agents at a given point,
  and let $b_1$ denote the number of blank agents after the next
  productive step. Since we are considering adversarial leak behavior,
  note that the count of blank agents is only affected by $X+Y$ interactions
  (which increase $b$ by 2) and $X+B$ or $Y+B$ interactions (which decrease
  $b$ by 1). Thus given the history of the protocol up encoded by the
  value of $b_0$ we have
  \begin{align}
    \E[b_1 | b_0] = b_0 + 2 \cdot p(xy) - p(bx/by)
  \end{align}
  where $p(xy)$ denotes the conditional probability of an $X+Y$ interaction,
  and $p(bx/by)$ denotes the conditional probability of either a $X+B$ or $Y+B$
  interaction.
  Since we have $n/4 \le b \le n/2$, it follows up until the point
  $b > n/2$ that $x + y \ge n/2$, and thus $2\cdotp(xy) \le 2(n/4)(n/4) = n^2/8$.
  Additionally, we have $p(bx/by) = b(x+y) \ge (n/4)(n/2) = n^2/8$. This implies
  \begin{align}
    \E[b_1 | b_0] = b_0 + 2p(xy) - p(bx/by) \le n^2/8 - n^2/8 = b_0,
  \end{align}
  which means $b$ is a supermartingale with respect to the sequence of
  productive events.

  Letting $b_0 = n/4$ and letting $b_t$ denote the difference in the count of
  blank agents from $b_0$ after the next $t$ productive steps,
  Azuma's inequality \cite{gs-book} gives that
  \begin{align}
    \Pr[b_t > n/4] \le \exp\left(- \frac{(n/4)^2}{2t}\right)
    \le \exp\left(- \frac{n^2}{16t} \right).
  \end{align}
  Thus for any $t = O(n)$, it follows that
  $\Pr[b_t > n/4] \le \exp(- a \log n) \le n^{-a}$  for any $a \ge 1$
  when $n$ is sufficiently large. Then summing over all error probabilities
  and taking a union bound, we find that the result will hold
  with probability at least $1 - n^{-\alpha} - n^{-(\alpha/6)}$. 
\end{proof}

The next lemma shows a lower bound on the number of agents in 
state $X$ throughout a finite sequence of productive steps during
Phase 2 of the protocol when the number of leak events is bounded.

\begin{lemma}
  \label{lemma:dbl-x-lb2}
  Throughout Phase 2 of the $\dbam$ protocol on a population with
  adversarial leak rate $\beta \le O(\sqrt{n \log n}/n)$,
  during any sequence of at most $16n$ productive steps, 
  the number of $x$ agents in the population remains above
  $n/3$ with probability at least $1 - n^{-a}$  when the number of
  leak events in the sequence is at most $n/24$ for $a \ge 1$
  when $n$ is sufficiently large.
\end{lemma}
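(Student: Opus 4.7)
The plan is to mirror the structure of Lemma~\ref{lemma:dbl-x-lb}, combining a supermartingale bound on the count of blank agents $b$ with Lemma~\ref{lemma:dbl2-no-double}'s guarantee that $\hat x - \hat y \ge n/2$ holds throughout the sequence. Since $x + y + b = n$ and $\hat x - \hat y = x - y$, a simple algebraic rearrangement gives $x \ge 3n/4 - b/2$, so it suffices to show $b \le 5n/6$ with high probability throughout the sequence of $16n$ productive steps.

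First I would establish the supermartingale property for $b$ in the Phase~2 regime. Since leak events $X \to Y$ do not change $b$, only the three non-null non-leak transitions contribute to $\E[b_1 - b_0 \mid b_0]$, and a short calculation shows that the condition $\E[b_1 - b_0 \mid b_0] \le 0$ reduces (via AM--GM on $xy \le (x+y)^2/4$ with $x + y = n - b$) to the simple threshold $b \ge n/3$. Conveniently, Phase~2 begins with $\hat y \le n/6$, hence $b \le n/3$ initially, so $b$ starts at or just below the supermartingale regime.

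Next I would apply Azuma's inequality (noting that each productive step alters $b$ by at most $2$) to bound $\Pr[\max_{t \le 16n} b_t \ge 5n/6]$. The technical obstacle is that $b$ may oscillate in and out of the supermartingale regime $b \ge n/3$, so a single-shot Azuma bound does not directly apply. I would address this by union-bounding over the last step $\sigma \le 16n$ at which $b_\sigma < n/3$ before (hypothetically) hitting $5n/6$: conditional on such a $\sigma$, we have $b_{\sigma + 1} \le n/3 + 2$, and the stopped process $b_{(\sigma + t) \wedge \tau}$, where $\tau$ is the next return of $b$ below $n/3$, is a bounded-step supermartingale. Azuma then gives that the probability of this stopped process reaching $5n/6$ within $16n$ further steps is at most $\exp(-\Omega(n))$, and summing over the $O(n)$ choices of $\sigma$ still yields failure probability at most $n^{-a}$ for any $a \ge 1$ when $n$ is sufficiently large.

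Finally, combining this high-probability bound on $b$ with Lemma~\ref{lemma:dbl2-no-double} (which, under the assumed $n/24$ leak-event bound for the sequence, ensures $\hat x - \hat y \ge n/2$ throughout) and a last union bound gives $x \ge 3n/4 - b/2 \ge 3n/4 - 5n/12 = n/3$ with the claimed probability. The main subtlety is the supermartingale bookkeeping around the threshold $n/3$; an equivalent alternative would be to apply a Doob-type maximal inequality to a truncated version of $b$, but the explicit excursion decomposition above most closely parallels the treatment already used for Lemma~\ref{lemma:dbl-x-lb}.
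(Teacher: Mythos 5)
Your proof is correct, and it rests on the same key input as the paper's: Lemma~\ref{lemma:dbl2-no-double}'s guarantee that $\hat x - \hat y \ge n/2$ throughout the sequence under the stated leak budget. But the entire supermartingale/excursion argument for bounding $b$ is redundant, and the paper's proof is correspondingly a two-line corollary. From $\hat x - \hat y \ge n/2$ and $\hat x + \hat y = n$ you get $\hat y = y + b/2 \le n/4$, which already forces $b \le 2\hat y \le n/2$ (and in fact $y + b \le 2\hat y \le n/2$, so $x = n - y - b \ge n/2 > n/3$ directly). Plugging $b \le n/2$ into your own inequality $x \ge 3n/4 - b/2$ likewise gives $x \ge n/2$, comfortably above the $b \le 5n/6$ threshold you set out to prove by martingale methods. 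So the whole second and third paragraphs of your argument — the AM--GM computation showing $b$ is a supermartingale above $n/3$, the Azuma bound, and the union bound over excursion starting points — can be deleted without loss; they are sound (the drift calculation and the $b \le n/3$ initial condition at the start of Phase~2 both check out), and they would be the right tool if you only had a bound on $x - y$ with no control on $b$, but here the progress measure $\hat y$ already encodes the blank count. The only thing your extra work buys is robustness to a hypothetical situation where $\hat y$ were not the quantity controlled by Lemma~\ref{lemma:dbl2-no-double}; it costs an extra union-bound term and a fair amount of bookkeeping that the structure of the progress measures makes unnecessary.
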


\begin{proof}
  The proof of Lemma~\ref{lemma:dbl2-no-double} shows that during
  Phase 2 of the protocol, $\hat x - \hat y$ will remain above
  $n/2$ throughout a sequence of at most $16n$ productive steps
  with probability at least $1 - n^{-a}$ when the number of
  leak events in the sequence is at most $n/24$. 
  Because $\hat x + \hat y = n$, it follows that
  $\hat y = (n - (\hat x - \hat y))/2 \le n/4 < n/3$, which will hold with
  probability at least $1 - n^{-a}$ throughout each stage of Phase 2. 
\end{proof}

Finally, the following lemma shows a relationship between the
number of blank-consuming productive steps guaranteed to occur
among a greater sequence of non-leak productive events.
\begin{lemma}
  \label{lemma:dbl-prob-prod}
  Throughout the $\dbam$ protocol, in a sequence of $\lambda$
  total non-leak productive steps, at least $(\lambda - y_{0})/2$
  will be blank-consuming steps, where $y_{0}$ is
  the initial number of $Y$ agents present in the population
  throughout the sequence. 
\end{lemma}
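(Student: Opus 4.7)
The plan is to prove Lemma~\ref{lemma:dbl-prob-prod} by a direct accounting argument on the count of $Y$ agents across the sequence of non-leak productive steps. First, I would partition the $\lambda$ non-leak productive interactions into three disjoint types according to the reactions in Figure~\ref{fig:protocols}: let $S_{xy}$ denote the number of $X+Y \to B+B$ steps, let $S_{bx}$ denote the number of $X+B \to X+X$ steps, and let $S_{by}$ denote the number of $Y+B \to Y+Y$ steps. Then by definition $\lambda = S_{xy} + S_{bx} + S_{by}$, and the number of blank-consuming steps in the sequence is $S_b = S_{bx} + S_{by}$.

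Next, I would observe that among these three reactions, the only one that creates new $Y$ agents is $Y+B \to Y+Y$ (contributing a net $+1$ to the count of $Y$), while the only reaction that consumes $Y$ agents is $X+Y \to B+B$ (contributing a net $-1$). Moreover the lemma restricts attention to non-leak steps, so adversarial $X \to Y$ leak events are excluded from the sequence by hypothesis. Consequently, at every prefix of the sequence the number of $X+Y$ interactions that have occurred cannot exceed the initial supply $y_0$ plus the number of $Y+B$ interactions that have occurred so far, since each $X+Y$ step requires a $Y$ agent as reactant. Applied to the whole sequence, this yields the single inequality $S_{xy} \leq y_0 + S_{by} \leq y_0 + S_b$.

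Finally, I would combine $S_b = \lambda - S_{xy}$ with the bound above to get $S_b \geq \lambda - y_0 - S_b$, which rearranges to $S_b \geq (\lambda - y_0)/2$, the statement of the lemma. The proof is essentially an invariant-style conservation argument, so I do not anticipate any real obstacle; the only subtlety is making explicit that we restrict to non-leak steps so that the $Y$-consumption/production balance is controlled solely by the three reactions of $\dbam$, which is exactly the hypothesis of the lemma.
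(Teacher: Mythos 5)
Your proof is correct and follows essentially the same accounting argument as the paper: both bound the number of $X+Y$ steps by $y_0$ plus the number of $Y$-producing ($Y+B$) steps, and then rearrange $\lambda = S_{xy} + S_b$ to get $S_b \ge (\lambda - y_0)/2$. Your version is slightly more explicit about the $Y$-count conservation, but the key inequality and the conclusion are identical.
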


\begin{proof}
  Consider a sequence of $\lambda_b$ blank-consuming productive steps.
  This means at most $\lambda_b$ of the interactions are $Y+B$ steps,
  and so along with the $y_{0}$ initial count of $Y$ agents, there
  can be at most $\lambda_b + y_{0}$ interactions of type $X+Y$.
  Otherwise, the number of $Y$ agents needed to produce a
  $X+Y$ step would be insufficient. Thus a total of
  $\lambda_b$ blank-consuming steps will be  obtained in at most
  $\lambda \le 2\cdot\lambda_b + y_0$ total (non-leak) productive events.
  Solving for $\lambda_b$ completes the proof.
\end{proof}


\paragraph*{Phase 1 Behavior}
We now show the correctness of each stage of Phase 1 of the protocol. 
To begin, we bound the number of leak events with high probability
within a fixed sequence of productive steps. 

\begin{lemma}
  \label{lemma:dbl-a1.1}
  For a population with initial input margin
  $\epsilon \ge \alpha \sqrt{n \log n}$ for some $\alpha \ge 1$
  and adversarial leak rate $\beta \le \frac{\alpha\sqrt{n \log n}}{12672n}$,
  at any point in Phase 1 of the $\dbam$ protocol,
  the number of leak events among a sequence of
  $66n$ productive steps is no more than
  $(\alpha\sqrt{n \log n})/8$ with probability
  at least $1 - 2n^{-\alpha} - n^{-(\alpha/6)}$ for sufficiently large $n$. 
\end{lemma}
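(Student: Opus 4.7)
The plan is to bound the number of leaks within the stopping-time window of $66n$ productive steps by decoupling the analysis into two independent pieces: first, a lower bound on the per-step probability that a random interaction is a non-leak productive step, which will translate into a high-probability upper bound on the total number of steps $T$ needed to realize $66n$ productive events; and second, a Chernoff concentration on the number of leaks among those $T$ total steps, exploiting the fact that whether a step is a leak is a fair $\beta$-Bernoulli independent of agent selection and configuration.

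For the lower bound on productivity, the per-step probability of a non-leak productive interaction during Phase~1 can be written, using $b + y = n - x$, as
\[
q = (1-\beta) \cdot \frac{2(bx + by + xy)}{n(n-1)} \ge (1-\beta)\cdot\frac{2x(n-x)}{n(n-1)}.
\]
Lemma~\ref{lemma:dbl-x-lb} gives $x \ge n/4$ throughout Phase~1 with probability at least $1 - n^{-\alpha} - n^{-(\alpha/6)}$, and the defining condition of Phase~1 ($\hat{y} \ge n/6$) forces $x \le 5n/6$. Sandwiching $x$ between these constant fractions of $n$ yields $q \ge q_0$ for an absolute constant $q_0 > 0$ (for example, $q_0 = 5/18$ up to small corrections) that comfortably satisfies $q_0 \ge 1/6$.

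Next, the number of productive events among any window of $T$ total steps stochastically dominates a $\mathrm{Binomial}(T,q_0)$ random variable, so a standard Chernoff bound implies that $T' := 132n/q_0 = O(n)$ total steps suffice to realize $66n$ productive events with probability at least $1 - n^{-\alpha}$. Since each of those $T'$ total steps is independently a leak with probability $\beta$, the expected leak count is $\beta T' \le \alpha\sqrt{n\log n}/16$, calibrated by the hypothesis $\beta \le \alpha\sqrt{n\log n}/(12672n)$. Applying an upper Chernoff bound with $\delta = 1$ then shows that the actual leak count exceeds $\alpha\sqrt{n\log n}/8$ with probability at most $\exp(-\mu/3) \le n^{-\alpha}$ for sufficiently large $n$, where the final inequality uses $\sqrt{n\log n} = \omega(\log n)$. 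Union bounding over the failure event of Lemma~\ref{lemma:dbl-x-lb}, the Chernoff bound on $T$, and the Chernoff bound on leaks produces the stated probability $1 - 2n^{-\alpha} - n^{-(\alpha/6)}$.

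The main obstacle is handling the stopping-time structure: ``the sequence of $66n$ productive steps'' is not a deterministic interval, so naively applying Chernoff directly to leaks within this window is subtle. The decoupling above circumvents this by observing that the leak indicators form a deterministic Bernoulli$(\beta)$ process on the total step count, independent of the configuration evolution; thus the leak count within any productive window is stochastically dominated by the leak count within a deterministic window of size $T'$, where the latter is bounded by Chernoff on the productive-step probability $q_0$. Constant-chasing aside, the remaining work is purely verification that the specific numerical constants line up as advertised.
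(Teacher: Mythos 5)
Your proof is correct, but it takes a genuinely different route from the paper's. The paper works entirely inside the sequence of $66n$ productive steps: it computes the probability of a leak \emph{conditioned on} a productive event, $p(l) = \beta\binom{n}{2}/\bigl(\beta\binom{n}{2} + (1-\beta)\phi\bigr)$ with $\phi = b(x+y)+xy \ge x\hat y \ge n^2/24$ (using the same two ingredients you use: $x \ge n/4$ from Lemma~\ref{lemma:dbl-x-lb} and $\hat y \ge n/6$ throughout Phase 1), deduces $p(l) \le 12\beta \le \alpha\sqrt{n\log n}/(1056n)$, and applies an upper Chernoff bound to the $66n$ conditional trials, whose mean leak count is at most $\alpha\sqrt{n\log n}/16$. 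You instead uncondition: you lower-bound the per-raw-step productive probability by a constant, conclude that $T' = O(n)$ raw interaction steps suffice to accumulate the $66n$ productive events with high probability, and then count leaks among those $T'$ raw steps, where they really are i.i.d.\ Bernoulli$(\beta)$; the constant $12672$ calibrates to the same mean $\alpha\sqrt{n\log n}/16$ either way, and the factor-$2$ Chernoff step is identical. What your version buys is a cleaner treatment of the stopping-time issue (the domination by a deterministic window of $T'$ raw steps is explicit, whereas the paper's conditional-trial Chernoff quietly relies on the same kind of domination); what it costs is an extra Chernoff bound on the raw-step count, which the paper defers to Lemma~\ref{lemma:dbl-a1.2}, and consequently your union bound accumulates three $n^{-\alpha}$ terms (the $x \ge n/4$ event, the raw-step bound, and the leak bound) rather than two, so you actually establish $1 - 3n^{-\alpha} - n^{-(\alpha/6)}$ — asymptotically the same, but not literally the stated constant. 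One point worth making explicit: since the paper's notion of ``productive step'' includes leak events, your use of the \emph{non-leak} productive probability is a valid but conservative lower bound on the probability of a productive step, so the $T'$ you derive is indeed an upper bound on the length of the window.
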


\begin{proof}
  Recall that $p(l)$, the probability of a leak event
  conditioned on a productive event can be written as
  \begin{align}
    p(l) =
    \frac{\beta \binom{n}{2}}{\beta\binom{n}{2} + (1 - \beta)\phi}
    \le
    \frac{\beta (n^2/2)}{\beta (n^2/3) + (1-\beta) \phi}
    \label{eq:pl-1}
  \end{align}
  where $\phi = b(x+y) + xy$ and the last inequality holds for
  sufficiently large $n$. To further bound $p(l)$ from above, we
  can observe that 
  \begin{align}
    \phi = b(x + y) + xy
    &= x(y + b) + by \\
    &\ge x\hat y 
  \end{align}
  Now, since $\hat y \ge n/6$ throughout Phase 1, and since by
  Lemma~\ref{lemma:dbl-x-lb}, $x \ge n/4$ holds throughout the entire
  protocol with high probability at least $1 - n^{-\alpha} - n^{-(\alpha/6)}$, 
  it follows that $\phi \ge x \hat y \ge n^2/24$ will
  hold with this same probability throughout all stages of the phase.
  Substituting into~\eqref{eq:pl-1} gives 
  \begin{align}
    p(l)
    \le \frac{\beta (n^2/2)}{\beta (n^2/3) + (1-\beta) \phi}
    &\le \frac{\beta (n^2/2)}{\beta (n^2/3) + (1-\beta) (n^2/24)}\\
    &\le \frac{\beta (n^2/2)}{\beta (7n^2/24) + n^2/24}\\
    &= \frac{1/2}{7/24 + 1/(24\beta)}
    \;\;\le 12\beta
      \label{eq:pl-2}
  \end{align}
  By the assumption that $\beta \le \frac{\alpha\sqrt{n \log n}}{12672n}$,
  this means
  \begin{align}
    p(l)
    \le \frac{12}{12672} \cdot \frac{\alpha\sqrt{n \log n}}{n}
    \le \frac{\alpha\sqrt{n \log n}}{1056n}
  \end{align}
  with high probability throughout all stages of Phase 1.

  In $66n$ productive steps then, the expected number of leaks
  $\E[l]$ is at most $\frac{\alpha\sqrt{n \log n}}{16}$. Using an upper
  Chernoff bound, and letting $l$ denote the number of leak events,
  it follows that
  \begin{align}
    \Pr\left[l > \frac{\alpha\sqrt{n \log n}}{8}\right]
    &=
    \Pr\left[l > 2 \cdot \E[l]\right]\\
    &\le
      \exp\left(- \frac{\alpha\sqrt{n \log n}}{54}\right)
    \le
    \exp\left( - \alpha\log n\right),
  \end{align}
  where the last inequality holds for sufficiently large $n$.
  So by the union bound, with probability at least $1 - 2n^{-\alpha} - n^{-(\alpha/6)}$, 
  the number of leak events within
  $66n$ productive steps is at most $(\alpha\sqrt{n \log n})/8$. \qedhere
\end{proof}

\begin{lemma}
  \label{lemma:dbl-a1.2}
  For a population with initial input margin
  $\epsilon \ge \alpha \sqrt{n \log n}$ for some $\alpha \ge 1$
  and adversarial leak rate $\beta \le \frac{\alpha\sqrt{n \log n}}{12672n}$,
  each stage of Phase 1 in the $\dbam$ protocol completes correctly
  within $2376n$ total interaction events with probability
  at least $1 - 6n^{-\alpha} - 3n^{-(\alpha/6)}$. 
\end{lemma}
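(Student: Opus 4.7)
The plan is to establish correct completion of a generic stage of Phase 1 (starting at $P = \hat x - \hat y = \Delta$ and ending once $P \ge 2\Delta$ or $P \ge 2n/3$) within $2376n$ total interactions by stitching together the three utility lemmas with standard Chernoff concentration. The structure mirrors the non-leak analysis of Lemma~\ref{lemma:doubling-difference} from Section~\ref{sec:dbamc}, with extra bookkeeping to absorb the negative progress introduced by leaks.

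First, I would fix any sequence of $66n$ productive steps beginning at $P = \Delta$. Lemma~\ref{lemma:dbl-a1.1} guarantees that at most $l \le \alpha\sqrt{n\log n}/8 \le \Delta/8$ of these events are leaks, with probability at least $1 - 2n^{-\alpha} - n^{-(\alpha/6)}$. Conditioning on this leak bound, Lemma~\ref{lemma:dbl1-no-double} gives that $P$ stays above $\Delta/2$ throughout the sequence with probability at least $1 - n^{-(\alpha/6)}$.

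Second, I would show that within the same $66n$-productive-step window, $P$ actually advances from $\Delta$ to $2\Delta$. Since $P$ increases by $1$ on a $B+X$, decreases by $1$ on a $B+Y$, is unchanged by $X+Y$, and decreases by $2$ on each leak, advancing to $2\Delta$ requires the number of $B+X$ steps to exceed the number of $B+Y$ steps by at least $\Delta + 2l \le \tfrac{5\Delta}{4}$. Conditioned on $P \ge \Delta/2$, the probability of a $B+X$ given a blank-consuming productive step is at least $\tfrac{1}{2} + \tfrac{\Delta}{8n}$. Applying Lemma~\ref{lemma:dbl-prob-prod} to the non-leak productive portion of the window (which has size at least $66n - l$, with $y_0 \le n$) gives at least roughly $32n$ blank-consuming steps. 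A one-sided Chernoff bound on the $B+X$ count within this blank-consuming sub-sequence (whose expected $B+X - B+Y$ excess is on the order of $8\Delta$ and therefore dominates the required $\tfrac{5\Delta}{4}$) yields the required gap with probability $1 - n^{-\Omega(\alpha)}$; here the assumption $\Delta \ge \alpha\sqrt{n\log n}$ is what makes the Chernoff exponent at least $\alpha \log n$.

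Third, I would convert the $66n$-productive-step guarantee into the claimed $2376n$-total-interaction guarantee. Using Lemma~\ref{lemma:dbl-x-lb} to bound $x \ge n/4$ together with $\hat y \ge n/6$ throughout Phase 1, the per-step probability of a productive interaction is bounded below by a constant (of the order $1/18$, which is exactly what makes the numerical factor $2376 = 36 \cdot 66$ work out). A lower Chernoff bound then yields at least $66n$ productive steps among any $2376n$ total steps with probability $1 - n^{-\alpha}$. A union bound over the failure events of the three steps above, together with the failure probability inherited from Lemma~\ref{lemma:dbl-x-lb} holding throughout, aggregates to $1 - 6n^{-\alpha} - 3n^{-(\alpha/6)}$, matching the claim.

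The main obstacle is the constant-level bookkeeping: verifying that the productive-step probability floor is large enough to justify the constant $36$ in $2376n = 36 \cdot 66n$, and confirming that in step two the bias $\Delta/(8n)$ acting over roughly $32n$ blank-consuming steps gives a Chernoff exponent of at least $\alpha \log n$ even after accounting for the worst-case leak contribution $2l$. The asymptotics are comfortably satisfied since $\Delta^2/n \ge \alpha \log n$ by the input-margin hypothesis, so the main care is in tracking the conditioning chain and the accumulation of failure probabilities into the final union bound.
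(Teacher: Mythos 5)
Your proposal is correct and follows essentially the same route as the paper's proof: bound the leak count in a $66n$-productive-step window via Lemma~\ref{lemma:dbl-a1.1}, use Lemma~\ref{lemma:dbl1-no-double} to keep $\hat x - \hat y$ above $\Delta/2$, extract $32n$ blank-consuming steps via Lemma~\ref{lemma:dbl-prob-prod}, apply a Chernoff bound to get the required $5\Delta/4$ excess of $B+X$ over $B+Y$ steps, and convert to $2376n = 36\cdot 66n$ total steps using the $1/18$ lower bound on the productive-step probability from Lemma~\ref{lemma:dbl-x-lb}. The only cosmetic difference is your slightly weaker conditional bias bound of $\tfrac{1}{2}+\tfrac{\Delta}{8n}$ versus the paper's $\tfrac{1}{2}+\tfrac{\Delta_0}{4n}$, which does not affect the conclusion.
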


\begin{proof}

  By Lemma~\ref{lemma:dbl-a1.1}, the number of leak events within
  $66n$ productive steps is at most $(\alpha\sqrt{n \log n})/8$ with 
  probability at least  $1 - 2n^{-\alpha} - n^{-(\alpha/6)}$.
  Since every leak event decrements the progress measure $\hat x - \hat y$
  by 2, this means leak events decrease $\hat x - \hat y$  
  in magnitude by at most $(\alpha\sqrt{n \log n})/4$ with this same 
  probability throughout the sequence.

  So within the sequence of $66n$ productive steps, at least $65n$ 
  are non-leak productive steps. Thus in order to ensure $\hat x - \hat y$ 
  doubles from its initial
  value $\Delta_0$, it is sufficient to show that among a smaller sequence
  of $\lambda \le 66n$ blank-consuming productive steps, the number of $X + B$
  interactions will exceed the number of $Y + B$ interactions by at least
  $\Delta_0 + (\alpha\sqrt{n \log n})/4$ with high probability.
  Note by Lemma~\ref{lemma:dbl-prob-prod} that in a sequence of
  at least $65n$ non-leak productive steps during Phase 1 of 
  the protocol (where $y \le n$),
  at least $32n$ will be blank-consuming productive steps. 
  
  Now, recall that when $\hat x - \hat y \ge \Delta_0/2$, the probability of
  a $X+B$ interaction conditioned on a blank-consuming interaction ---
  which we will denote by $p(bx)$ --- can be bounded below by
  \begin{align}
    p(bx) = \frac{bx}{bx + xy}
    \ge \frac{x}{x+y}
    \ge \frac{x + b/2}{x+y+b}
    \ge \frac{\hat x}{n}
    \ge \frac{1}{2} + \frac{\Delta_0}{4n}.
    \label{eq:pbx-1}
  \end{align}
  This also implies that the corresponding conditional probability 
  of a $Y+B$ interaction (denoted by $p(by)$) is at most
  \begin{align}
    p(by) = 1 - p(bx) \ge \frac{1}{2} - \frac{\Delta_0}{4n}.
    \label{eq:pby-1}
  \end{align}
  By Lemma~\ref{lemma:dbl1-no-double}, $\hat x - \hat y \ge \Delta_0/2$
  will hold with probability at least $1 - n^{-(\alpha/6)}$ throughout
  every stage of Phase 1, and so the bounds in 
  \eqref{eq:pbx-1} and \eqref{eq:pby-1} hold with the same probability. 
  
  Now we will show that over the sequence of $\lambda \le 66n$ 
  blank-consuming steps, the number of $X+B$ interactions 
  (denoted by $S_{bx}$) will exceed the number of 
  $Y+B$ interactions (denoted by $S_{by}$) by at least
  $\Delta_0 + (\alpha \sqrt{n \log n})/4$ with high probability. 
  To do this, first notice that using the upper bound on $p(by)$
  from \eqref{eq:pby-1}, we have in expectation 
  \begin{align}
    \E[S_{by}] = \lambda \cdot p(by)
    \le \frac{\lambda}{2} - \frac{\lambda \Delta_0}{4n}.
  \end{align}
  As long as $S_{by} < \frac{\lambda}{2} - \frac{5\Delta_0}{8}$,
  then $S_{bx} - S_{by} > \frac{5\Delta_0}{4} \ge \Delta_0 + 
  (\alpha \sqrt{n \log n})/4$ as required. 
  Using an upper Chernoff bound, we find that the probability of
  the event $S_{by} \ge \frac{\lambda}{2} - \frac{5\Delta_0}{8}$ is
  bounded by
  \begin{align}
      \Pr\left[S_{by} \ge \frac{\lambda}{2} - \frac{5\Delta_0}{8}\right]
      &=
      \Pr\left[S_{by} \ge \E[S_{by}] 
      \left(1 + \frac{\Delta_0}{\E[S_{by}]}
      \left(\frac{8\lambda - 20n}{32n}\right)
      \right)\right] \\
      &\le
      \exp\left(
        -\frac{1}{3}\cdot
        \frac{\Delta_0^2 (8\lambda-20n)^2}{(32n)^2}
        \cdot
        \frac{4}{\lambda}
      \right) \\
      &\le
      \exp\left(
      - 1.05 \cdot \alpha^2 \log n 
      \right) \\
      &\le n^{-\alpha}
  \end{align}
  where in the penultimate inequality we use the fact that
  $\Delta_0 \ge \alpha \sqrt{n \log n}$ and $32 n \lambda \le 66n$. 
  
  Thus within the $32 n \lambda \le 66n$ blank-consuming steps 
  that occur among the greater sequence of $66n$ productive steps,
  the progress measure $\hat x - \hat y$ will increase
  by at least $5\Delta_0/4$ with high probability. This is enough 
  to both offset the maximum decrease in $\hat x - \hat y$ 
  incurred by the leak events, while also allowing the 
  progress measure to double from its original value with high 
  probability. 

  To complete the proof, we will compute the number of
  total interaction events needed to obtain with high probability
  the requisite $66n$ productive steps that can be used to
  complete a stage of Phase 1 correctly. We denote by
  $p(prod)$ the probability that the next event is any productive step,
  and so
  \begin{align}
    p(prod)
    =
    \frac{(1-\beta) \phi}{\binom{n}{2}} + \beta
    \ge
    \frac{(1-\beta) \phi}{n^2/2},
    \label{eq:dbl-a1.2-prod}
  \end{align}
  where $\phi = b(x+y) + xy \ge x\cdot \hat y$. Since $\hat y \ge n/6$
  throughout all stage of Phase 1, and since by Lemma~\ref{lemma:dbl-x-lb}
  $x \ge n/4$ holds with probability at least $1 - n^{-\alpha}$ throughout
  Phase 1 of the protocol, it follows that $\phi \ge n^2/24$ with probability
  at least $1 - 2n^{-\alpha} - n^{-(\alpha/6)}$ throughout the first phase.
  Substituting this back into \eqref{eq:dbl-a1.2-prod} gives
  \begin{align}
    p(prod)
    \ge
    \frac{(1-\beta)\phi}{n^2/2}
    &\ge
    \frac{(1-\beta)(n^2/24)}{n^2/2}\\
    &\ge
      \frac{(2/3)(n^2/24)}{n^2/2}
      = \frac{1}{18}
  \end{align}
  where the final inequality is due to
  $\beta \le 1/3$ when $\beta \le O(\sqrt{n \log n}/n)$, which holds
  for sufficiently large $n$.

  Thus in a sequence of $2376n$ total steps, the expected number of
  productive events is at least
  \begin{align}
    \E[prod] = 
    2376 n \cdot p(prod)
    \ge 2376 n \cdot \frac{1}{18}
    = 132 n.
  \end{align}
  Then by applying an upper Chernoff bound, we can see that
  \begin{align}
    \Pr[\; prod \le 66n \;]
    =
    \Pr[\; prod \le (1/2) \cdot \E[prod]\;] 
    &\le
    \exp\left(
    -\frac{132n}{8}
    \right) \\
    &\le
    \exp\left(-n \right)
    \le n^{-\alpha}
  \end{align}
  for any $\alpha \ge 1$ for sufficiently large $n$.
  Thus assuming the lower bound on $p(prod)$, with probability 
  at least $1 - n^{-\alpha}$, the $66n$
  productive steps needed to complete a stage whp of Phase 1 are
  obtained within $2376n$ total steps.

  Now summing all error probabilities and taking a union bound,
  we conclude that $2376n$ total steps are sufficient to complete
  each stage of Phase 1 correctly with probability at least 
  $1 - 6n^{-\alpha} - 3n^{-(\alpha/6)}$.
\end{proof}


\paragraph*{Phase 2 Behavior}
The next set of lemmas give analogous leak event and total step bounds
to show the successful completion of stages of Phase 2 in the protocol,
despite the presence of adversarial leaks.

\begin{lemma}
  \label{lemma:dbl-a2.1-largeB}
  During Phase 2 of the $\dbam$ protocol on a population with adversarial
  leak rate $\omega(\log n/n) \le \beta \le \frac{\alpha \sqrt{n \log n}}{12672n}$,
  starting at any point during the stage that begins with $\hat y = n/k$
  for $6 \le k \le 1/(50\beta)$, the number of leak events among a sequence
  of $90n/k$ productive steps is no more than $6n/k$
  with probability at least $1 - 2n^{-a}$, for $a \ge 1 $ when
  $n$ is sufficiently large.
\end{lemma}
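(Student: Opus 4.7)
The proof plan is to mirror Lemma~\ref{lemma:dbl-a1.1}: establish a lower bound on the productive mass $\phi = b(x+y) + xy$ throughout the stage, use this to upper bound the conditional leak probability $p(\ell) = \beta\binom{n}{2}/[\beta\binom{n}{2} + (1-\beta)\phi]$, and then apply an upper Chernoff tail bound to the leak count.

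The key new ingredient compared to Phase~1 is the lower bound on $\hat y$. Throughout a stage beginning at $\hat y = n/k$, the progress measure satisfies $\hat y > \max\{n/(2k),\, 50\beta n\} \ge 50\beta n$, since the stage terminates only once $\hat y$ first drops to this threshold. Because $90n/k \le 15n$ for $k \ge 6$, the sequence fits inside the $16n$-productive-step window of Lemma~\ref{lemma:dbl-x-lb2}, which (under the hypothesis of at most $n/24$ leaks in the window) gives $x \ge n/3$ with probability at least $1-n^{-a}$; combining yields $\phi \ge x\hat y \ge 50\beta n^2/3$. Substituting into $p(\ell)$ and dropping the nonnegative $\beta\binom{n}{2}$ term in the denominator,
\[
p(\ell) \;\le\; \frac{\beta n^2/2}{(1-\beta)\cdot 50\beta n^2/3} \;=\; \frac{3}{100(1-\beta)} \;\le\; \frac{3}{50},
\]
using $\beta \le 1/2$ (which holds under the hypothesis $\beta \le \alpha\sqrt{n\log n}/12672n$ for large $n$). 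Hence the expected number of leaks in the sequence of $90n/k$ productive events is at most $(90n/k)(3/50) = 5.4 n/k$, a constant factor below the target $6n/k$. An upper Chernoff/Bernstein bound---handling the mild dependence among the indicator variables by a standard stochastic-domination argument, since the per-step conditional leak probability is uniformly bounded by $3/50$ on the good event---then yields
\[
\Pr[\ell > 6n/k] \;\le\; \exp\!\bigl(-\Omega(n/k)\bigr) \;\le\; n^{-a},
\]
where the final inequality uses $n/k \ge 50\beta n = \omega(\log n)$, which follows from $k \le 1/(50\beta)$ and $\beta = \omega(\log n/n)$. A union bound with the $x \ge n/3$ event of Lemma~\ref{lemma:dbl-x-lb2} then yields the claimed $1 - 2n^{-a}$ probability.

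The main subtlety I expect to wrestle with is a mild circularity when invoking Lemma~\ref{lemma:dbl-x-lb2}, whose hypothesis itself requires the window leak count to be at most $n/24$---essentially the same kind of bound we are trying to prove. For $k \ge 144$ the conclusion $6n/k \le n/24$ closes the argument self-consistently. For $6 \le k < 144$, I would first run a coarser preliminary Chernoff pass using a weaker $\phi$ lower bound (available from $\hat y \ge n/(2k)$ together with a Phase-2 analogue of the Azuma bound on $b$ used in Lemma~\ref{lemma:dbl-x-lb}) to secure $\ell \le n/24$ in the window, and only then invoke Lemma~\ref{lemma:dbl-x-lb2} and the tight estimate above. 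This bootstrap mirrors the interplay between Lemmas~\ref{lemma:dbl1-no-double}, \ref{lemma:dbl-x-lb}, and~\ref{lemma:dbl-a1.1} in the Phase~1 analysis.
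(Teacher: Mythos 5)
Your proposal follows the same skeleton as the paper's proof: lower-bound $\phi = b(x+y)+xy$ by $x\hat y$ (using Lemma~\ref{lemma:dbl-x-lb2} for $x \ge n/3$), convert this into an upper bound on the conditional leak probability, and finish with a multiplicative Chernoff bound plus a union bound. The one substantive difference is which branch of the stage's termination threshold you use to lower-bound $\hat y$: you take $\hat y \ge 50\beta n$, giving the constant bound $p(\ell) \le 3/50$ and expected leak count $5.4\,n/k$, whereas the paper takes $\hat y \ge n/(2k)$, giving $p(\ell) \le 3k\beta$ and expected count $270 n\beta \le 5.4\,n/k$; both sit a factor $10/9$ below the $6n/k$ target, and both concentration arguments go through because $n/k \ge 50\beta n = \omega(\log n)$. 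Two remarks. First, your assertion that $\hat y > 50\beta n$ throughout \emph{every} stage fails for the final stage of Phase~2, which starts at $\hat y = 50\beta n$ and ends only at $25\beta n$; there one has only $\hat y \ge 25\beta n = n/(2k)$. Your constants survive regardless, because the factor-of-two slack you budgeted via ``$\beta \le 1/2$'' is not actually needed ($\beta = o(1)$ under the stated cap), but the paper's choice of $n/(2k)$ is the bound that holds uniformly across all stages. Second, the circularity you flag in invoking Lemma~\ref{lemma:dbl-x-lb2} is genuine --- its hypothesis caps the window's leak count at $n/24$, which the conclusion $6n/k$ only implies for $k \ge 144$ --- and the paper's proof simply does not address it. Your bootstrap (or the simpler observation that the expected leak count is $O(n\beta) = O(\sqrt{n\log n}) \ll n/24$ under the assumed cap on $\beta$, so the hypothesis can be verified directly) is a legitimate repair that the paper's own argument arguably also requires.
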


\begin{proof}
  Recall that when the leak rate $\beta$ is large, a stage of Phase 2
  begins with $\hat y = n/k$ for $6 \le k \le 1/(50\beta)$ and ends once
  $\hat y$ decreases by a factor of 2.
  
  These means that, throughout the stage that begins with $\hat y = n/k$,
  the quantity
  \begin{align}
    \phi = b(x+y) + xy \ge x\cdot \hat y
    \ge \frac{n}{3} \cdot \frac{n}{2k}
    = \frac{n^2}{6k}.
    \label{eq:a2.1-large-phi}
  \end{align}
  Here, the inequality holds from Lemma~\ref{lemma:dbl-x-lb2}, which says
  that $x \ge n/3$ with probability at least $1 - n^{-a}$ throughout
  Phase 2 of the protocol.

  As before, letting $p(l)$ denote the probability of a leak event
  conditioned on any productive step, we then have
  \begin{align}
    p(l) \le
    \frac{(n^2/2)\beta}{(n^2/3)\beta + (1-\beta)\phi}
    &\le
      \frac{(n^2/2)\beta}{(n^2/3)\beta + n^2/6k - (n^2/6k)\beta} \\
    &\le
      \frac{(n^2/2)\beta}{(11n^2/36)\beta + n^2/6k} \;\;,
  \end{align}
  where the final inequality holds given that we are assuming $k\ge 6$.
  Simplifying further then gives
  \begin{align}
    p(l) \le \frac{1/2}{11/36 + 1/(6k\beta)}
    \le
    \frac{1/2}{1/(6k\beta)}
    \le 3 k \beta.
    \label{eq:a2.1-pl}
  \end{align}
  
  Now, consider a sequence of $90n/k$ total productive events.
  Letting $l$ denote the number of leak events among
  this sequence, we then have
  \begin{align}
    \E[l] = p(l) \cdot \frac{90n}{k}
    \le (3k\beta) \cdot \frac{90n}{k}
    = 270 n \beta.
  \end{align}
  Using an upper Chernoff bound then shows that
  \begin{align}
    \Pr[\; l > 300 n \beta \;]
    &=
      \Pr[\; l > (1 + 1/9) \cdot \E[l] \;] \\
    &\le
      \exp\left(- \frac{1}{3\cdot 81}\cdot 270 n \beta\right) \\
    &\le
      \exp\left(-n\beta\right).
  \end{align}
  Given the assumption that $\beta \ge \omega(\log n /n)$, it follows
  that $n \cdot \beta \ge \omega(\log n)$, which means that
  we can further write
  \begin{align}
    \Pr[\; l > 300n\beta \;] \le \exp(- \omega(\log n)) \le n^{-a}
  \end{align}
  for $a \ge 1$ when $n$ is sufficiently large.
  Since we also only consider $6 \le k \le 1/(50\beta)$, it follows that
  $300n \beta \le 6n/k$. 
  
  Thus summing over all error probabilities (including the one
  used to derive the lower bound on $\phi$ in \eqref{eq:a2.1-large-phi}),
  we have that with probability at least $1 - 2n^{-a}$,
  the number of leak events within a sequence of $90n/k$ total productive
  steps is at most $6n/k$. 
\end{proof}

\begin{lemma}
  \label{lemma:dbl-a2.1-smallB}
  During Phase 2 of the $\dbam$ protocol with adversarial leak rate
  $\beta \le a \log n/n$, starting at any time during the
  stage that begins with $\hat y = n/k$ for $6 \le k \le n/(50 a \log n)$,
  the number of leak events among a sequence of \ $90n/k$ productive steps
  is no more than $6n/k$ with probability at least $1 - 2n^{-a}$ for
  $a \ge 1$ and sufficiently large $n$. 
\end{lemma}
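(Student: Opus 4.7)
The plan is to mirror the structure of the proof of Lemma~\ref{lemma:dbl-a2.1-largeB}, since only the final Chernoff step needs to change to accommodate the smaller leak rate. As before, I would start by using Lemma~\ref{lemma:dbl-x-lb2}, which guarantees $x \ge n/3$ with probability at least $1 - n^{-a}$ throughout any sequence of at most $16n$ productive steps during Phase 2. Since $90n/k \le 16n$ when $k \ge 6$, this applies, and together with the bound $\hat y \ge n/(2k)$ that holds up until the stage completes, it gives $\phi = b(x+y) + xy \ge x \hat y \ge n^2/(6k)$. Substituting into the expression for $p(l)$ exactly as in Lemma~\ref{lemma:dbl-a2.1-largeB} (using $k \ge 6$ to absorb the $(1-\beta)$ term) yields $p(l) \le 3k\beta$.

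Next, I would compute the expected number of leaks in a sequence of $90n/k$ productive steps: $\E[l] \le 3k\beta \cdot 90n/k = 270 n\beta$. This is where the argument diverges from the large-$\beta$ case. Using the assumption $\beta \le a \log n/n$, we obtain $\E[l] \le 270 a \log n$, while the threshold we need to beat is $6n/k \ge 300 a \log n$, which holds uniformly because $k \le n/(50 a \log n)$. Thus, even in the worst case where $\beta$ saturates its upper bound, the threshold exceeds the mean by a factor of at least $10/9$, so I can apply a standard multiplicative upper Chernoff bound with $\delta \ge 1/9$ to conclude
\begin{align*}
\Pr[\,l > 6n/k\,]
\;\le\;
\exp\!\left(-\tfrac{(1/9)^2}{3}\cdot 270 a \log n\right)
\;=\;
\exp\!\left(-\tfrac{10 a \log n}{9}\right)
\;\le\; n^{-a}.
\end{align*}
When $\beta$ is smaller than its maximum, $\E[l]$ is smaller, the effective $\delta$ is larger, and the Chernoff tail only improves.

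Finally, a union bound over the failure event for Lemma~\ref{lemma:dbl-x-lb2} and the Chernoff event above gives the claimed probability of $1 - 2n^{-a}$.

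The main obstacle is the conceptual one of selecting the right tail bound: unlike the large-$\beta$ regime where both $\E[l]$ and the threshold scale with $n\beta$ and the argument uses $n\beta = \omega(\log n)$ to drive the tail below $n^{-a}$, here $\E[l]$ and the threshold are only comparable after bounding $\beta$ by its maximum $a \log n/n$. The point to verify is that this worst-case identification is valid (which is immediate, since increasing $\beta$ stochastically dominates smaller $\beta$) and that the constants chosen in the lemma statement ($90$, $6$, and $50$) fit together to leave a multiplicative slack of $10/9$ between the threshold and the mean — this is precisely why the bound $k \le n/(50 a \log n)$ matches the constants in $90n/k$ productive steps.
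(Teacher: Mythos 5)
Your proposal is correct and follows essentially the same route as the paper's proof: both reuse the bound $p(l)\le 3k\beta$ from the large-$\beta$ lemma, saturate $\beta$ at $a\log n/n$ to get $\E[l]\le 270a\log n$, apply an upper Chernoff bound with $\delta=1/9$ against the threshold $300a\log n\le 6n/k$, and union-bound with the $x\ge n/3$ event to obtain $1-2n^{-a}$. The constants and the final exponent $\tfrac{10}{9}a\log n$ match the paper exactly.
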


\begin{proof}
  Recall from expression ~\eqref{eq:a2.1-pl} in the proof of
  Lemma~\ref{lemma:dbl-a2.1-largeB} that for the stage of Phase 2
  beginning at $\hat y = n/k$, the probability of a leak event
  conditioned on an productive step is
  \begin{align}
    p(l)  \le 3 \beta k \le \frac{3 a \log n}{n} \cdot k,
  \end{align}
  since we are assuming $\beta \le (a \log n) / n$, where
  the bound holds with probability at least $1 - n^{-a}$. 

  Consider now a sequence of $90n/k$ total productive steps. Again using
  $l$ to denote the number of leak events that occur among this sequence,
  we have
  \begin{align}
    \E[l] = \frac{90n}{k} \cdot p(l)
    \le \frac{90n}{k} \cdot \frac{3a \log n }{n} \cdot k
    \le 270 a \log n.
  \end{align}

  Using an upper Chernoff bound then shows that the probability of
  having more than $300 a \log n$ leak events in this sequence is at most
  \begin{align}
    \Pr[\; l > 300 a \log n \;]
    &=
      \Pr[\; l > (1 + 1/9) \E[l] \;] \\
    &\le
      \exp\left(
      - \frac{1}{3 \cdot 81} \cdot 270 a \log n
      \right) \\
    &\le
      \exp\left(
      - a \log n
      \right)
      \le n^{-a}.
  \end{align}
  
  Since we assume that $6 \le k \le n/(50 a \log n)$, it follows that
  $300 a \log n \le 6n/k$.
  Then summing over all error probabilities,
  we have that the number of leak events within the sequence of
  $90n/k$ productive steps is at most $6n/k$ with probability
  at least $1 - 2n^{-a}$. 
\end{proof}


\begin{lemma}
  \label{lemma:dbl-a2.2-largeB}
  During the $\dbam$ protocol on a population with adversarial leak rate
  $\omega(\log n /n) \le \beta \le (\alpha\sqrt{n \log n})/(12672n)$,
  each stage of Phase 2 with initial value $\hat y = n/k$ for
  $6 \le k \le 1/(50\beta)$ completes correctly within
  $1080n$ total steps with probability at least $1-6n^{-a}$ 
  for $a \ge 1$ when $n$ is sufficiently large. 
\end{lemma}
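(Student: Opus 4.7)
The plan is to adapt the structure of the proof of Lemma~\ref{lemma:dbl-a1.2} to the Phase 2 setting, combining the leak count bound from Lemma~\ref{lemma:dbl-a2.1-largeB} with the constant bias on $B{+}X$ interactions from Lemma~\ref{lemma:dbl2-no-double}. The key quantity to control is the progress measure $\hat{y}$: it must drop from $n/k$ to $n/(2k)$, each $X\to Y$ leak contributes $+1$ to $\hat{y}$, each $B{+}X$ contributes $-1/2$, and each $B{+}Y$ contributes $+1/2$. So if $L$ is the number of leaks and $S_x, S_y$ are the $B{+}X$ and $B{+}Y$ counts within the stage, we need $S_x - S_y \ge 2L + n/k$ to finish correctly.

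First, I would lower bound $p(prod)$ throughout the stage. Since $\hat{y}\ge n/(2k)$ before the stage ends and $x\ge n/3$ by Lemma~\ref{lemma:dbl-x-lb2} (whp), we have $\phi = b(x{+}y)+xy \ge x\hat{y} \ge n^2/(6k)$, giving $p(prod)\ge (1-\beta)\cdot 2\phi/n^2 \ge 2/(9k)$ for $\beta\le 1/3$. A lower Chernoff bound then shows that within $1080n$ total steps, the number of productive events is at least (half its expectation, i.e.) a sufficiently large multiple of $n/k$ with probability at least $1-n^{-a}$, large enough to apply the leak bound of Lemma~\ref{lemma:dbl-a2.1-largeB} and leave room for bias.

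Next, I would invoke Lemma~\ref{lemma:dbl-a2.1-largeB} on this productive-step sequence to get $L\le 6n/k$ whp, and use Lemma~\ref{lemma:dbl-prob-prod} (with $y_0\le \hat{y}_0 = n/k$) to lower bound the number of blank-consuming non-leak productive steps $\lambda_b$ in the sequence. Because the stage and the bounded leak count keep $\hat{x}-\hat{y}\ge n/2$ whp, Lemma~\ref{lemma:dbl2-no-double} then supplies $p(bx)\ge 5/8$ conditioned on each blank-consuming step, so $\E[S_x-S_y]\ge \lambda_b/4$. An upper Chernoff bound on $S_y$ (equivalently a lower Chernoff bound on $S_x$) then shows $S_x-S_y\ge 2L + n/k$ with probability at least $1-n^{-a}$, which is exactly the condition guaranteeing that $\hat{y}$ drops to $\max\{n/(2k), 50\beta n\}$.

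The final step is a union bound over the six bad events: the lower Chernoff bound on the productive-event count, the $\phi$-based $x\ge n/3$ bound from Lemma~\ref{lemma:dbl-x-lb2}, the two bad events inside Lemma~\ref{lemma:dbl-a2.1-largeB}, the failure of Lemma~\ref{lemma:dbl2-no-double}, and the deviation of $S_x-S_y$ from its expectation; each is at most $n^{-a}$, giving the claimed $1-6n^{-a}$. The main obstacle will be arithmetic: choosing the Chernoff slack and multiplicative constants so that $\lambda_b/4$ comfortably exceeds $2\cdot 6n/k + n/k = 13n/k$, especially at the boundary $k\approx 1/(50\beta)$ where leaks are densest relative to blank-consuming steps. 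This forces $1080n$ total steps to be large enough to yield on the order of $(\text{a few hundred})\cdot n/k$ blank-consuming steps, and the constants in Lemma~\ref{lemma:dbl-a2.1-largeB} and Lemma~\ref{lemma:dbl-prob-prod} must be tracked carefully to guarantee the bias overcomes both the leak contribution and the Chernoff deviation margin.
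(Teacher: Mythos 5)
Your proposal follows essentially the same route as the paper's proof: bound the leak count in a fixed window of productive steps via Lemma~\ref{lemma:dbl-a2.1-largeB}, extract blank-consuming steps via Lemma~\ref{lemma:dbl-prob-prod}, use the $p(bx)\ge 5/8$ bias from Lemma~\ref{lemma:dbl2-no-double} with a Chernoff bound to get the required excess of $B{+}X$ over $B{+}Y$ steps, lower bound $p(prod)$ via $\phi \ge x\hat y \ge n^2/(6k)$ to convert productive steps to the $1080n$ total steps, and union bound. Your accounting of the leak offset ($S_x - S_y \ge 2L + n/k = 13n/k$, since each leak raises $\hat y$ by $1$ while each blank-consuming step moves it by $1/2$) is in fact more careful than the paper's stated target of $8n/k$, and your closing remark that the window of productive steps must be sized so that $\lambda_b/4$ comfortably exceeds $13n/k$ correctly identifies the one place where the constants need adjusting.
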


\begin{proof}

  Consider a sequence of $90n/k$ productive steps at the start of the stage
  when initially $\hat y = n/k$. By Lemma~\ref{lemma:dbl-a2.1-largeB},
  the number of leak events among such a sequence is at most $6n/k$
  with probability at least $1-2n^{-a}$.
  In order to ensure $\hat y$ decreases by a factor of two from its
  initial value, it is sufficient to show that among the remaining 
  at least $84n/k$ non-leak productive steps in the sequence, 
  the number of $X+B$ interactions exceeds the number of $Y+B$ by
  at least $2n/k + 6n/k = 8n/k$. 
  
  
  Consider the subsequence of $\lambda \le 90n/k$ blank-consuming 
  productive steps. Given that at least $84n/k$ of the productive steps 
  will be non-leak steps with high probability, Lemma~\ref{lemma:dbl-prob-prod}
  shows that $\lambda \ge 40n/k$, so $40n/k \le \lambda \le 90n/k$. 
  Now by Lemma~\ref{lemma:dbl2-no-double}, we have
  that the probability of a $X+B$ interaction conditioned on
  a blank-consuming step, which we denote by $p(bx)$ is at
  least $5/8$, which holds with probability at least $1 - n^{-a}$. 
  Then letting $S_{bx}$ and $S_{by}$ denote the the number of $X+B$
  and $Y+B$ interactions among the sequence of blank consuming steps
  respectively, we have in expectation that
  \begin{align}
    \E[S_{bx}] = p(bx) \cdot \lambda \ge \frac{5\lambda}{8}.
  \end{align}
  
  To ensure $S_{bx} - S_{by} \ge 8n/k$, it is sufficient to have
  $S_{bx} > \lambda/2 + 4n/k = (\lambda k + 4n)/2k$. 
  Using a lower Chernoff bound shows that the probability of the
  event $S_{bx} \le (\lambda k + 4n)/2k$ is bounded by
  \begin{align}
    \Pr\left[S_{bx} \le (\lambda k + 4n)/2k \right]
    &= 
    \Pr\left[
      S_{bx} \le \E[S_{bx}]
      \left(
      1 - \tfrac{1}{\E[S_{bx}]}\left(\tfrac{2\lambda k - 32n}{16k}\right)
      \right)
    \right] \\
    &\le
    \exp\left(
    - \frac{1}{2}\cdot
    \frac{(2\lambda k - 32n)^2}{16^2 k^2} \cdot
    \frac{8}{5\lambda}
    \right) \\
    &\le
    \exp\left(
    - 4 \cdot \frac{n}{k}
    \right) \\
    &\le
    \exp\left(
    - \omega(\log n)
    \right) \\
    &\le
    n^{-a}
  \end{align}
  for any $a \ge 1$ when $n$ is sufficiently large.
  Here, in the third inequality we use the fact that $40n/k \le \lambda \le 90n/k$, 
  and in the fourth inequality we use the assumption that 
  $k \le 1/(50\beta)$ and $\beta \ge \omega(w\log n/ n)$. 
  So within the subsequence of $\lambda$ blank-consuming productive
  steps, we have that $S_{bx} - S_{by} \ge 8n/k$ holds with high
  probability as needed.

  To complete the proof, we compute the number of \textit{total}
  steps required to obtain with high probability the $90n/k$
  productive events needed to complete the stage of Phase 2
  correctly.
  For this, note that the probability that the next
  interaction event is a productive step, which we denote
  by $p(prod)$ is
  \begin{align}
    p(prod) = \frac{(1-\beta)\phi}{\binom{n}{2}} + \beta
    \ge
    \frac{\phi - \beta \phi}{n^2/2}
    \label{eq:a2.2-pprod-large}
  \end{align}
  where $\phi = b(x+y) + xy \ge x\hat y$.
  Throughout the stage beginning with $\hat y = n/k$, we have
  $\hat y \ge n/(2k)$, and by Lemma~\ref{lemma:dbl-x-lb2}
  $x \ge n/3$ with probability at least $1-n^{-a}$ throughout
  the entirety of Phase 2. Together, this means that
  $\phi \ge n^2/6k$, and substituting back into ~\eqref{eq:a2.2-pprod-large}
  gives
  \begin{align}
    p(prod) \ge
    \frac{n^2/6k - \beta n^2/(6k)}{n^2/2}
    \ge
    \frac{n^2/12k}{n^2/2}
    =
    \frac{1}{6k},
  \end{align}
  where the final inequality holds from observing $\beta n^2/(6k) \le n^2/12k$
  for sufficiently large $n$ when $\beta \le O(\sqrt{n \log n}/n)$.
  So for the stage of phase 2 starting at $\hat y = n/k$, in a sequence
  of $1080n$ total interactions, the expected number of productive
  steps, denoted by $\E[prod]$ is at least
  \begin{align}
    \E[prod] \ge 1080n \cdot p(prod)
    \ge  \frac{1080}{6k} = \frac{180n}{k}.
  \end{align}
  By an upper Chernoff bound then, we have
  \begin{align}
    \Pr[\; prod < 90n/k \;]
    &=
      \Pr[\; prod < (1/2) \cdot \E[prod] \;]\\
    &\le
      \exp\left(
      - \frac{180}{8}\cdot \frac{n}{k}
      \right) \\
    &\le
      \exp\left(
      - \omega(\log n)
      \right)
      \le n^{-a}
  \end{align}
  for $a \ge 1$ when $n$ is sufficiently large $n$. So
  at least $90n/k$ in a sequence of $1080n$ total interactions will
  be productive events with high probability. 

  Now, by summing over all error probabilities and taking a union bound,
  we have that with large leak rate $\beta$, each stage of Phase 2 will
  complete correctly within $1080n$ total interactions with probability
  at least $1 - 6n^{-a}$ for $a \ge 1$ when $n$ is sufficiently large. \qedhere
\end{proof}

For the case when the leak rate $\beta$ is small (i.e. when
$\beta \le O(\log n/n)$, we can similarly show that each stage
of Phase 2 completes within $O(n)$ total steps with
high probability via the following lemma.

\begin{lemma}
  \label{lemma:dbl-a2.2-smallB}
  During the $\dbam$ protocol on a population with adversarial leak
  rate $\beta \le a \log n /n$, each stage of Phase 2 with initial
  value $\hat y = n/k$ for $6 \le k \le n/(50 a \log n)$ completes
  correctly within $1080n$ total steps with probability at least
  $1 - 6n^{-a}$ for $a \ge 1$ when $n$ is sufficiently large. 
\end{lemma}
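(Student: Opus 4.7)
The plan is to mirror the structure of the proof of Lemma~\ref{lemma:dbl-a2.2-largeB}, substituting the small-$\beta$ leak bound of Lemma~\ref{lemma:dbl-a2.1-smallB} in place of its large-$\beta$ counterpart, and replacing the concentration arguments that used $\beta n \ge \omega(\log n)$ with ones that exploit the bound $n/k \ge 50 a \log n$ coming from the hypothesis $k \le n/(50 a \log n)$.

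First, I would consider a sequence of $90n/k$ productive steps starting at $\hat y = n/k$. By Lemma~\ref{lemma:dbl-a2.1-smallB}, at most $6n/k$ of these are leak events with probability at least $1-2n^{-a}$. Since each leak decreases $\hat y$ by $1$ in magnitude (an $X \to Y$ flip shifts $\hat x - \hat y$ by two, equivalently $\hat y$ by one), to guarantee $\hat y$ halves from $n/k$ to $n/(2k)$ it suffices to show that among the remaining $84n/k$ non-leak productive steps, the number of $B+X$ steps exceeds the number of $B+Y$ steps by at least $2n/k + 6n/k = 8n/k$. By Lemma~\ref{lemma:dbl-prob-prod}, since the initial count of $Y$ agents satisfies $y_0 \le \hat y \le n/k$, at least $(84n/k - n/k)/2 > 40n/k$ of these non-leak productive steps are blank-consuming.

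Next, within this sub-sequence of at least $40n/k$ blank-consuming steps, Lemma~\ref{lemma:dbl2-no-double} gives $p(bx) \ge 5/8$ with probability at least $1 - n^{-a}$, so the expected number of $B+X$ steps is at least $25n/k$ and the expected excess over $B+Y$ steps is at least $10n/k$. A lower Chernoff bound then shows that the probability of the excess falling below $8n/k$ (equivalently, $S_x$ falling below $24n/k$, i.e. more than $n/k$ below its mean) is at most
\begin{align*}
\exp\!\left( -\frac{1}{2\cdot 25}\cdot \frac{n}{k}\right)
\;\le\; \exp\!\left( - \frac{n}{50k} \right)
\;\le\; \exp(- a \log n) \;\le\; n^{-a},
\end{align*}
where the key step (replacing the $\beta n \ge \omega(\log n)$ move from the large-$\beta$ proof) uses the assumption $k \le n/(50 a \log n)$, which gives $n/(50k) \ge a \log n$. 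This is the one step requiring genuine care; everything else is routine substitution.

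Finally, to convert productive steps into total steps, I would lower-bound $p(prod)$ exactly as in Lemma~\ref{lemma:dbl-a2.2-largeB}: using $\phi \ge x \hat y \ge (n/3)(n/(2k)) = n^2/(6k)$ from Lemma~\ref{lemma:dbl-x-lb2} (valid with probability $1-n^{-a}$), and noting $\beta \le a \log n / n \to 0$, we obtain $p(prod) \ge 1/(6k)$. In $1080n$ total steps the expected number of productive steps is then at least $180n/k$, and a Chernoff bound gives that fewer than $90n/k$ productive steps occur with probability at most $\exp(-180n/(8k)) \le n^{-a}$, again because $n/k \ge 50 a \log n$. Summing the at most six error terms ($2n^{-a}$ from the leak bound, $n^{-a}$ from the $p(bx)$ bound, $n^{-a}$ from the $B+X$ excess Chernoff bound, $n^{-a}$ from the $x \ge n/3$ bound used for $p(prod)$, and $n^{-a}$ from the productive-step Chernoff bound) and applying a union bound yields the claimed $1 - 6n^{-a}$ success probability. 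The main obstacle, as flagged above, is making sure the Chernoff exponents survive after losing the $\beta n \ge \omega(\log n)$ leverage; the constraint $k \le n/(50 a \log n)$ is precisely tailored for this.
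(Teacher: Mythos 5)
Your proof is correct and matches the paper's intended argument: the paper omits this proof entirely, noting only that it is nearly identical to that of Lemma~\ref{lemma:dbl-a2.2-largeB} with the leak bound of Lemma~\ref{lemma:dbl-a2.1-smallB} substituted in, and your write-up is precisely that adaptation. You also correctly isolate the only step requiring genuine care, namely that the Chernoff exponents $\tfrac{n}{50k}$ and $\tfrac{180}{8}\cdot\tfrac{n}{k}$ are now driven by the hypothesis $k \le n/(50 a \log n)$ (giving $n/k \ge 50 a \log n$) rather than by the large-$\beta$ leverage $\beta n = \omega(\log n)$.
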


The proof of this lemma uses the bound on the number of leak
events from Lemma~\ref{lemma:dbl-a2.1-smallB} for small $\beta$
and is nearly identical to that of Lemma~\ref{lemma:dbl-a2.2-largeB}
and is thus omitted. 

\paragraph*{Concluding the Proof}
Using the preceding lemmas, we have the following proof of
Theorem~\ref{thm:dbl-main}, which characterizes the behavior
of the $\dbam$ protocol in the presence of adversarial leaks.
We restate the theorem for convenience.

\thmdblmain*

\begin{proof}
  \noindent
  By Lemma~\ref{lemma:dbl-a1.2}, at most $O(n)$ total
  interactions are needed to complete each of the $O(\log n)$
  stages of Phase 1 correctly with probability at least
  $1 - 6n^{-\alpha} - 3n^{-(\alpha/6)}$.
  And by Lemmas~\ref{lemma:dbl-a2.2-largeB} and~\ref{lemma:dbl-a2.2-smallB}
  for both the small and large $\beta$ cases, at most
  $O(n)$ total interactions are needed to complete each
  stage of Phase 2 correctly with probability at least
  $1 - 6n^{-a}$ for any $a \ge 1$. 
  Following the correct completion of Phase 2, the
  protocol will have reached a configuration with
  sample error at most $O(\hat y /n) = O(\beta)$ when 
  $\beta$ is large, and at most $O(\log n/n)$ when
  $\beta$ is small.
  Thus setting the error parameters appropriately for
  each phase and taking a union bound, it follows that for  
  any $c \ge 1$ and sufficiently large $n$, there 
  exists some constant $\alpha \ge 1$ such that the
  protocol reaches a configuration with the 
  specified bounded sample error within $O(n \log n)$
  total steps with probability $1 - n^{-c}$ when
  the input margin is at least $\alpha \sqrt{n \log n}$. 
\end{proof}

%

\paragraph*{Long-term Behavior}

Theorem~\ref{thm:dbl-main} says that the protocol will reach 
a low-sample-error configuration within $O(n \log n)$ steps
with high probability in the presence of leaks. We further show
that the protocol remains in such a configuration for at least
a polynomial number of steps with high probability. To do this, we again 
analyze the protocol separately under the small and large leak rate regimes. 
In both cases, we show that following the completion of Phase 2, 
the value of $\hat y$ never fluctuates too far from its final value
with high probability within a short sequence of productive steps. 
Taking a union bound and tuning the degree of fluctuation lets
us show that this behavior persists for at least a polynomial number of 
rounds with all but polynomially-small error. 

We proceed by analyzing this behavior for the small leak rate case. 
Recall that with small leak rate $\beta \le \alpha \log n$, Phase 2 of
the protocol ends once $\hat y = y + b/2 \le 25 \alpha \log n$ for some $\alpha > 1$.
The next lemma shows that following the end of Phase 2, the value of $\hat y$
remains $O(\log n)$ throughout a short sequence of productive steps with 
high probability. 


\begin{lemma}
  \label{lemma:dbl-phase3-helper-sb}
  Suppose $\hat y = 26 \alpha_1 \log n$ for some $\alpha_1 \ge \alpha$
  following the end of Phase 2 of the $\dbam$ protocol with leak rate
  $\beta \le (\alpha \log n) / n$, and consider
  a sequence of $80 \alpha_1 \log n$ productive steps. Then
  \begin{enumerate}[i.]
  \item
    the maximum value of $\hat y$ throughout this sequence
    is at most $38 \alpha_1 \log n$
  \item
    $\hat y$ decreases to $25 \alpha_1 \log n$ by the
    end of the sequence
  \end{enumerate}
   both with probability at least $1 - 2n^{-0.66\alpha_1}$
   for sufficiently large $n$. 
\end{lemma}

\begin{proof}
  Let $S_l, S_{bx}, S_{by}$ and $S_{xy}$ be random variables denoting
  the number of leaks, $X+B$, $Y+B$, and $X+Y$ interactions respectively
  during the sequence of $80\alpha_1 \log n$ productive steps starting
  from $\hat y_0 = 26 \alpha_1 \log n$.
  We can observe the following:
  \begin{enumerate}
  \item
    
    If $S_{bx} - S_{by} - 2S_{l} \ge 2\alpha_1 \log n$, then $\hat y$
    must decrease to $25 \alpha_1 \log n$ by the end of the
    sequence of productive steps. This follows directly from the
    transition rules of the $\dbam$ protocol, as each $X+B$ interaction
    decreases $\hat y$ by $1/2$, each $Y+B$ interaction
    increases $\hat y$ by $1/2$, and each $X \rightarrow Y$ leak event
    increases $\hat y$ by $1$. 
  \item
    Letting $\hat y_{\max}$ denote the maximum value of $\hat y$ throughout
    this sequence, we have:
    $$\hat y_{\max} \le \hat y_0 + S_l + 0.5 S_{by}.$$
    Again, this follows directly from the transition rules of the protocol:
    each leak event increases $\hat y$ by 1, each $Y+B$ interaction
    increases $\hat y$ by $1/2$, and no other event causes $\hat y$ to
    increase. So no matter the order of the interactions within the
    sequence, the stated bound on $\hat y_{\max}$ holds. 
    
  \end{enumerate}
  Now suppose that throughout the sequence of $80 \alpha_1 \log n$ productive
  steps that
  \begin{align}
    \begin{split}\label{eq:dbl-sb:prod-steps}
      S_l &\le 4 \alpha_1 \log n \\
      S_{by} &\le 7 \alpha_1 \log n \\
      S_{bx} &\ge 17 \alpha_1 \log n
    \end{split}
  \end{align}
  Then by the first observation,
  $S_{bx} - S_{by} - 2S_{l}\ge (17 - 7 - 8) \alpha_1 \log n = 2 \alpha_1 \log n$,
  meaning that $\hat y$ must decrease to $25 \alpha \log n$ by the
  end of the sequence,
  and by the second observation,
  $$
  \hat y_{\max} \le (26 + 6 + 5.75) \alpha_1 \log n \le 38 \alpha_1 \log n.
  $$
  Thus to prove the lemma statement, it remains to show that
  the bounds on $S_l$, $S_{bx}$ and $S_{by}$ from \eqref{eq:dbl-sb:prod-steps}
  hold with high probability. \\

  \noindent
  \textit{Bound on $S_l$:}
  Let $p(l)$ denote the probability of a leak event conditioned
  on a productive step, and recall that
  \begin{align}
    p(l) = \frac{\beta \binom{n}{2}}{\beta \binom{n}{2} + (1-\beta)\phi}
    \label{eq:dbl-sl:pl}
  \end{align}
  where $\phi = b(x+y) + xy \ge x(b+y) \ge x\hat y$.
  Now observe the following inequalities which will be used to bound $p(l)$:
   \begin{enumerate}[i.]
  \item
    $\beta \binom{n}{2} \le 0.5 \alpha n \log n$, 
    which follows from $\beta \le \alpha \log n /n$ and
    $\binom{n}{2} \le 0.5 n^2$.
  \item
    $1-\beta \ge 0.95$, which holds since
    $\beta \le \alpha \log n/n \le 1/20$ for any $\alpha \ge 1$ when
    $n$ is sufficiently large.
  \item
    $x \ge 0.95n$, because while $\hat y \le 38\alpha_1 \log n$,
    it follows that $\hat x \ge n - 38 \alpha_1 \log n$.
    Since $\hat x = x + b/2$, this implies
    $x \ge n - 38 \alpha_1 \log n - b/2$. 
    Because $b \le 2 \hat y \le O(\alpha_1 \log n)$ while $\hat y \le 38 \alpha_1 \log n$,
    it follows that $38 \alpha_1 \log n + b/2 = O(\alpha_1 \log n) \le 0.05n$,
    which holds for any $\alpha_1 \ge 1$ when $n$ is sufficiently large. 
  \item
    $(1-\beta)\phi \ge 21 \alpha n \log n$, which follows from
    $\phi \ge x \hat y \ge x \cdot 25 \alpha_1 \log n \ge x \cdot 25 \alpha \log n$
    and by applying inequalities (ii) and (iii).
  \end{enumerate}
  Substituting these inequalities into \eqref{eq:dbl-sl:pl} shows that
  \begin{align}
    p(l) \le \frac{0.5 \alpha n \log n}{21 \alpha n \log n} < \frac{1}{40}
    \label{eq:dbl-sl:pl-2}
  \end{align}
  It follows that throughout the the sequence of $80 \alpha_1 \log n$
  productive steps, the expected number of leak events is
  $\E[S_l] < (1/40) \cdot (80 \alpha_1 \log n) = 2 \alpha_1 \log n$.
  Using a Chernoff bound then shows that
  \begin{align*}
    \Pr[S_l \ge 4 \alpha_1 \log n]
    &= \Pr[S_l \ge \E[S_l] \cdot 2] \\
    &\le \exp\left(
      - \frac{1}{3}\cdot 2 \alpha_1 \log n
      \right) \\
    &\le \exp\left(
      -  0.66 \alpha_1 \log n 
      \right) \\
    &\le n^{-0.66\alpha_1}.
  \end{align*}
  Thus $S_l \le 6\alpha_1 \log n$ with probability all but $n^{-0.66\alpha_1}$. \\

  \noindent
  \textit{Bound on $S_{by}$:}
  Let $p(y)$ denote the probability of a $B+Y$ interaction conditioned
  on a blank-consuming step, which we note is an upper bound on the
  probability of a $B+Y$ interaction conditioned on \textit{any}
  productive interaction. Recall that
  \begin{align}
    p(by)
    = \frac{by}{bx + by}
    &= \frac{y}{x + y} \nonumber \\
    &\le \frac{y + b/2}{x + y +b}
      = \frac{\hat y}{n}
    \label{eq:dbl-sl:pby}    
  \end{align}
  where the inequality holds when $y \le x$.
  Using inequality (iii) from the bound on $S_l$, it follows that
  $\hat x \ge x \ge 0.95n$ and thus $\hat y \le 0.05 n$ for sufficiently
  large $n$.
  Substituting this into \eqref{eq:dbl-sl:pby} gives
  $p(by) \le (0.05n)/n \le 0.05$.

  The expected value of $S_{by}$ over the sequence of productive steps
  can then be bounded from above by
  $\E[S_{by}] \le p(by) \cdot 80\alpha_1 \log n = 4 \alpha_1 \log n.$
  Again using a Chernoff bound, we then find that
  \begin{align*}
    \Pr[S_{by} \ge  7 \alpha_1 \log n]
    &\le
      \Pr[S_{by} \ge (4\alpha_1 \log n) \cdot (1 + 3/4)] \\
    &\le
      \exp\left(
      - \frac{1}{3} \cdot \frac{9}{16} \cdot 4 \alpha_1 \log n
      \right) \\
    &\le      
      \exp\left(
      - \frac{3}{4} \cdot \alpha_1 \log n
      \right) \\
    &\le
      n^{-0.75\alpha_1}
  \end{align*}
  Thus $S_{by}$ is at most $11 \alpha_1 \log n$ throughout the sequence
  of $80 \alpha \log n$ productive steps
  with probability all but $n^{-0.75\alpha_1}$. \\

  \noindent
  \textit{Bound on $S_{bx}$:}
  Assuming $S_l \le 6\alpha_1 \log n$ throughout the sequence of
  $80 \alpha_1 \log n$ productive steps, it follows that
  the number of non-leak interactions is
  \begin{align}
    \lambda = S_{bx} + S_{by} + S_{bx} &= 80\alpha_1 \log n - S_l \\
                                     &\ge 74 \alpha_1 \log n.
  \end{align}
  Moreover, Lemma~\ref{lemma:dbl-prob-prod} says that within a subsequence of
  $\lambda$ non-leak productive steps at least $(\lambda - y_0)/2$
  of the interactions must be blank-consuming. This means
  \begin{align*}
    S_{by} + S_{bx}
    &\ge \frac{1}{2} \cdot (74 \alpha_1 \log n - 26 \alpha_1 \log n) \\
      &= 24 \alpha_1 \log n.
  \end{align*}
  Now, if $S_{by} \le 7 \alpha_1 \log n$, this implies that
  \begin{align*}
    S_{bx} &\ge 24 \alpha_1 \log n - S_{by} \\
           &\ge 24 \alpha_1 \log n - 7 \alpha_1 \log n \\
           &= 17 \alpha_1 \log n.
  \end{align*}
  So throughout the seqeuence of $80 \alpha_1 \log n$ productive steps,
  $S_{bx} \ge 17 \alpha_1 \log n$ so long as
  the bounds on $S_{by}$ and $S_{l}$ also hold.
  Taking a union bound over both error probabilities gives
  the stated claim with probability at least $1 - 2n^{-0.66\alpha_1}$. 
\end{proof}

We can show the protocol follows similar behavior with large leak rate. 
Specifically, the next lemma states that when 
$\omega(\log n /n) \le \beta \le O(\sqrt{n \log n}/n)$,
the value of $\hat y$ remains $O(\beta n)$ throughout a short 
sequence of productive steps with high probability following the
completion of Phase 2. 
The proof of the lemma is nearly identical to
that of Lemma~\ref{lemma:dbl-phase3-helper-sb} and is thus omitted.

\begin{lemma}
\label{lemma:dbl-phase3-helper-lb}
  Suppose $\hat y = 26 \alpha_1 \beta n$ for some $\alpha_1 \ge \alpha$
  following the end of Phase 2 of the $\dbam$ protocol with leak rate
  $\omega(\log n /n) \le \beta \le (\alpha \sqrt{n\log n}) / n$,
  and consider a sequence of $80 \alpha_1 \beta n$ productive steps. Then
  \begin{enumerate}[i.]
  \item
    the maximum value of $\hat y$ throughout this sequence
    is at most $38 \alpha_1 \beta n$
  \item
    $\hat y$ decreases to $25 \alpha_1 \beta n$ by the
    end of the sequence
  \end{enumerate}
  both with probability at least $1 - 2n^{-0.66\alpha_1}$ 
  for sufficiently large $n$.
\end{lemma}

Using Lemmas~\ref{lemma:dbl-phase3-helper-sb} and \ref{lemma:dbl-phase3-helper-lb},
the following theorem states that following the completion 
of Phase 2, the protocol remains in a low-sample-error configuration 
for at least a polynomial number of steps with high probability.

%
%
\begin{theorem}
  \label{thm:dbl-phase3-convergence}
  Consider an execution of the $\dbam$ protocol 
  following the successful completion of Phase 2 of the protocol.

  \noindent
  Then given $\alpha_1 > \alpha$ and $0 < c < 0.6\alpha$,
  there is some $a > 0$ such that 
  \begin{enumerate}[i.]
    \item
    $\hat y \le 38 \alpha_1 \beta n$ when
    $\omega(\log n /n ) \le \beta \le (\alpha \sqrt{n \log n})/n$
    \item
    $\hat y \le 38 \alpha_1 \log n$ when $\beta \le (\alpha \log n)/n$
  \end{enumerate}   
  holds for at least the next $n^a$ steps
  with probability at least $1 - n^{-c}$ when $n$ is sufficiently large. 
\end{theorem}

\begin{proof}
  We prove the case for small leak rate (when $\beta \le \alpha \log n/n)$,
  which applies Lemma~\ref{lemma:dbl-phase3-helper-sb}. 
  The large leak rate case follows from an analogous application of 
  Lemma~\ref{lemma:dbl-phase3-helper-lb}.

  To this end, at the end of Phase 2 of the $\dbam$ protocol
  with leak rate $\beta \le (\alpha \log n)/n$, we have
  $\hat y \le 25 \alpha \log n$.
  Then consider the first time following the successful
  completion of Phase 2 that $\hat y$ increases to
  $26 \alpha_1 \log n$.
  By Lemma~\ref{lemma:dbl-phase3-helper-sb},
  $\hat y$ will decrease to $25 \alpha_1 \log n$
  within $O(\alpha_1 \log n)$ productive steps
  and will never exceed $38 \alpha_1 \log n$ throughout
  the sequence
  with probability at least $1 - 2n^{-0.66\alpha_1}$. 

  For every subsequent time that $\hat y$ increases
  to $26 \alpha_1 \log n$, the same behavior
  and error probability holds. Each such
  occurrence of this behavior takes at least 1 interaction,
  and thus by a union bound,
  the total error probability of $n^{0.6\alpha_1 - c}$
  such occurrences is at most
  $$
  \frac{2n^{0.6\alpha_1 - c}}{n^{0.66\alpha_1}}
  \le
  \frac{n^{0.6\alpha_1 - c}}{n^{0.6\alpha_1}} \le n^{-c}
  $$
  for sufficiently large $n$. 
  Setting $a = 0.6\alpha_1 - c$ then gives the
  stated result.   
\end{proof}



%
%


\subsection{Leak Robustness of the $\dbamc$ Protocol}

The analysis of the
previous subsection is adapted to show that
the $\dbamc$ protocol exhibits a similar form
of leak-robustness in the CI model,  
and in the following theorem we prove the case where $m = \Theta(n)$. 
Recall that in the CI model, only the non-catalytic worker agents
are susceptible to leak events. 

\begin{restatable}{theorem}{thmdbclmain}
  \label{thm:dbcl-main}
  There exist constants $\alpha, d \ge 1$ such that, for a population
  with $m = cn$ for $c \ge 4$ and input margin
  $\epsilon \ge \alpha \sqrt{N \log N}$, the $\dbamc$ protocol will reach
  a configuration with
  \begin{enumerate}
  \item
    sample error $O(\log N/N)$ when $\beta \le O(\log N / N)$
  \item
    sample error $O(\beta)$ when
    $\omega(\log N / N) \le \beta \le (\alpha \sqrt{N\log N})/dN$
  \end{enumerate}
  within $O(N \log N)$ total interactions with probability at
  least $1 - N^{-a}$ for $a \ge 1$ when $N$ is sufficiently large. 
\end{restatable}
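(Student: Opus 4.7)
The plan is to mirror the structure of the $\dbam$ leak analysis from Theorem~\ref{thm:dbl-main}, but transplanted to the CI model and scaled to $N = m+n = \Theta(n)$, and to reuse the phase decomposition already developed for the non-leak $\dbamc$ analysis in Section~\ref{sec:dbamc}. Specifically, I would define analogous phases in terms of the progress measures $P = \epsilon + \hat x - \hat y$ and $\hat y = y + b/2$: Phase 1 runs until $P \ge 7m/8 + \epsilon$ (equivalently $\hat y \le m/16$), with $O(\log N)$ doubling stages on $P$; Phase 2 runs until $\hat y$ reaches the error floor, with $O(\log N)$ halving stages on $\hat y$ terminating at $\hat y = \max\{50 \beta N,\, 50 a \log N\}$ depending on whether $\beta$ is large or small. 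The productive/blank-consuming/non-leak-productive classification from Section~\ref{sec:dbl} carries over verbatim.

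For each stage I would prove three high-probability claims, following exactly the template of Lemmas~\ref{lemma:dbl-a1.1}--\ref{lemma:dbl-a2.2-largeB}. First, using the lower bound on the productive-step rate $\phi = b(x+y)+xy \ge x \hat y$ together with Lemma~\ref{lemma:dbamc-blank-ub} (which gives $x \ge 3m/8 = \Theta(N)$ after Phase 1), I would bound the conditional leak probability $p(l) = \beta\binom{N}{2}/(\beta\binom{N}{2}+(1-\beta)\phi)$ and then Chernoff-bound the number of leaks inside a fixed interval of productive steps. Second, within that interval I would bound the decrease in the progress measure contributed by leaks (each $X\to Y$ leak decrements $\hat x - \hat y$ by $2$) and then show, using Lemma~\ref{lemma:dbl-prob-prod} to pass from productive to blank-consuming steps, that the sub-sequence of blank-consuming non-leak steps generates enough net excess of $I_X$-flavored/$B+X$ interactions over $I_Y$-flavored/$B+Y$ interactions to both absorb the leak penalty and achieve the required progress. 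The relevant conditional success probability $p(bx)$ is at least $\tfrac{1}{2} + \Theta(\Delta/N)$ as in Lemma~\ref{lemma:doubling-difference}, so one-dimensional random walk bounds (Lemma~\ref{lemma:one-dim-rw}) plus upper Chernoff tails give the desired concentration. Third, dividing $\lambda$ by the productive-step probability $p(\mathrm{prod}) = \Theta(\phi/N^2)$ and Chernoff-bounding yields the total-step count for the stage, all of which collapse to $O(N)$ per stage under $m = \Theta(n)$.

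The main obstacle, in my view, is keeping the bookkeeping honest across three closely related analyses at once: the non-leak $\dbamc$ bounds on $\phi$, $x$, and $\hat y$ from Lemmas~\ref{lemma:dbamc23-corr-mk-nodouble} and~\ref{lemma:dbamc-blank-ub} need to be re-derived in the presence of adversarial $X \to Y$ leaks, since those earlier martingale/walk arguments assumed only rule-based transitions. Concretely, I would need a leak-aware analogue of Lemma~\ref{lemma:dbamc23-corr-mk-nodouble} establishing that $\hat y$ never doubles within a stage, where the one-step success probability now combines the catalytic pressure $p(bx) \ge \tfrac{1}{2} + \Theta(1)$ (coming from $I_X$ agents, which is \emph{stronger} here than in $\dbam$) with a leak penalty of magnitude at most a constant times $\beta N$ per interval; the assumption $\beta \le (\alpha \sqrt{N \log N})/(dN)$ for sufficiently large $d$ is exactly what makes the $+\Theta(\sqrt{N\log N})$ catalytic drift dominate the $\Theta(\beta N)$ leak drift. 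Once these invariants are in hand, summing the $O(\log N)$ per-stage failure probabilities of form $N^{-a}$ via a union bound with $a$ chosen as a sufficiently large constant multiple of the desired $c$ yields the final $1 - N^{-a}$ guarantee on convergence to sample error $O(\beta)$ (respectively $O(\log N/N)$) within $O(N \log N)$ total interactions, completing the proof.
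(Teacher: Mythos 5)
Your plan follows the paper's proof essentially step for step: the same phase/stage decomposition on $P = \epsilon + \hat x - \hat y$ and $\hat y$, the same three-part per-stage template (Chernoff-bound the leak count in a window of productive steps, show that the blank-consuming non-leak sub-sequence absorbs the leak penalty and still makes the required progress via the one-dimensional walk bound of Lemma~\ref{lemma:one-dim-rw} and Chernoff tails, then divide by the productive rate to get a total-step count), and the same recognition that the non-leak invariants on $x$, $b$, and $\hat y$ must be re-proved in the presence of leaks --- which the paper does in Lemmas~\ref{lemma:dbcl-util-phase1-nodouble}--\ref{lemma:dbcl-util-prod-prodb}.

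One concrete point to fix: in the CI model the productive-step rate is $\phi = b(i_x+i_y) + b(x+y) + xy$, not $b(x+y)+xy$ as you wrote, and the bound $\phi \ge x\hat y$ is vacuous at the start of Phase 1, where $b = m$ and $x = y = 0$. The paper's Lemmas~\ref{lemma:dbcl-phase1-leaks} and~\ref{lemma:dbcl-phase1-steps} handle this with a two-case split inherited from the non-leak efficiency analysis (Lemma~\ref{lemma:dbamc-eff}): while $b \ge m/4$ one uses $\phi \ge bn \ge cn^2/4$, i.e.\ the blank--catalyst interactions supply the productive rate, and only once $b \le m/4$ and $x \ge 3m/8$ does one switch to $\phi \ge x\hat y$. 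Without this case split your total-step bound for the early stages of Phase 1 does not go through.
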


The proof of the theorem uses the same 
progress measures and phase and stage structure 
introduced in the non-leak setting
in Section~\ref{sec:dbamc}, and the final sample-error 
guarantee of the protocol is again
defined with respect to the magnitude of the leak rate.  
The behavior of the protocol between the two classes of leak rate 
is similar as in the $\dbam$ analysis.
%

More formally, recall that 
we refer to \textit{productive steps} as any of the 
non-null transitions found in Figure~\ref{fig:dbamc},
and to \textit{non-leak productive steps} and 
\textit{blank-consuming productive steps} as in Section~\ref{sec:dbl}.
We say the leak rate $\beta$ is \textit{large} when
$\omega(\log N/N) \le O(\sqrt{N\log N}/N)$ and \textit{small} when
$\beta \le O(\log N/N)$. 
The adapted \textbf{phases} and \textbf{stages} used throughout 
the analysis are as follows:
\begin{enumerate}[i.]
\item
  \textbf{Phase 1} begins with the start of the protocol,
  and ends \textit{correctly} when $P = \epsilon + \hat x - \hat y \ge 7m/8$,
  where $\epsilon = i_x - i_y$ denotes the initial input margin of
  the population, and $\hat x = x + b/2$ and $\hat y = y + b/2$
  as used previously.

  Every stage of Phase 1 completes correctly with subsequent doublings
  of the progress measure $P$.
  Letting $\Delta_0$ denote the initial value of $P$, stage $t$ of
  Phase 1 begins when $P = \Delta_0 \cdot 2^t$ and ends once
  either $P \ge \Delta_0 \cdot 2^{t+1}$ or when $P \ge 7m/8 + \epsilon$.

\item
  \textbf{Phase 2} of the protocol begins once $P \ge 7m/8$, which
  is equivalent to when $\hat y \le m/16$.
  When $\beta$ is large, the phase completes correctly when
  $\hat y \le 300 \beta m$, and when $\beta$ is small, the phase
  completes correctly when $\hat y \le 300 a \log m$ for some $a \ge 1$.
  \begin{itemize}
  \item
    Every stage of Phase 2 completes correctly with a subsequent halving
    of $\hat y$. When $\beta$ is large, the stage of Phase 2 that begins
    with $\hat y = m/k$ where $16 \le k < 1/(600\beta)$  completes
    correctly when reaching $\hat y \le \max\{m/2k, 300\beta m\}$.
    The final stage of Phase 2 for large $\beta$ begins with
    $\hat y = 600\beta m$ and ends correctly once $\hat y \le 300 \beta m$.
  \item
    For small $\beta$, the stage of Phase 2 that begins
    with $\hat y = m/k$ where $16 \le k < m/(600 a \log m)$ and $a \ge 1$
    completes correctly when reaching $\hat y \le \max\{m/2k, 300 a\log m\}$.
    The final stage of Phase 2 for large $\beta$ begins with
    $\hat y = 600 a \log m$ and ends correctly once $\hat y \le 300 a \log m$.
  \end{itemize}
\end{enumerate}

\subsubsection{Proof of Theorem~\ref{thm:dbcl-main}}

We now proceed to build and prove the lemmas used to derive the 
main result in showing the robustness of the $\dbamc$ protocol
to transient leaks.

\paragraph*{Utility Lemmas}
Similar to the previous section, we start
by stating and proving several utility lemmas used 
throughout the analysis.

\begin{lemma}
  \label{lemma:dbcl-util-phase1-nodouble}
  Throughout the $\dbamc$ protocol on a population with $m = cn$ and
  initial input margin $\alpha \sqrt{N \log N}$, there exists some
  $d \ge 1$ such that when $\beta \le (\alpha \sqrt{N \log N})/dN$,
  the probability during the stage of Phase 1 starting at $P = \Delta_0$ that
  $P$ drops to $\Delta_0 / 2$ within a sequence of $200N$ productive steps
  before finishing the stage correctly is at most $N^{-\alpha/6}$, assuming
  that the number of leak events in the sequence is at
  most $(\alpha \sqrt{N \log N})/8$.
\end{lemma}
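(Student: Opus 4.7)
The plan is to carry out an argument analogous to the proof of Lemma~\ref{lemma:dbl1-no-double} from the $\dbam$ (non-CI) leak analysis, but substituting the CI-model success-probability bound derived inside the proof of Lemma~\ref{lemma:doubling-difference}. Concretely, I would first account for the worst-case damage that adversarial leaks can cause to the progress measure $P = \epsilon + \hat x - \hat y$, and then control the residual random walk on the blank-consuming productive subsequence via Lemma~\ref{lemma:one-dim-rw}.

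First I would bound the leak contribution. Since each adversarial $X \to Y$ leak acts on a worker agent, decreasing $\hat x$ by $1$ and increasing $\hat y$ by $1$, each leak shifts $P$ downward by exactly $2$. Because the stage starts at $P = \Delta_0 \geq \epsilon \geq \alpha\sqrt{N\log N}$, the hypothesized bound of at most $\alpha\sqrt{N\log N}/8 \leq \Delta_0/8$ leaks within the window of $200N$ productive steps limits the total negative drift contributed by leaks to at most $\Delta_0/4$.

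Next I would invoke the fact (established inside the proof of Lemma~\ref{lemma:doubling-difference}) that until $P$ first drops below $\Delta_0/2$, the conditional probability that a blank-consuming productive step is a \emph{succeeding} step of type $B + X$ or $B + I_X$ satisfies $p \geq 1/2 + \Delta_0/(10N)$. This bound depends only on $P$ (not on the particular split between worker and input contributions), and is preserved by worker-only leaks since the catalytic inputs never change state, so their contribution $i_X - i_Y \geq \epsilon$ to $P$ is fixed throughout. Given that leaks contribute at most $\Delta_0/4$ of the needed $\Delta_0/2$ decrease, for $P$ to reach $\Delta_0/2$ the number of failing blank-consuming steps must exceed the number of succeeding ones by at least $\Delta_0/4$. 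Lemma~\ref{lemma:one-dim-rw} applied with excess $\Delta_0/4$ then bounds this event by
\begin{align}
\left(\frac{1-p}{p}\right)^{\Delta_0/4}
&\leq
\left(1 - \frac{2\Delta_0}{5N + \Delta_0}\right)^{\Delta_0/4}\\
&\leq
\exp\!\left(-\frac{\Delta_0^2}{2(5N + \Delta_0)}\right),
\end{align}
which, for $\Delta_0 \geq \alpha\sqrt{N\log N}$, is at most $N^{-\alpha/6}$ once the constant $d$ in the leak-rate hypothesis is chosen large enough to absorb the constant loss from the $1/(10N)$ bias.

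The main obstacle will be reconciling constant factors. The $1/(10N)$ bias from Lemma~\ref{lemma:doubling-difference} is weaker than the $1/(4n)$ bias used in the non-catalytic Lemma~\ref{lemma:dbl1-no-double}, so a naive substitution into the random-walk tail yields only $N^{-\alpha/12}$ rather than the claimed $N^{-\alpha/6}$. To recover the stated exponent one of two tightenings should suffice: either sharpen $p$ by accounting separately for the $I_X / I_Y$ contribution, which contributes a $\Theta(\epsilon / N)$ additive bias that is independent of the worker state, or shrink the permitted leak budget by enlarging $d$ so that leaks consume at most $\Delta_0/8$ of $P$, leaving a $3\Delta_0/8$ excess for the walk. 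Either modification leaves a $\Theta(\Delta_0)$ gap between succeeding and failing steps and is compatible with the $O(\sqrt{N\log N}/N)$ leak-rate regime required elsewhere in the proof of Theorem~\ref{thm:dbcl-main}.
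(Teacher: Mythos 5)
Your proposal follows essentially the same route as the paper's proof: charge the at most $\Delta_0/8$ leaks with a total decrease of $\Delta_0/4$ in $P$, reuse the conditional bias $p \ge 1/2 + \Delta_0/(10N)$ from the non-leak analysis of Lemma~\ref{lemma:doubling-difference}, and apply Lemma~\ref{lemma:one-dim-rw} with excess $\Delta_0/4$ to get the $\exp\bigl(-\Theta(\Delta_0^2/N)\bigr)$ tail. Your worry about the exponent is well taken but does not change the approach --- your $\exp\bigl(-\Delta_0^2/(2(5N+\Delta_0))\bigr)$ is in fact the careful version of the bound the paper states as $\exp\bigl(-\Delta_0^2/(5N+\Delta_0)\bigr)$, and since $\Delta_0 \ge \alpha\sqrt{N\log N}$ gives an exponent of order $\alpha^2 \log N/12$, the claimed $N^{-\alpha/6}$ follows for $\alpha \ge 2$ (or by the constant adjustments you describe), so no new idea is needed.
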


\begin{proof}
  If the number of leak events in the sequence of $200N$ productive
  steps is at most 
  $$(\alpha \sqrt{N \log N})/8 \le \Delta_0/8,$$
  then $P$ only further drops to $\Delta_0/2$
  if, in a subsequence of blank-consuming steps, the number
  of $B+I_y$ or $B+Y$ steps ever exceeds the number of
  $B+I_x$ or $B+X$ steps by an additional at $\Delta_0/4$.
  Recall from the non-leak analysis that up until the point $P$
  drops to $\Delta_0/2$, the conditional probability of a $P$-increasing
  blank-consuming step is at least $1/2 + \Delta_0/10N$.
  Thus using Lemma~\ref{lemma:one-dim-rw}, the probability of our
  bad event occurring within the sequence of steps is at most
  $$\exp\left(-\Delta_0^2/(5N+\Delta_0)\right) \le N^{-\alpha/6}.$$ \qedhere
\end{proof}

\begin{lemma}
  \label{lemma:dbcl-util-phase1-eventual}
  Throughout the $\dbamc$ protocol on a population with $m = cn$ and
  initial input margin $\alpha \sqrt{N \log N}$, there exists some
  $d \ge 1$ such that when $\beta \le (\alpha \sqrt{N \log N})/dN$,
  the number of $X$ worker agents in the population will, by the
  end of Phase 1, exceed $3m/8$, and it will remain above this value
  through the completion of Phase 2 with probability
  at least $1 - N^{-\alpha}$. Similarly, the number of $B$
  worker agents in the population will, by the end of Phase 1,
  be at most $m/4$, and it will remain below this value through
  the completion of Phase 2 with probability at least $1 - N^{-\alpha/6}$. 
\end{lemma}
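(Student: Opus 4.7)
The plan is to adapt Lemma~\ref{lemma:dbamc-blank-ub} from the non-leak analysis, introducing a leak-budget argument to absorb the effect of adversarial faults. Because the invariant $x + y + b = m$ couples the two claims, I would prove both bounds together by first controlling $b$ from above and $\hat x - \hat y$ from below, and then extracting $x \ge 3m/8$ from $x + y = m - b$ combined with $x \ge y$.

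First, I would invoke the correct completion of Phase 1 (to be established in the subsequent Phase~1 behavior lemmas with probability at least $1 - N^{-\alpha}$), which guarantees $P = \epsilon + \hat x - \hat y \ge 7m/8$. Since $\hat x + \hat y = m$, this yields $\hat y \le m/16 + \epsilon/2 < m/8$ (using $\epsilon = O(\sqrt{N \log N}) = o(m)$ for $m = \Theta(N)$), so $b \le 2\hat y < m/4$ at the end of Phase 1. Simultaneously, $x - y = \hat x - \hat y \ge 7m/8 - \epsilon > 0$ together with $x + y = m - b > 3m/4$ yields $x > 3m/8$ at that moment.

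Second, for persistence of both bounds throughout Phase 2, I would show that it suffices to establish $\hat y \le m/8$ throughout the phase, since this single inequality implies both $b \le 2\hat y \le m/4$ and $\hat x - \hat y \ge 3m/4 > 0$, from which $x \ge 3m/8$ follows by the same algebra as above. To bound $\hat y$, I would apply a one-dimensional random walk argument in the spirit of Lemma~\ref{lemma:dbamc23-corr-mk-nodouble}. As long as $\hat y \le m/8$, we have $P \ge 3m/4$, so the conditional probability of a $P$-increasing blank-consuming step is at least $1/2 + \Omega(m/N) = 1/2 + \Omega(1)$ for $m = \Theta(N)$. Each such step shifts $\hat y$ by $1/2$, while each adversarial $X \to Y$ leak raises $\hat y$ by $1$. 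Since Phase 2 takes at most $O(N \log N)$ total steps (by forthcoming efficiency bounds) and $\beta \le \alpha\sqrt{N\log N}/(dN)$, a Chernoff bound caps the total leak count at $O(\sqrt{N\log N}\,\log N) = o(m)$, which is negligible compared to the $m/16$ gap between the starting value $\hat y \le m/16$ and the threshold $m/8$. Lemma~\ref{lemma:one-dim-rw} then bounds the probability that the biased walk ever drifts upward by the remaining distance by $\exp(-\Omega(m)) \le N^{-\alpha/6}$.

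A final union bound over the Phase 1 failure, the leak-budget failure, and the walk failure yields the stated probabilities: the $x$ bound inherits the $1 - N^{-\alpha}$ term from Phase 1 correctness (which dominates the walk error), while the $b$ bound inherits the $1 - N^{-\alpha/6}$ term from the walk argument. The main obstacle is the leak-budget accounting of the Phase 2 step: one must verify that the cumulative upward drift of $\hat y$ from adversarial leaks, accumulated over the full duration of Phase 2, remains dominated by the downward bias of the blank-consuming random walk. This relies crucially on the hypothesis $m = cn$, which forces $m = \Theta(N)$ so that the bias $\Omega(m/N)$ is a constant bounded away from $1/2$ and comparable to the quantities appearing in Lemma~\ref{lemma:dbamc23-corr-mk-nodouble}.
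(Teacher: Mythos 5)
Your proposal is essentially correct and rests on the same ingredients as the paper's argument (the relations $b \le 2\hat y$ and $x+y+b=m$, the observation that $x \ge y$ plus $b \le m/4$ forces $x \ge 3m/8$, and the one-dimensional random walk bound of Lemma~\ref{lemma:one-dim-rw}), but it is organized differently. The paper's own proof of this lemma is terse: it notes that the Phase~1 ending condition $\hat y \le m/16$ forces $b \le m/8$, and then invokes Lemma~\ref{lemma:dbcl-util-phase1-nodouble} (the no-halving of $P$ within a stage, stated conditionally on a per-sequence leak budget) to conclude $\hat x \ge \hat y$ and hence $x \ge 3m/8$ once $b \le m/4$; persistence through Phase~2 is not argued explicitly there and is effectively delegated to the separate Lemma~\ref{lemma:dbcl-util-phase2-yhat}, which again conditions on a bounded number of leaks within a fixed window of $200m/k$ productive steps. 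You instead run a single self-contained walk argument keeping $\hat y$ below $m/8$ for all of Phase~2, absorbing leaks via one global budget. Your version is in some respects more complete than the paper's, since it actually addresses the Phase~2 half of the claim.

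Two cautions. First, your global leak budget appeals to ``forthcoming efficiency bounds'' for the $O(N\log N)$ duration of Phase~2, but in the paper those step bounds (Lemmas~\ref{lemma:dbcl-phase2-steps-largeB} and~\ref{lemma:dbcl-phase2-steps-smallB}) are themselves proved using the present lemma to lower-bound $p(prod)$, so as phrased there is a circular dependency. It is repairable --- the leak count in a window of $T$ time steps is $\mathrm{Bin}(T,\beta)$ independently of the configuration, so you can bound leaks in the first $CN\log N$ steps unconditionally and phrase the invariant as ``holds until Phase~2 completes or $CN\log N$ steps elapse,'' closing the loop jointly with the efficiency lemmas --- but this restructuring is needed; the paper avoids the issue by making every utility lemma conditional on a per-sequence leak budget established separately. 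Second, your step $\hat y \le m/16 + \epsilon/2 < m/8$ justified by ``$\epsilon = O(\sqrt{N\log N})$'' is not valid in general, since $\epsilon$ is only bounded below by $\alpha\sqrt{N\log N}$ and can be as large as $n = \Theta(m)$; the intended Phase~1 ending condition is $P \ge 7m/8+\epsilon$, i.e.\ $\hat y \le m/16$ directly, which sidesteps this.
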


\begin{proof}
  Recall that the protocol begins when $b = m$ and Phase 1
  ends once $\hat y \le m/16$, meaning that $b \le m/8$
  by the end of Phase 1. Thus at some point during Phase 1 of the
  protocol, the number of blank agents will drop and remain
  below $b \le m/4$ for the remainder of the Phase.

  Now if $b \le m/4$ then $\hat x + \hat y \ge 3m/4$, and
  Lemma~\ref{lemma:dbcl-util-phase1-nodouble} implies that
  throughout every stage of Phase 1, $\hat x \ge \hat y$ must
  hold with probability at least $1 - N^{-\alpha/6}$.
  Thus when $b \le m/4$ we must have $x \ge 3m/8$ with this same
  probability. 
\end{proof}

\begin{lemma}
  \label{lemma:dbcl-util-phase2-yhat}
  Throughout the $\dbamc$ protocol on a population with $m = cn$ and
  initial input margin $\alpha \sqrt{N \log N}$, there exists some
  $d \ge 1$ such that when $\beta \le (\alpha \sqrt{N \log N})/dN$,
  the value of $\hat y$ will remain below $m/6$ with probability
  at least $1 - N^{-\alpha}$ throughout a sequence of $200m/k$
  productive steps during the stage of Phase 2 beginning with $\hat y = m/k$
  when the number of leak events in the sequence is no more than $m/k$.
\end{lemma}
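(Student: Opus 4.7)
The plan is to bound the probability that $\hat y$ ever crosses from $m/k$ up to $m/6$ during the sequence by reducing to a one-dimensional biased random walk via Lemma~\ref{lemma:one-dim-rw}, after separately accounting for the contribution that the at most $m/k$ adversarial leak events can make to $\hat y$.

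First I would establish a lower bound on the conditional probability of an $\hat y$-decreasing blank-consuming step (i.e.\ a $B+X$ or $B+I_X$ interaction) given that the current step is a blank-consuming one. As in the proof of Lemma~\ref{lemma:doubling-difference}, this probability is at least $p = \tfrac{1}{2} + \Delta/(10N)$ where $\Delta = \epsilon + \hat x - \hat y$. Using the invariant $\hat x + \hat y = m$, we have $\hat x - \hat y = m - 2\hat y$, so up until the first time $\hat y$ exceeds $m/6$ it holds that $\Delta \geq m - m/3 = 2m/3$, and hence $p \geq \tfrac{1}{2} + m/(15N)$. Under the hypothesis $m = cn$ with $c \geq 1$, we have $m/N \geq 1/2$, giving the uniform lower bound $p \geq \tfrac{1}{2} + 1/30$ throughout the portion of the sequence before $\hat y$ first exceeds $m/6$.

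Second I would account for leak contributions to $\hat y$. An adversarial leak of type $X \to Y$ decrements $x$ by $1$ and increments $y$ by $1$, so it increases $\hat y = y + b/2$ by exactly $1$. With at most $m/k$ leak events in the sequence, leaks contribute at most $m/k$ to any increase of $\hat y$. For $\hat y$ to reach $m/6$ starting from $m/k$, the blank-consuming steps must therefore contribute an additional net increase of at least $m/6 - m/k - m/k = m/6 - 2m/k \geq m/24$, where the last inequality uses $k \geq 16$. Since each blank-consuming step changes $\hat y$ by exactly $1/2$, the number of $\hat y$-increasing blank-consuming steps must exceed the number of $\hat y$-decreasing ones by at least $m/12$. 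By a standard coupling, we can bound the probability of this event by Lemma~\ref{lemma:one-dim-rw} applied to the walk with bias $p \geq \tfrac{1}{2} + 1/30$ (treating $\hat y$-decreasing steps as successes), yielding
\[
\Pr[\hat y \text{ ever reaches } m/6] \leq \left(\tfrac{1-p}{p}\right)^{m/12} \leq \left(\tfrac{29}{31}\right)^{m/12},
\]
which is at most $N^{-\alpha}$ for any fixed $\alpha \geq 1$ when $N$ is sufficiently large, since $m = \Theta(N)$.

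The main delicate point is the standard conditional/coupling step: Lemma~\ref{lemma:one-dim-rw} requires the bias $p$ to be valid throughout the entire walk, but the true conditional probability $p$ is state-dependent and the guarantee $p \geq \tfrac{1}{2} + 1/30$ is only known to hold while $\hat y \leq m/6$. This is handled by coupling the true blank-consuming sub-sequence with an i.i.d.\ walk of bias exactly $\tfrac{1}{2} + 1/30$, stopping at the first time $\hat y$ crosses $m/6$; since up to that stopping time the real success probability dominates, the probability of the crossing ever occurring is upper bounded by the corresponding probability for the i.i.d.\ walk, to which Lemma~\ref{lemma:one-dim-rw} applies directly.
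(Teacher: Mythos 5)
Your proposal is correct and follows essentially the same route as the paper's proof: the paper tracks the equivalent quantity $\hat x - \hat y = m - 2\hat y$, deterministically charges the at most $m/k$ leak events for a bounded adverse shift, and then applies Lemma~\ref{lemma:one-dim-rw} to the remaining blank-consuming walk with a constant bias, arriving at the same required excess of $m/12$ steps. The only cosmetic differences are your choice of progress measure and your slightly loose (but still valid) constant $29/31$ in place of $(1-p)/p = 7/8$ for $p = \tfrac{1}{2} + \tfrac{1}{30}$.
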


\begin{proof}
  By the end of Phase 1 of the protocol, we have $\hat x - \hat y \ge 7m/8$.
  So $\hat x - \hat y$ will remain above $2m/3$ throughout a stage of
  Phase 2 so long as $P$ doesn't decrease by at least an additional $m/12$
  (since the leak events decrease P by at most $m/8$ given that
  we consider $16 \le k$). 

  Using the fact that, until the point $\hat x - \hat y \ge 3m/4$
  we have the probability of an increase to $P$ conditioned on
  a blank-consuming step to be at least $1/2 + (3m/40N)$, it follows
  from an application of Lemma~\ref{lemma:one-dim-rw} that
  $\hat x - \hat y$ only drops to $2m/3$ during a sequence of
  $200m/k$ productive steps with probability at most $N^{-\alpha}$
  for $\alpha \ge 1$ when $N$ is sufficiently large.

  Because $\hat x + \hat y = m$, it follows that $\hat y \le m/6$
  when $\hat x - \hat y \ge 2m/3$, which finishes the proof. 
\end{proof}

\begin{lemma}
  \label{lemma:dbcl-util-prod-prodb}
  Throughout the $\dbamc$ protocol, in a sequence of $\lambda$ total
  non-leak productive steps, at least $(\lambda-y_0)/2$ will be
  blank-consuming steps, where $y_0$ is the initial number of
  $Y$ agents present in the population throughout the sequence. 
\end{lemma}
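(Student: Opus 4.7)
The plan is to adapt the counting argument from Lemma~\ref{lemma:dbl-prob-prod} to the larger rule set of $\dbamc$. Partition the $\lambda$ non-leak productive steps into blank-consuming steps (the four rules $X+B$, $Y+B$, $I_X+B$, $I_Y+B$ from Figure~\ref{subfig:double-b-cat}) and the single non-blank-consuming productive rule $X+Y \to B+B$. Let $\lambda_b$ be the number of blank-consuming steps in the sequence and $\lambda_{xy} = \lambda - \lambda_b$ the number of $X+Y$ steps. The goal is to show $\lambda_b \geq (\lambda - y_0)/2$, equivalently $\lambda_{xy} \leq \lambda_b + y_0$.

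The key observation is to track the $Y$-count. An $X+Y$ interaction consumes exactly one $Y$ agent, so each such step can only fire if a $Y$ agent is currently present. Among the remaining productive rules, only $Y+B \to Y+Y$ and $I_Y+B \to I_Y+Y$ create new $Y$ agents (the catalytic input $I_Y$ is persistent and acts purely as a catalyst in its rule, so only one new $Y$ is produced per firing), while $X+B$, $I_X+B$, and leak events do not produce $Y$ agents. Thus the total number of $Y$ agents consumed by $X+Y$ interactions across the sequence is bounded above by $y_0$ plus the number of $Y$-producing blank-consuming steps, and this latter quantity is at most $\lambda_b$. Therefore $\lambda_{xy} \leq y_0 + \lambda_b$.

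Substituting $\lambda_{xy} = \lambda - \lambda_b$ gives $\lambda - \lambda_b \leq y_0 + \lambda_b$, so $\lambda_b \geq (\lambda - y_0)/2$, which is exactly the claim. The argument is essentially a conservation-law accounting and requires no probabilistic estimates; I don't anticipate any real obstacle beyond carefully enumerating which rules create or consume $Y$ agents in the presence of the two new catalytic rules. The only subtlety worth noting explicitly is that $I_X$ and $I_Y$ counts themselves are never changed by any rule (they are genuine catalysts), so no further bookkeeping on the input pool is required.
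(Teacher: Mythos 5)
Your proof is correct and follows essentially the same route as the paper: the paper simply declares this proof identical to that of Lemma~\ref{lemma:dbl-prob-prod}, which is exactly the $Y$-count conservation argument you spell out (each $X+Y$ step consumes one $Y$, which must come either from the initial stock $y_0$ or from one of the at most $\lambda_b$ $Y$-producing blank-consuming steps, giving $\lambda_{xy} \le y_0 + \lambda_b$ and hence $\lambda \le 2\lambda_b + y_0$). Your version is if anything slightly more explicit, since you verify that the two new catalytic rules $I_X+B$ and $I_Y+B$ do not change the accounting.
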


\begin{proof}
  The proof is identical to that of Lemma~\ref{lemma:dbl-prob-prod}
  from the leak analysis of the $\dbam$ protocol. \qedhere
\end{proof}


\paragraph*{Phase 1 Behavior}
The following two lemmas now characterize the number of 
total steps needed to complete each stage of Phase 1 of the protocol
correctly, despite the presence of leaks. 
Note that by Lemma~\ref{lemma:dbcl-util-phase1-eventual},
each stage of Phase 1 will either have $b \ge m/4$, or $b \le m/4$ and
$x \ge 3m/8$ with high probability. We will refer to these two scenarios
as case 1 and case 2 respectively, and we characterize the
following two lemmas in terms of both cases.

%
%
\begin{lemma}
  \label{lemma:dbcl-phase1-leaks}
  For a population where $m = cn$ and with input margin
  $\epsilon \ge \alpha\sqrt{N \log N}$ for $c, \alpha \ge 1$,
  there exists some $d \ge 1$ such that when $\beta \le (\alpha\sqrt{N\log N})/dN$,
  any point in Phase 1 of the $\dbamc$ protocol, the number of leak events
  within a sequence of $200N$ productive steps is at most
  $(\alpha \sqrt{N \log N})/8$ with probability at least $1 - N^{-\alpha} - N^{-\alpha/6}$.
  for sufficiently large $N$.
\end{lemma}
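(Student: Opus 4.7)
The plan is to parallel the approach of Lemma~\ref{lemma:dbl-a1.1} from the $\dbam$ leak analysis, adapting it to the additional catalytic transitions $B+I_X$ and $B+I_Y$ and to the combined population size $N=m+n$. First I would write the conditional leak probability as
\[
p(l) = \frac{\beta \binom{N}{2}}{\beta \binom{N}{2} + (1-\beta)\phi},
\]
where $\phi$ is the total count of ordered pairs inducing a non-null $\dbamc$ transition, so $\phi = 2xy + 2b(x+y+i_X+i_Y)$. To bound $p(l)$ it suffices to lower bound $\phi$ throughout Phase 1, and I would use the crude estimate $\phi \ge 2\max\{b\cdot i_X,\; x\cdot\hat y\}$.

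Next, I would invoke Lemma~\ref{lemma:dbcl-util-phase1-eventual} to split Phase 1 into two regimes. In the regime where $b \ge m/4$, using $i_X \ge n/2$ (by the wlog assumption $i_X \ge i_Y$) gives $\phi = \Omega(mn) = \Omega(N^2)$ since $m=cn$. In the regime where $b \le m/4$, Lemma~\ref{lemma:dbcl-util-phase1-eventual} supplies $x \ge 3m/8$ with probability at least $1 - N^{-\alpha} - N^{-\alpha/6}$; combined with the Phase 1 invariant $\hat y \ge m/16$, this yields $\phi = \Omega(m^2) = \Omega(N^2)$ as well. Substituting either bound into $p(l)$ produces $p(l) \le C\beta$ for a universal constant $C = C(c)$, exactly as in the $\dbam$ computation.

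Third, over a sequence of $200N$ productive events the expected number of leaks is then at most $200CN\beta$. By taking the constant $d$ in the hypothesis $\beta \le (\alpha\sqrt{N\log N})/(dN)$ large enough --- specifically $d \ge 3200\,C$ --- this expectation falls below $(\alpha\sqrt{N\log N})/16$, so an upper Chernoff tail bound (Lemma~\ref{lemma:chernoff-tail-bounds}) at twice the mean controls the probability that more than $(\alpha\sqrt{N\log N})/8$ leaks occur by $\exp\bigl(-\Omega(\sqrt{N\log N})\bigr) \le N^{-\alpha}$ for sufficiently large $N$.

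The concluding step is a union bound that combines this Chernoff error with the conditioning events of Lemma~\ref{lemma:dbcl-util-phase1-eventual}, then re-absorbs the resulting $2N^{-\alpha} + N^{-\alpha/6}$ into the advertised $N^{-\alpha} + N^{-\alpha/6}$ by adjusting $\alpha$ by a constant factor. I do not expect significant obstacles beyond bookkeeping; the only mildly delicate point is fixing $d$ large enough that a \emph{single} bound $p(l) \le C\beta$ holds uniformly across both the $b \ge m/4$ and $b \le m/4$ regimes, since the two cases produce different multiplicative constants in the lower bound on $\phi$. Picking $d$ to be the maximum of the two resulting thresholds resolves this and yields the stated conclusion.
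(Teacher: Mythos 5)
Your proposal is correct and follows essentially the same route as the paper's proof: the identical conditional leak probability $p(l) = \beta\binom{N}{2}/(\beta\binom{N}{2} + (1-\beta)\phi)$, the same two-regime case split via Lemma~\ref{lemma:dbcl-util-phase1-eventual} (with $\phi \ge b(i_x+i_y) = bn = \Omega(N^2)$ when $b \ge m/4$, and $\phi \ge x\hat y = \Omega(m^2)$ using $x \ge 3m/8$ and $\hat y \ge m/16$ otherwise), yielding $p(l) = O(\beta)$ in both cases, followed by an upper Chernoff bound on the leak count over $200N$ productive steps and a suitable choice of $d$. The only differences are bookkeeping (ordered versus unordered pair counting, and the precise constants), which do not affect the argument.
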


\begin{proof}
  Conditioned on a productive interaction, the probability of a leak
  event, denoted by $p(l)$ is given by
  \begin{align}
    p(l) \le \frac{\beta\binom{N}{2}}{\beta\binom{N}{2} + (1 - \beta)\phi},
    \label{eq:dbcl-pl1}
  \end{align}
  where $\phi = b(i_x + i_y) + b(x+y) + xy$.

  Recall that by Lemma~\ref{lemma:dbcl-util-phase1-eventual}, in the
  first case of a Phase 1 stage we have $b \ge m/4 = cn/4$,
  meaning that $\phi \ge bn \ge cn^2/4$.
  Thus in this first case, we can further simplify to find
  \begin{align}
    p(l) \le \frac{1/2}{1/4 + 1/(4\beta c)} = 2 \beta c.
  \end{align}

  Now in a sequence of $200N$ productive steps, the expected
  number of leak events in this first case is at most
  $400 N \beta c$. Applying an upper Chernoff bound further shows that
  the number of leak events within this sequence will only
  exceed $800 N \beta c$ with probability at most $N^{-\alpha}$.
  Then for any $\alpha, c \ge 1$, it follows that for some suitable choice
  of $d$ that
  \begin{align}
    800 \cdot N \cdot \frac{\alpha\sqrt{N \log N}}{dN}
    \le \frac{\alpha\sqrt{n \log N}}{8}.
  \end{align}

  In the second case given by Lemma~\ref{lemma:dbcl-util-phase1-eventual},
  we have $x \ge m/8$ with probability at least $1 - N^{-\alpha/6}$.
  Along with the fact that $\hat y \ge m/16$ throughout all stages
  of Phase 1 means that $\phi \ge x \hat y \ge (3/128)c^2 n^2$.
  Substituting this into \eqref{eq:dbcl-pl1} shows that in this
  second case,
  \begin{align}
    p(l) \le \frac{1/2}{1/4 + 3/(128\beta)} \le 143 \beta.
  \end{align}
    
  Now again in a sequence of $200N$ productive steps, the expected
  number of leak events in this case is at most
  $28600 \beta N$. When $\beta \le (\alpha \sqrt{N \log N})/dN$,
  setting $d$ sufficiently large and applying an upper Chernoff bound
  shows that the number of leaks among this sequence will only
  exceed $(\alpha \sqrt{N \log N})/8$ with probability at
  at least $N^{-\alpha}$ when $N$ is sufficiently large.

  Thus in either case of a Phase 1 stage, the number of leak events
  in a sequence of $200N$ productive steps will never exceed
  $(\alpha \sqrt{N \log N})/8$, and by summing over all error probabilities
  and taking a union bound, this behavior holds with probability at
  least $1 - N^{-\alpha} - N^{-\alpha/6}$. 
\end{proof}

%
%
\begin{lemma}
  \label{lemma:dbcl-phase1-steps}
  For a population where $m = cn$ and with input margin
  $\epsilon \ge \alpha \sqrt{N \log N}$ for $c, \alpha \ge 1$,
  there exists some $d \ge 1$ such that when
  $\beta \le (\alpha\sqrt{N\log N})/dN$,
  each stage of Phase 1 of the $\dbamc$ protocol completes correctly
  within $\max\{38000N, 3600Nc\}$ total interactions with probability
  at least $1 - N^{-\alpha} - N^{-\alpha/6}- N^{-\alpha/3}$ when
  $N$ is sufficiently large.
\end{lemma}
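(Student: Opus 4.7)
The plan is to mirror Lemma~\ref{lemma:dbl-a1.2} from the non-CI leak analysis, adapting the argument to accommodate catalytic inputs contributing to blank-consuming steps. Fix a stage of Phase 1 starting at $P = \Delta_0 \ge \alpha\sqrt{N \log N}$. First, I invoke Lemma~\ref{lemma:dbcl-phase1-leaks} to guarantee that within any window of $200N$ productive steps, at most $(\alpha\sqrt{N \log N})/8$ leak events occur with probability at least $1 - N^{-\alpha} - N^{-\alpha/6}$. Since each adversarial leak decreases $P$ by $2$, the cumulative negative contribution of leaks in this window is at most $(\alpha\sqrt{N \log N})/4 \le \Delta_0/4$. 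Consequently, to double $P$ (or reach $7m/8 + \epsilon$), it suffices to show that the remaining $\ge 199N$ non-leak productive steps push $P$ upward by at least $5\Delta_0/4$.

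Next, I reduce this to a bound on blank-consuming interactions. Applying Lemma~\ref{lemma:dbcl-util-prod-prodb} with $y_0 \le N$, at least roughly $99N$ of the non-leak productive steps are blank-consuming. By Lemma~\ref{lemma:dbcl-util-phase1-nodouble}, throughout the window we have $P \ge \Delta_0/2$ with probability at least $1 - N^{-\alpha/6}$, so the probability of a $P$-increasing blank-consuming step (i.e., a $B + I_X$ or $B + X$ interaction) conditioned on a blank-consuming step is at least $1/2 + \Delta_0/(10N)$, exactly as in the proof of Lemma~\ref{lemma:doubling-difference}. A standard upper Chernoff tail bound on the difference between $P$-increasing and $P$-decreasing blank-consuming steps then shows this difference exceeds $5\Delta_0/4$ with error at most $\exp(-\Omega(\Delta_0^2/N)) \le N^{-(2/3)\alpha}$ for an appropriate choice of constants, matching the exponent appearing in the stated probability.

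Finally, to convert the $200N$ productive-step requirement into a total-step bound, I lower-bound $p(\text{prod}) \ge 2\phi/N^2$ using the case analysis of Lemma~\ref{lemma:dbcl-util-phase1-eventual}. When $b \ge m/4$, the dominating term gives $\phi \ge b(i_x + i_y) = \Omega(cn^2)$, so $p(\text{prod}) = \Omega(c/(c+1)^2)$, requiring $O(Nc)$ total steps; when $b \le m/4$ and $x \ge 3m/8$ (which holds w.h.p.\ by the same lemma), we get $\phi \ge x \hat y \ge (3m/8)(m/16) = \Omega(c^2 n^2)$, so $p(\text{prod}) = \Omega(c^2/(c+1)^2) = \Omega(1)$, requiring $O(N)$ total steps. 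A lower Chernoff bound on the number of productive events in a sequence of $\max\{38000N, 3600Nc\}$ total steps, together with a union bound over all error sources, then yields the stated success probability.

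The main obstacle lies in carefully choreographing the constants: the $5\Delta_0/4$ slack must be sized so that the Chernoff exponent works out to exactly $(2/3)\alpha$ (rather than $\alpha/2$ as in the non-leak case, since leaks consume some of our progress budget), and the two total-step cases must be tuned so the dependence on the worker-to-input ratio $c$ separates cleanly into the additive $\max\{38000N, 3600Nc\}$ form. Once these constants are lined up, the remaining routine calculations and union bound over the $O(\log N)$ stages will propagate cleanly to Theorem~\ref{thm:dbcl-main}.
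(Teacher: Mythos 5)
Your proposal mirrors the paper's proof step for step: the same invocation of Lemma~\ref{lemma:dbcl-phase1-leaks} to cap the leak damage at $\Delta_0/4$, the same reduction via Lemma~\ref{lemma:dbcl-util-prod-prodb} to a subsequence of blank-consuming steps with conditional success probability $1/2 + \Delta_0/(10N)$, the same Chernoff argument yielding the $N^{-(2/3)\alpha}$ term, and the same two-case lower bound on $p(\mathrm{prod})$ (via $\phi \ge b(i_x+i_y)$ when $b \ge m/4$ and $\phi \ge x\hat y$ otherwise) producing the $\max\{38000N,\,3600Nc\}$ split. This is essentially the paper's argument, and it is correct.
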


\begin{proof}
  Consider a stage of Phase 2 starting with 
  $P = \epsilon + (\hat x- \hat y) = \Delta_0$, and recall that
  the stage completes correctly once $P$ increases to
  $\max\{2\Delta_0, 7m/8 + \epsilon\}$.   
  By Lemma~\ref{lemma:dbcl-phase1-leaks},  when $\beta \le (\alpha \sqrt{N \log N}){dN}$,
  for suitable choice of $d \ge 1$, a sequence
  of $200N$ productive events will contain at most $(\alpha \sqrt{N \log N})/8$
  leak events with probability at least $1 - 2N^{-\alpha}$. This means
  that leak events decrease the value of $P$ by at most $(\alpha \sqrt{N \log N})/4$
  throughout the sequence with this same probability and choice of $d$. 
  
  We will show that despite these leaks, the stage still completes
  correctly in the remaining at least $199N$ non-leak productive events
  with high probability. 
  First, let $\lambda$ denote the number of blank-consuming
  productive steps within the sequence of $200N$ total productive steps. 
  Clearly $\lambda \le 200N$, and  by Lemma~\ref{lemma:dbcl-util-prod-prodb}
  we have that $\lambda \ge 98N$. 
  Letting $S_{bx}$ denote the number of $X+B$ or $I_X+B$ steps and
  letting $S_{by}$ denote the number of $Y+B$ or $I_Y + B$ steps
  throughout the subsequence of blank-consuming productive steps, we
  have that $\lambda = S_{bx} + S_{by}$.
  Our goal then is to show that $S_{bx} - S_{by} \ge 5\Delta_0/4$, 
  meaning that the progress measure $P$ increases enough to both offset the 
  effect of the leaks and to still double from its original value within the sequence
  of non-leak productive steps. Showing that $S_{bx} > \lambda/2 + 5\Delta_0/8$
  is sufficient to ensure that $S_{bx} - S_{by} \ge 5\Delta_0/4$, 
  and so we proceed to prove this bound holds with high probability. 
  
  First, recall from the non-leak analysis of the $\dbamc$ protocol 
  that, conditioned on a blank-consuming productive step, 
  the probability of a $X+B$ or $I_X+B$ step is at least
  $p(bx) \ge 1/2 + \Delta_0/(10N)$. Thus in expectation, we have
  $$\E[S_{bx}] \ge \frac{\lambda}{2} + \frac{\lambda\Delta_0}{10N} 
  = \frac{\lambda(10 N + \Delta_0)}{20N}.$$ 
  Using a lower Chernoff bound then shows that the probability
  of the event $S_{bx} \le \lambda/2 + 5\Delta_0/8$ is bounded by
  \begin{align}
      \Pr\left[
      S_{bx} \le \lambda/2 + 5\Delta_0/8
      \right]
      &=
      \Pr\left[
      S_{bx} \le \E[S_{bx}]
      \left(
      1 - \frac{\Delta_0}{\E[S_{bx}]}
      \left(
      \frac{8\lambda- 50N}{80N}
      \right)
      \right)
      \right] \\
      &\le
      \exp\left(
      - \frac{1}{2}
      \cdot 
      \frac{\Delta_0^2(8\lambda - 50n)^2}{(80N)^2}\cdot\frac{20N}{\lambda(10N + \Delta_0)}
      \right) \\
      &\le
      \exp\left(
      - 0.38 \alpha^2 \log N
      \right) \\
      &\le
      N^{-\alpha/3}
  \end{align}
  where we use the fact that $98N \le \lambda \le 200N$ and 
  $\Delta_0 \ge \alpha \sqrt{N \log N}$. 
  Thus with probability all but $N^{-\alpha/3}$, the $200N$ productive steps
  is sufficient to increase $P$ from $\Delta_0$ to at most $2\Delta_0$ 
  despite the leak events.
  




  To complete the proof, we compute the number of total interactions
  needed to obtain with high probability these $200N$ productive events.
  For this, let $p(prod)$ denote the probability that
  the next interaction is a productive event, where
  \begin{align}
    p(prod) \ge \frac{(1-\beta)\phi}{\binom{N}{2}} + \beta
    \ge
    \frac{(9/10)\phi}{N^2/2},
    \label{eq:dbl1-pprod}
  \end{align}
  where the inequality holds given that $\beta \le 1/10$ when $N$
  is sufficiently large. 

  Now recall that in the first case of a Phase 1 stage where
  $b \ge m/4$ that $\phi \ge cn^2/4$, and thus
  $p(prod) \ge (9/20)(c/(c+1)^2) \ge 1/(9c)$.
  Using an upper Chernoff bound, it follows that in a sequence
  of $3600Nc$ total steps, the probability of having fewer than
  $200N$ productive steps as at most $\exp(-50Nc) \le N^{-\alpha}$
  when $n$ is sufficiently large.

  In the second case of a Phase 2 stage where $x \ge m/4$,
  which by Lemma~\ref{lemma:dbcl-util-phase1-eventual} holds
  with probability at least $1 - N^{-\alpha/6}$, we have
  that $\phi \ge (3c^2n^2)/128$, and thus
  we find that $p(prod) \ge 1/95$.
  Again, using an Upper Chernoff bound shows that within
  a sequence of $38000N$ total steps, the probability of
  obtaining fewer than $200N$ productive steps is
  at most $\exp(-(38000/4)N) \le N^{-\alpha}$ for
  sufficiently large $N$.

  Summing all error probabilities and taking a union bound
  shows that for leak rate $\beta \le (\alpha \sqrt{N \log N})/dn$
  when $d \ge 1$ is chosen suitably, each stage of Phase 1
  will complete within at most
  $\max\{3600 N c \;,\; 38000N\}$
  with probability at least $1 - N^{-\alpha} -N^{-\alpha/6} - N^{-\alpha/3}$
  when $N$ is sufficiently large. 
\end{proof}

\noindent
\paragraph{Phase 2 Behavior} 
The next set of lemmas characterizes the number of productive 
and total steps needed to complete each 
stage of Phase 2 of the protocol correctly. Here, we make separate claims
based on whether the leak rate $\beta$ is large or small, however
the proofs involved for both cases are nearly identical in strategy. 
We first estimate the number of leak events that occur within
a sequence of steps during a stage of Phase 2 when the leak rate is large. 

%
%
\begin{lemma}
  \label{lemma:dbcl-phase2-leaks-largeB}
  During Phase 2 of the $\dbamc$ protocol on a population where $m = cn$
  for $c \ge 1$ and adversarial leak rate
  $\omega(\log N/N) \le \beta \le O(\sqrt{N \log N}/N)$,
  starting at any point during the stage that begins with $\hat y = m / k$
  for $16 \le k \le 1/(600\beta)$,
  the number of leak events in a sequence of $200m/k$ total productive
  steps is at most $m/k$ with probability at least $1-N^{-\alpha} - N^{-\alpha/6}$
  for $\alpha \ge 1$ when $N$ is sufficiently large. 
\end{lemma}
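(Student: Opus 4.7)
The plan is to mirror the structure of Lemma~\ref{lemma:dbl-a2.1-largeB}, adapted for the $\dbamc$ setting where the productivity sum now includes worker-catalyst interactions of the form $B+I_X$ and $B+I_Y$. First, I would express the conditional probability of a leak event given a productive interaction as
\[
p(l) \le \frac{\beta \binom{N}{2}}{\beta \binom{N}{2} + (1-\beta)\phi},
\]
where for the $\dbamc$ protocol $\phi = b(i_x+i_y) + b(x+y) + xy$. To lower bound $\phi$, I would use the weaker but sufficient estimate $\phi \ge x \cdot \hat y$.

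Next, I would invoke the utility lemmas to control both factors. By Lemma~\ref{lemma:dbcl-util-phase1-eventual}, $x \ge 3m/8$ holds throughout Phase 2 with probability at least $1 - N^{-\alpha}$, and by Lemma~\ref{lemma:dbcl-util-phase2-yhat}, $\hat y$ remains bounded (and in particular remains at least $m/(2k)$ until the stage completes correctly with a halving of $\hat y$). Combining these gives $\phi \ge 3m^2/(16k)$, and substituting back and simplifying using $m = cn = \Theta(N)$ yields a bound of the form $p(l) \le O(k\beta)$.

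Then, in a sequence of $200m/k$ productive steps, the expected number of leaks is $\E[l] \le O(\beta m)$. Since we are in the large-$\beta$ regime $\beta \ge \omega(\log N/N)$, this expectation is $\omega(\log N)$, which puts us in the sharp-concentration range of the Chernoff bound. An upper Chernoff tail bound then shows that the probability of exceeding $m/k$ leaks is at most $N^{-\alpha}$ for any $\alpha \ge 1$ when $N$ is sufficiently large, since the assumption $k \le 1/(600\beta)$ guarantees $m/k \ge 600\beta m$, comfortably exceeding a constant multiple of $\E[l]$.

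The main obstacle will be bookkeeping the constants so that the factor of $600$ defining the $k$-range combines correctly with the slack required by the Chernoff bound, and assembling the high-probability events from Lemmas~\ref{lemma:dbcl-util-phase1-eventual} and~\ref{lemma:dbcl-util-phase2-yhat} via a union bound to arrive at the stated $1 - N^{-\alpha} - N^{-\alpha/6}$ success probability.
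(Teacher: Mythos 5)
Your proposal follows essentially the same route as the paper's proof: bound $p(l)$ via $\phi \ge x\hat y \ge (3m/8)\cdot(m/2k) = 3m^2/(16k)$ using Lemma~\ref{lemma:dbcl-util-phase1-eventual} and the stage's lower bound on $\hat y$, conclude $p(l) \le O(k\beta)$, compute $\E[l] = O(\beta m)$ over $200m/k$ productive steps, and apply an upper Chernoff bound (sharp because $\beta m = \omega(\log N)$) together with the observation that $k \le 1/(600\beta)$ forces $600\beta m \le m/k$. This matches the paper's argument in both structure and constants, so the proposal is correct.
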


\begin{proof}
  The probability of leak event conditioned on any productive step,
  which we denote by $p(l)$, is given by
  \begin{align}
    p(l) = \frac{\beta \binom{N}{2}}{\beta \binom{N}{2}+ (1-\beta) \phi},
  \end{align}
  where again $\phi = bn + b(x+y) + xy \ge bn + x\hat y$.

  Recall that the stage of Phase 2 that begins  $\hat y = m/k$ and only finishes
  once $\hat y$ decrease to $m /2k$. Also, by Lemma~\ref{lemma:dbcl-util-phase1-eventual}
  we have that $x \ge 3m/8$ holds throughout Phase 2 of the protocol
  with probability at least $1-n{-\alpha/6}$.
  Thus throughout the stage it follows that $\phi \ge 3m^2/16k$.
  In turn, simplifying our expression for $p(l)$ leads to the bound
  $p(l) \le (16/16) k \beta$.

  Now consider a sequence of $200m/k$ productive steps. Letting $l$ denote
  the number of leak events within this sequence we have in expectation that
  $\E[k] < 540 m\beta$. Then applying an upper Chernoff bound shows that
  the probability that $l$ exceeds $600 m\beta$ is at most
  $\exp(-(1/3\cdot81)\cdot (540cn\beta)) \le n^{-\alpha}$ for $\alpha \ge 1$
  and sufficiently large $n$ given that $\beta = \omega(\log n /n)$.

  Futher observe that $600m\beta \le m/k$ when $k \le 1/(600\beta)$,
  which by our assumptions on $k$ is always satisfied.
  Summing over all error probabilities and taking a union bound then shows
  that the number of leaks within the sequence of productive steps
  as at most $m/k$ with probability at least $1 - N^{-\alpha}- N^{-\alpha/6}$.\qedhere
\end{proof}

%
%
\begin{lemma}
  \label{lemma:dbcl-phase2-leaks-smallB}
  During Phase 2 of the $\dbamc$ protocol on a population where $m = cn$
  for $c \ge 1$ and adversarial leak rate
  $\beta \le (a \log N) /N$ for $a \ge 1$, 
  starting at any point during the stage that begins with $\hat y = m / k$
  for $16 \le k \le m/(600 a \log m)$,
  the number of leak events in a sequence of $200m/k$ total productive
  steps is at most $m/k$ with probability at least $1-N^{-\alpha} - N^{-\alpha/6}$
  for $\alpha \ge 1$ when $N$ is sufficiently large. 
\end{lemma}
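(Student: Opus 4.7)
The plan is to mirror the proof of Lemma~\ref{lemma:dbcl-phase2-leaks-largeB} almost verbatim, swapping only the final Chernoff computation to accommodate the smaller leak rate. The structural inputs on which the large-$\beta$ argument depends -- namely the lower bound $\phi \ge bn + x\hat y \ge 3m^2/(16k)$ throughout a stage beginning with $\hat y = m/k$, and the fact that $x \ge 3m/8$ holds throughout Phase 2 with probability at least $1 - N^{-\alpha/6}$ by Lemma~\ref{lemma:dbcl-util-phase1-eventual} -- are independent of the magnitude of $\beta$ and carry over directly. These give the same conditional leak probability bound $p(l) \le O(k\beta)$.

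I would then plug in the new hypothesis $\beta \le (a \log N)/N$. Over a sequence of $200m/k$ productive steps, the expected number of leaks becomes $\mathbb{E}[l] \le O(m\beta) = O(m \cdot a\log N / N) = O(a \log N)$, where I use $m = cn$ so $m/N = \Theta(1)$. The target bound $m/k$ from the statement satisfies $m/k \ge 600 a \log m$ by the hypothesis $k \le m/(600 a \log m)$, so the target is a constant factor above the mean.

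Next I would invoke an upper Chernoff bound (Lemma~\ref{lemma:chernoff-tail-bounds}) to control the probability that $l$ exceeds $m/k$. Because $m/k = \Omega(a\log m) = \Omega(a\log N)$ while $\mathbb{E}[l] = O(a\log N)$, a standard Chernoff tail gives $\Pr[l > m/k] \le \exp(-\Omega(a\log N)) \le N^{-\alpha}$ for $\alpha \ge 1$ and $N$ sufficiently large, provided the constants in Lemma~\ref{lemma:dbcl-phase2-leaks-largeB} are chosen so that the multiplicative gap between $m/k$ and $\mathbb{E}[l]$ exceeds the Chernoff constant. Finally, a union bound with the $N^{-\alpha/6}$ conditioning failure from Lemma~\ref{lemma:dbcl-util-phase1-eventual} yields the claimed $1 - N^{-\alpha} - N^{-\alpha/6}$ probability.

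The only real obstacle is bookkeeping: one must verify that the constant $600$ in the hypothesis $k \le m/(600 a\log m)$ (rather than some larger constant) is large enough to force the ratio $(m/k)/\mathbb{E}[l]$ to exceed whatever Chernoff deviation multiplier is needed to extract the $N^{-\alpha}$ bound. If the $600$ is too tight, the constants must be absorbed by taking $N$ sufficiently large, which is already permitted by the statement. No new probabilistic machinery beyond what was used for the large-$\beta$ case is required.
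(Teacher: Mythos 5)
Your proposal is correct and matches the paper exactly: the paper omits this proof, stating only that it is ``nearly identical'' to that of Lemma~\ref{lemma:dbcl-phase2-leaks-largeB} with the adjusted bounds on $\beta$ and $k$, and your reconstruction---reusing the $\phi \ge 3m^2/(16k)$ and $x \ge 3m/8$ inputs to get $p(l) \le O(k\beta)$, then redoing the Chernoff step with $\E[l] = O(a\log N)$ against the target $m/k \ge 600\, a\log m$---is precisely that intended argument. The constant bookkeeping you flag does work out (the $600$ versus the $540$-type bound on $\E[l]$ leaves the same $(1+1/9)$ multiplicative gap used in the analogous $\dbam$ lemma).
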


The proof of Lemma~\ref{lemma:dbcl-phase2-leaks-smallB}
is nearly identical to that of Lemma~\ref{lemma:dbcl-phase2-leaks-largeB},
and uses the distinct upper bounds on $\beta$ and $k$ to show the
result. For ease of readability, we omit the full proof. 
Now, in the next lemma, we estimate the number of total steps needed 
to complete each stage of Phase 2 correctly in the large leak rate setting. 

%
%
\begin{lemma}
  \label{lemma:dbcl-phase2-steps-largeB}
  During the $\dbamc$ protocol on a population where $m = cn$
  for $c \ge 4$ and adversarial leak rate
  $\omega(\log N / N) \le \beta \le O(\sqrt{N \log N} / N)$,
  each stage of Phase 2 with initial value $\hat y = m/k$
  for $16 \le k \le 1/(600\beta)$ completes within
  $1200cn$ total interactions with probability at least $1 - 5N^{-\alpha} - N^{-\alpha/6}$
  for $\alpha \ge 1$ when $N$ is sufficiently large. 
\end{lemma}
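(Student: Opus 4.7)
The plan is to transfer the structure of Lemma~\ref{lemma:dbl-a2.2-largeB} from the original model to the CI setting, combined with the productive-step-to-total-step conversion from Lemma~\ref{lemma:dbcl-phase1-steps}. Fix a stage of Phase 2 that begins with $\hat y = m/k$ for some $16 \le k \le 1/(600\beta)$.

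First, I would consider a sequence of $200 m/k$ productive steps. By Lemma~\ref{lemma:dbcl-phase2-leaks-largeB}, the number of leak events within this sequence is at most $m/k$ with probability at least $1 - N^{-\alpha} - N^{-\alpha/6}$. Since each adversarial leak alters a single worker's state, its contribution to $\hat y$ is bounded by a constant, so the cumulative leak-induced increase to $\hat y$ over the sequence is at most $O(m/k)$.

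Second, I would analyze the remaining at least $199 m/k$ non-leak productive steps. By Lemma~\ref{lemma:dbcl-util-prod-prodb}, at least $(199 m/k - y_0)/2 \ge 99 m/k$ of these are blank-consuming, since $y_0 \le \hat y_0 = m/k$. To complete the stage correctly, $\hat y$ must decrease by $m/(2k)$; each blank-consuming step changes $\hat y$ by $\pm 1/2$, so accounting for the leak reversal it suffices for the number of successful (i.e., $B + I_X$ or $B + X$) blank-consuming steps $S_x$ to exceed the number of unsuccessful ($B + I_Y$ or $B + Y$) steps $S_y$ by an excess of order $m/k$. Lemma~\ref{lemma:dbcl-util-phase2-yhat} guarantees $\hat y \le m/6$ throughout the sequence and Lemma~\ref{lemma:dbcl-util-phase1-eventual} gives $x \ge 3m/8$, so the same reasoning used in Lemma~\ref{lemma:doubling-difference} and Lemma~\ref{lemma:dbamc23-corr-mk} yields a conditional success probability at least $1/2 + \Omega(m/N)$ for each blank-consuming step, which under $m = cn$ with $c \ge 4$ is bounded below by a constant strictly greater than $1/2$. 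A lower Chernoff bound applied to $S_x$ over the subsequence of $99m/k$ blank-consuming steps then delivers the required excess with probability at least $1 - N^{-\alpha}$.

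Third, I would convert the productive-step budget into a total-step budget. The probability that the next event is productive satisfies
\begin{align*}
p(\mathrm{prod}) \;\ge\; \frac{(1-\beta)\phi}{\binom{N}{2}},
\end{align*}
where $\phi \ge x\hat y \ge (3m/8)(m/(2k)) = 3m^2/(16k)$ holds throughout the stage with high probability. Since $m = cn$ and $N = (c+1)n$ with $c \ge 4$, this yields $p(\mathrm{prod}) \ge \Omega(1/k)$ with a constant that is well-controlled in $c$, so an upper Chernoff bound ensures that $200 m/k$ productive events are obtained within $1200 cn$ total interactions with probability at least $1 - N^{-\alpha}$ for sufficiently large $N$. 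Summing all error probabilities and taking a union bound yields the stated claim. The main obstacle I expect is keeping the numerical constants tight enough that $1200cn$ truly suffices across all admissible $k$ and $c \ge 4$, since the productive-step requirement scales like $cn/k$ while the conditional productive-step probability scales like $1/k$; the constants must be chosen so that the two balance within the stated budget, just as in the analogous Lemma~\ref{lemma:dbl-a2.2-largeB}.
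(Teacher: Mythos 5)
Your proposal follows essentially the same route as the paper's proof: the same leak-count bound from Lemma~\ref{lemma:dbcl-phase2-leaks-largeB} over $200m/k$ productive steps, the same reduction to a subsequence of blank-consuming steps via Lemma~\ref{lemma:dbcl-util-prod-prodb}, the same lower bound on the conditional probability of a $B+X$/$B+I_X$ step using Lemmas~\ref{lemma:dbcl-util-phase2-yhat} and~\ref{lemma:dbcl-util-phase1-eventual}, and the same Chernoff-based conversion from productive to total steps using $\phi \ge 3m^2/(16k)$. The only difference is that you leave the constants in asymptotic form where the paper pins them down explicitly (required excess $3m/k$, $p(bx) \ge 167/320$, $p(\mathrm{prod}) \ge 1/(3k)$), and the paper confirms these do fit within the $1200cn$ budget.
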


\begin{proof}

  Recall that by Lemma~\ref{lemma:dbcl-phase2-leaks-largeB} the
  number of leak events in a sequence of $200m/k$ productive steps
  is at most $m/k$ with probability at least $1-2N^{-\alpha}$.
  Our goal is to show that in the remaining at least $199m/k$
  non-leak productive steps, the stage of Phase 2 starting with
  $\hat y = m/k$ completes (decreases to $\hat y = m/2k$) with
  high probability. 
  
  Similar to the proof of Lemma~\ref{lemma:dbcl-phase1-steps}, let 
  $\lambda$ denote the number of blank-consuming
  productive steps within the sequence of $200m/k$ total productive steps. 
  Clearly $\lambda \le 200m/k$, and  by Lemma~\ref{lemma:dbcl-util-prod-prodb}
  we have that $\lambda \ge 98m/k$. 
  Now letting $S_{bx}$ denote the number of $X+B$ or $I_X+B$ steps and
  letting $S_{by}$ denote the number of $Y+B$ or $I_Y + B$ steps
  throughout the subsequence of blank-consuming productive steps, we
  have that $\lambda = S_{bx} + S_{by}$.
  To complete the stage correctly, our goal then is to show 
  that $S_{bx} - S_{by} \ge 2m/k + m/k = 3m/k$,
  since each blank-consuming interaction changes $\hat y$ 
  by a value of 1/2. To ensure this gap between 
  $S_{bx}$ and $S_{by}$ holds, it is sufficient to show that 
  $S_{bx} > \lambda/2 + 1.5m/k$ holds with high probability 
  throughout the sequence. 
  
  To prove this latter bound, recall that
  by Lemma~\ref{lemma:dbcl-util-phase2-yhat}, 
  throughout the stages of Phase 2 of the protocol, $\hat y \le m/6$
  holds with probability at least $1-N^{-\alpha}$ for $\alpha \ge 1$. 
  Letting $p(bx)$ denote the probability of a $X+B$ or $I_X+B$
  interaction, this means that $p(bx) \ge 1/2 + (2m/3)/10N$, 
  and when $c \ge 4$ it can be verified that
  $p(bx) \ge 1/2 + 7/320 = 167/320$.
  Thus among the subsequence of $\lambda$ blank-consuming productive
  steps, we have that 
  $\E[S_{bx}] \le \lambda 167/ 320$ with high probability. 

  Using a lower Chernoff bound then shows that
  the probability of the event $S_{bx} \le \lambda/2 + 1.5m/k$ 
  is bounded by
  \begin{align}
      \Pr\left[
      S_{bx} \le \lambda/2 + 1.5m/k
      \right]
      &= 
      \Pr\left[
      S_{bx} \le
      \E[S_{bx}]
      \left(
      1 - \frac{1}{\E[S_{bx}]}
      \left(
      \frac{14\lambda k - 960m}{640k}
      \right)
      \right)
      \right] \\
      &\le 
      \exp\left(
        -\frac{1}{2}
        \cdot
        \frac{(14\lambda k - 960m)^2}{(640k)^2}
        \cdot
        \frac{320}{167\lambda}
      \right) \\
      &\le
      \exp\left(
      - \frac{1}{505} \cdot \frac{m}{k}
      \right) \\
      &\le
      \exp\left(
      - 1.15 \cdot \beta m
      \right) \\
      &\le 
      \exp\left(
      - \omega(\log N)
      \right) \\
      &\le 
      N^{-\alpha}
  \end{align}
  for any $\alpha \ge 1$ when $N$ is sufficiently large. 
  Here, we use the fact that
  $98m/k \le \lambda \le 200m/k$, that $k \le 1/(600\beta)$
  and $\beta \ge \omega(\log N/N)$, and that $m = cn$. 
  So $S_{bx} > \lambda/2 + 1.5m/k$ will hold following the sequence
  of productive steps, and so $S_{bx} - S_{by} \ge 3m/k$ will hold 
  as required with probability all but $N^{-\alpha}$.
  
  To complete the proof, we must now compute the number of
  total interactions needed to obtain $200m/k$ productive
  steps with high probability.
  Recall from line~\eqref{eq:dbl1-pprod} in the Proof
  of Lemma~\ref{lemma:dbcl-phase1-steps} that
  the probability that the next event is a productive step,
  denoted by $p(prod)$ is at least $((9/10)\phi)/(N^2/2)$,
  where $\phi = bn + b(x+y) + xy \ge x\hat y$.

  Note that by Lemma~\ref{lemma:dbcl-util-phase1-eventual}
  we have $x \ge 3m/8$ with probability at least $1-N^{-\alpha/6}$
  throughout Phase 2, which means that throughout the stage
  beginning with $\hat y = m/k$ we have $\phi \ge 3m^2/(16k)$
  with high probability. This implies then that
  $p(prod) \ge (27/81)/k \ge 1/(3k)$ throughout the stage
  with high probability.

  Now, applying an upper Chernoff bound shows that in
  $1200m = 1200cn$ total steps, the probability of
  obtaining fewer than $200m/k$ productive steps is at most
  \begin{align}
    \exp\left(
    - 50 \frac{cn}{k}
    \right)
    \le
    \exp\left(
    - 30000 c \cdot \omega(\log N)
    \right)
    \le
    N^{-\alpha}
  \end{align}
  for sufficiently large $N$, where again we use the fact
  that $k \le 1/(600\beta)$, $\beta \ge \omega(N \log N)$,
  and $m = cn$ for $c \ge 4$. 

  Thus summing over all error probabilities and taking a union
  bound, we find that with probability at least
  $1-4N^{-\alpha} - N^{-\alpha/6}$,
  each stage of Phase 2 will complete correctly within
  $1200cn$ total steps when $N$ is sufficiently large. 
\end{proof}
  
\begin{lemma}
  \label{lemma:dbcl-phase2-steps-smallB}
  During the $\dbamc$ protocol on a population where $m = cn$
  for $c \ge 4$ and adversarial leak rate
  $\beta \le (a \log N)/N$ for $a \ge 1$,
  each stage of Phase 2 with initial value $\hat y = m/k$
  for $16 \le k \le m/(600a \log m)$ completes within
  $1200cn$ total interactions with probability at least
  $1 - 4N^{-\alpha} - N^{-\alpha/6}$ for $\alpha \ge 1$ when $N$ is
  sufficiently large. 
\end{lemma}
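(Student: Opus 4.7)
The plan is to mirror the proof structure of Lemma~\ref{lemma:dbcl-phase2-steps-largeB} essentially verbatim, swapping in the small-$\beta$ leak bound from Lemma~\ref{lemma:dbcl-phase2-leaks-smallB} and re-verifying that the Chernoff error terms still shrink as $N^{-\alpha}$ under the different parameter regime $\beta \le (a \log N)/N$ and $16 \le k \le m/(600 a \log m)$.

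First I would invoke Lemma~\ref{lemma:dbcl-phase2-leaks-smallB} to conclude that, in any sequence of $200m/k$ productive steps during the stage beginning with $\hat y = m/k$, at most $m/k$ are leak events with probability at least $1 - N^{-\alpha} - N^{-\alpha/6}$. Since each leak decreases $\hat y$ by at most $1$, these leaks collectively decrease $\hat y$ by at most $m/k$, so to complete the stage it suffices to show that the non-leak productive subsequence drives $\hat y$ down by a further $2m/k$. By Lemma~\ref{lemma:dbcl-util-prod-prodb}, the at least $199m/k$ non-leak productive steps contain at least $98m/k$ blank-consuming steps, and it is enough to show that the number $S_x$ of $B+X$ and $B+I_X$ interactions among them exceeds the number $S_y$ of $B+Y$ and $B+I_Y$ interactions by at least $3m/k$.

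Next, I would invoke Lemma~\ref{lemma:dbcl-util-phase2-yhat} to ensure $\hat y \le m/6$ throughout Phase~2 with high probability, which together with $c \ge 4$ gives the same conditional probability bound $p(bx) \ge 167/320$ used in the large-$\beta$ proof. Thus $\E[S_x] \ge 51m/k$ out of $98m/k$ blank-consuming steps, and the required $S_x \ge 50.5m/k$ is at most $0.5m/k$ below the mean. A lower Chernoff bound then gives failure probability at most $\exp\bigl(-\tfrac{1}{2 \cdot 102^2} \cdot \tfrac{51m}{k}\bigr)$; under the new constraint $k \le m/(600 a \log m)$, we have $m/k \ge 600 a \log m = \Omega(\log N)$ (since $m = cn$ forces $\log m = \Theta(\log N)$), so for a suitable choice of constants this is at most $N^{-\alpha}$.

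Finally, I would bound the total step count by lower-bounding $p(\mathrm{prod}) \ge 1/(3k)$ exactly as in the large-$\beta$ proof (the inequality $\phi \ge x \hat y \ge 3m^2/(16k)$ depends only on $x \ge 3m/8$ from Lemma~\ref{lemma:dbcl-util-phase1-eventual} and on $\hat y \ge m/(2k)$, both of which still hold). A standard upper Chernoff bound then shows that in $1200cn = 1200m$ total steps, fewer than $200m/k$ productive events occur with probability at most $\exp(-c_1 m/k) \le N^{-\alpha}$ under the same $m/k = \Omega(\log N)$ bound. A union bound over the leak-count event, the $S_x$ deviation, the $\hat y$ bound, the $x$ bound, and the productive-count event gives the claimed probability $1 - 4N^{-\alpha} - N^{-\alpha/6}$. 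The main (and only) obstacle is checking that the constant $600$ in the upper bound on $k$ is indeed large enough so that every one of these Chernoff tails shrinks like a polynomial in $N$ of the required degree; this is where the small-$\beta$ regime differs from the large-$\beta$ analysis but the computation is otherwise structurally identical, so I would present the small-$\beta$ proof by simply pointing to the corresponding steps in Lemma~\ref{lemma:dbcl-phase2-steps-largeB} and highlighting the substitutions.
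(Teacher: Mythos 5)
Your proposal is correct and follows exactly the route the paper intends: the paper itself omits this proof, stating only that it is nearly identical to that of Lemma~\ref{lemma:dbcl-phase2-steps-largeB} with the bounds on $\beta$ and $k$ adjusted, and you have correctly identified that the only substantive change is that the $m/k \ge \Omega(\log N)$ lower bound now comes directly from $k \le m/(600a\log m)$ rather than from $k \le 1/(600\beta)$ combined with $\beta = \omega(\log N/N)$. One small prose slip: adversarial leaks \emph{increase} $\hat y$ (each $X \to Y$ leak raises $\hat y$ by $1$), which is precisely why the blank-consuming excess $S_x - S_y \ge 3m/k$ must cover both the halving target and the leak damage, but your target quantity is the right one so the argument is unaffected.
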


The proof of Lemma~\ref{lemma:dbcl-phase2-steps-smallB} is again
nearly identical to the proof of Lemma~\ref{lemma:dbcl-phase2-steps-largeB}
(save for the adjustments to the upper bound of $\beta$ and $k$),
and thus we omit the full proof for readability.


\noindent
\paragraph{Concluding the Proof} 
Finally, we use the preceding lemmas to prove the main
result of Theorem~\ref{thm:dbcl-main}, which characterizes 
the full behavior of the $\dbamc$ protocol in the presence of leaks. 
For convenience, we restate the theorem:

\thmdbclmain*

\begin{proof}
    For a population where $m = cn$, 
    by Lemma~\ref{lemma:dbcl-phase1-steps}, each of the 
    $O(\log N)$ stages of Phase 1 complete correctly
    within $O(cN)$ total steps with high probability
    when the constant $d$ is chosen appropriately. 
    By Lemmas~\ref{lemma:dbcl-phase2-steps-largeB} and
    ~\ref{lemma:dbcl-phase2-steps-smallB}, it follows
    that with high probability each of the $O(\log N)$
    stages of Phase 2 complete correctly within $O(N)$
    total steps for both small
    and large $\beta$ when $c\ge 4$.
    By setting the error parameters associated with 
    each stage of Phases 1 and 2 appropriately,
    it follows by a union bound that
    for sufficiently large $n$, there exist constants
    $\alpha, d \ge 1$ such that the protocol will reach 
    a configuration with the specified bounded sample error
    (according to the size of $\beta$)
    within $O(N \log N)$ total steps with probability
    at least $1 - N^{-a}$ for any $a \ge 1$. 
\end{proof}

%

\paragraph*{Long-term Behavior}

Similar to the $\dbam$ protocol for the original population model, 
we further show the long-term behavior of the $\dbamc$ protocol
in the CI model in the presence of leaks. 
In this setting, we analogously show that following the 
completion of Phase 2 of the protocol,
the value of $\hat y$ doesn't fluctuate too greatly for 
at least the next polynomially-many steps with all but
polynomially-small probability. 


We first prove the following lemma (analogous to 
Lemma~\ref{lemma:dbl-phase3-helper-sb}) which shows that 
with small leak rate $\beta$, the value of $\hat y$ remains 
$O(\log m)$ over a sequence of productive steps with high probability
after the completion of Phase 2 of the protocol. 

\begin{lemma}
 \label{lemma:dbcl-phase3-helper-sb}
 Suppose $\hat y = 301 a_1 \log m$ for some $a_1 \ge a$ following the 
 end of Phase 2 of the $\dbamc$ protocol with leak rate $\beta \le (a \log N)/N$
 on a population where $m = cn$ for $c \ge 4$. 
 Consider a sequence of $900 a_1 \log m $ productive steps. Then
 \begin{enumerate}
     \item 
     the maximum value of $\hat y$ throughout the sequence is
     at most $761 a_1 \log m$
     \item
     $\hat y$ decreases to $300a_1 \log m$ by the end of the sequence
 \end{enumerate}
 both with probability at least $1 - 2N^{-0.8a_1}$ when $N$ is sufficiently large. 
\end{lemma}

\begin{proof}
    The proof of the lemma is similar to that of 
    Lemma~\ref{lemma:dbl-phase3-helper-sb}. 
    Recall that when $\beta \le (a \log N)/N$ for some $a \ge 1$, 
    Phase 2 of the protocol ends when $\hat y \le 300 a \log m$.  
    Now, let $S_l, S_{bx},$ and $S_{by}$ be random variables denoting
    the number of leaks, $X+B$ or $I_X + B$, and $Y+B$ or $I_Y + B$
    interactions respectively during the sequence of $900 a_1 \log m$
    productive steps starting from $\hat y_0 = 301 a_1 \log m$. 
    We have the following two observations:
    \begin{enumerate}
    \item
    First, if 
    $$
    S_{bx} - S_{by} - 2S_l \ge 2 a_1 \log m,
    $$
    then $\hat y$ must decrease to $300 a_1 \log m$ by the end of 
    the sequence of productive steps. This follows directly 
    from the transition rules of the protocol, and the
    fact that a leak event increases the value of $\hat y$ by 1. 
   \item
    Now, letting $\hat y_{\max}$ denote the maximum value of $\hat y$ 
    throughout the sequence, and letting $\lambda$ denote the 
    number of blank-consuming steps among the sequence, observe that
    $$
    \hat y_{\max} \le \hat y_0 + S_l + 0.5 S_{by} 
    \le \hat y_0 + S_l + 0.5 \lambda,
    $$
    which again follows directly from the transition rules of
    the protocol. 
    Since at most $\lambda \le 900 a_1 \log m$, we have that
    $\hat y_{\max} \le 751 a_1 \log m + S_l$. 
    \end{enumerate}
    
    Using these two observations, we can see that if
    $S_{l} < 10a_1 \log m$ throughout the sequence of productive 
    steps, then claim (i) of the lemma is proven. 
    Moreover, if $S_l < 10a_1 \log m$, then it is sufficient to show
    that $S_{bx} - S_{by} \ge 22 a_1 \log m$ throughout the
    sequence to prove claim (ii) of the lemma. We will thus 
    proceed to proving these two bounds. 
    
    \noindent
    \textit{Bound on $S_l$}: 
    Let $p(l)$ denote the probability of a leak event conditioned
    on a productive step, and recall that 
    \begin{align}
        p(l) = \frac{\beta \binom{N}{2}}{\binom{N}{2} + (1-\beta) \phi}
    \end{align}
    where $\phi = bn + b(x+y) + xy \ge bn + x\hat y$. To bound $p(l)$ we
    observe the following inequalities:
    \begin{enumerate}[i.]
        \item 
        $\beta\binom{N}{2} \le 0.5 a N \log N$, which follows
        from $\beta \le a \log N/N$ and $\binom{N}{2} \le 0.5 N^2$. 
        \item
        $1-\beta \ge 0.95$, which holds since
        $\beta \le a \log N/N \le 1/20$ for any $a \ge 1$ and 
        sufficiently large $N$.
        \item
        $x \ge 0.95m$, since while $\hat y = y + b/2 \le O(\log m)$, then
        $x \ge m - O(\log m) \ge 0.95m$ when $m$ is sufficiently large. 
        \item
        $(1-\beta)\phi \ge 270 a m \log m$, which follows from
        $\phi \ge bn + x\hat y \ge x \cdot 300 a_1 \log m \ge x \cdot 300 a\log m$
        and from applying the inequalities in (ii) and (iii). 
    \end{enumerate}
    Using these inequalities, we can now bound $p(l)$ by
    \begin{align}
       p(l) \le \frac{0.5a N \log N}{270 a m \log m}
            \le \frac{1}{180},
    \end{align}
    which holds when $m = cn$  and $c \ge 4$. 
    It follows that throughout the sequence of $900 a_1 \log m$ 
    productive steps, 
    the expected number of leak events is 
    $\E[S_l] < (1/180) \cdot 900 a_1 \log m \le 5 a_1 \log m$. 
    Then using an upper Chernoff bound shows that 
    \begin{align}
        \Pr\left[
          S_l \ge 10 a_1 \log m
        \right]
        &\le
        \Pr\left[
          S_l \ge \E[S_l] \cdot 2
        \right] \\
        &\le
        \exp\left(
        - \frac{1}{3} \cdot
        5 a_1 \log m 
        \right) \\
        &\le 
        \exp\left(
        - 0.8 a_1 N
        \right) \\
        & \le N^{-0.8a_1},
    \end{align}
    where the penultimate inequality holds since $2\log m \ge \log N$
    when $m = cn$ for $c \ge 1$. 
    Thus $S_l \le 10 a_1 \log m$ with probability all but $N^{-0.8a_1}$. \\
    
    \noindent
    \textit{Bound on $S_{bx} - S_{by}$}:
    Recall that we wish to show $S_{bx} - S_{by} \ge 22 a_1 \log m$
    throughout the sequence of $900 a_1 \log m$ productive steps. 
    Letting $\lambda$ denote the number of blank-consuming productive
    steps throughout this sequence,
    by Lemma~\ref{lemma:dbcl-util-prod-prodb} it follows that
    $$
    290a_1 \log m \le \lambda \le 900 a_1 \log m.
    $$
    Given that $\lambda = S_{bx} + S_{by}$, showing that
    $S_{by} > \lambda/2 - 11 a_1 \log m$ is sufficient to ensure
    that $S_{bx} - S_{by} \ge 22 a_1 \log m$ throughout the sequence. 
    
    To show this bound on $S_{by}$ holds, we first bound the 
    probability of a $Y+B$ or $I_Y + B$ interaction conditioned on
    any blank-consuming productive step, which we denote by $p(by)$. 
    We have
    \begin{align}
        p(by) = \frac{by + bi_y}{bi_x + bi_y + bx + by}
        &= \frac{y + n/2}{x + y + n} \\
        &\le \frac{y + b/2 + n/2}{x + y + b + n}
        = \frac{\hat y + n/2}{N},
    \end{align}
    where the inequality holds when $x \ge y$ and given that
    $i_y < n/2$. 
    Assuming that $\hat y = O(\log m)$ throughout the sequence, 
    then $\hat y \le 0.05 m$ holds for sufficiently large $m$, and thus
    \begin{align}
        p(by) 
        &\le \frac{0.05m + 0.5 n}{m + n} \\
        &\le \frac{0.05 cn + 0.5 n}{(1+c)n} \\
        &< 0.14,
    \end{align}
    where the final inequality comes from the assumption that $c \ge 4$. 
    
    Thus throughout the subsequence of $\lambda$ blank-consuming productive
    steps, we have in expectation that
    $\E[S_{by}] < 0.14 \lambda$. Now using an upper Chernoff bound shows
    that 
    \begin{align}
        \Pr\left[S_{by} \ge (0.5\lambda - 11a_1\log m) \right] 
        &=
        \Pr\left[
        S_{by} \ge \E[S_{by}] 
        \left(1 + \frac{1}{\E[S_{by}]}\left(0.36\lambda - 11a_1\log m\right)\right)
        \right] \\
        &\le
        \exp\left(
        - \frac{1}{3} \cdot
         (0.36 \lambda - 11a_1 \log m)^2 \cdot \frac{7}{\lambda}
        \right) \\
        &\le
        \exp\left(
        - \frac{1}{3} \cdot 
        (93 a_1 \log m)^2 \cdot \frac{7}{900 a_1 \log m}
        \right) \\
        &\le 
        \exp\left(
        - 22 a_1 \log m 
        \right) \\
        &\le
        N^{-11a_1},
    \end{align}
    where we use the fact that $290 a_1 \log m \le \lambda \le 900 a_1 \log m$. 
    So $S_{by} > 0.5 \lambda - 11 a_1 \log m$ and thus 
    $S_{bx} - S_{by} \ge 22a_1 \log m$ throughout the sequence as required
    with probability all but $N^{-11a_1}$. 
    
    Taking a union bound over the error probabilities associated with
    the bounds on $S_{l}$ and $S_{by}$ then gives the
    stated claims with probability at least $1 - 2N^{-0.8a_1}$. 
\end{proof}

The protocol has a similar behavior in the case of \textit{large}
leak rate, which is given by the following lemma. Roughly,
this lemma says that for large $\beta$, the value of $\hat y$ 
remains at most $O(\beta m)$ over a sequence of $O(\beta m)$ productive
steps following the completion of Phase 2. The proof of
the lemma follows nearly identically to that of the previous lemma
and is thus omitted. 
\begin{lemma}
 \label{lemma:dbcl-phase3-helper-lb}
 Suppose $\hat y = 301 a_1 \beta m$ for some $a_1 \ge a$ following the 
 end of Phase 2 of the $\dbamc$ protocol with large leak rate 
 $\omega(log N/N) \le \beta \le (a \sqrt{N \log N})/N$
 on a population where $m = cn$ for $c \ge 4$. 
 Consider a sequence of $900 a_1 \beta m $ productive steps. Then
 \begin{enumerate}
     \item 
     the maximum value of $\hat y$ throughout the sequence is $761 a_1 \beta m$
     \item
     $\hat y$ decreases to $300a_1 \beta m$ by the end of the sequence
 \end{enumerate}
 both with probability at least $1 - 2N^{-0.8a_1}$ when $N$ is sufficiently large. 
\end{lemma}

Using the previous two lemmas, we can now state the following theorem
(analogous to Theorem~\ref{thm:dbl-phase3-convergence})
characterizing the long-term behavior of the protocol with leaks. 
The theorem says that
following the completion of Phase 2 with small or 
large leak rate, 
the protocol remains in a low-sample-error configuration for at
least a polynomial number of steps with high probability. 

\begin{theorem}
  \label{thm:dbcl-phase3-convergence}
  Consider an execution of the $\dbamc$ protocol 
  following the successful completion of Phase 2 of the protocol
  on a population with $m = cn$ for $c \ge 4$. 

  \noindent
  Then given $\alpha_1 > \alpha$ and $0 < c < 0.8\alpha$,
  there is some $a > 0$ such that 
  \begin{enumerate}[i.]
    \item
    $\hat y \le 761 \alpha_1 \beta m$ when
    $\omega(\log N / N) \le \beta \le (\alpha \sqrt{N \log N})/N$
    \item
    $\hat y \le 761 \alpha_1 \log m$ when $\beta \le \alpha \log N/N$
  \end{enumerate}   
  holds for at least the next $N^a$ steps
  with probability at least $1 - N^{-c}$ when $N$ is sufficiently large. 
\end{theorem}

The proof of the theorem follows similarly to that of 
Theorem~\ref{thm:dbl-phase3-convergence} and is omitted. 




\paragraph*{Simulation Results}
Note that as the upper bound on the leak rate $\beta$ is a decreasing 
function in $N$, the sample error guarantees of both protocols increase
with population size. 
This relationship is shown across various simulations of the $\dbamc$
protocol in Figure \ref{fig:simulations}. Moreover, Figure \ref{subfig:success} 
depicts aggregate sample data over many executions of the \dbamc{} protocol 
for varying values of $N$, and Figure \ref{subfig:runtime} illustrates the logarithmic parallel time needed to reach convergence in the non-leak setting.

\begin{figure}[htb!]
    \centering
    \includegraphics[width=6in]{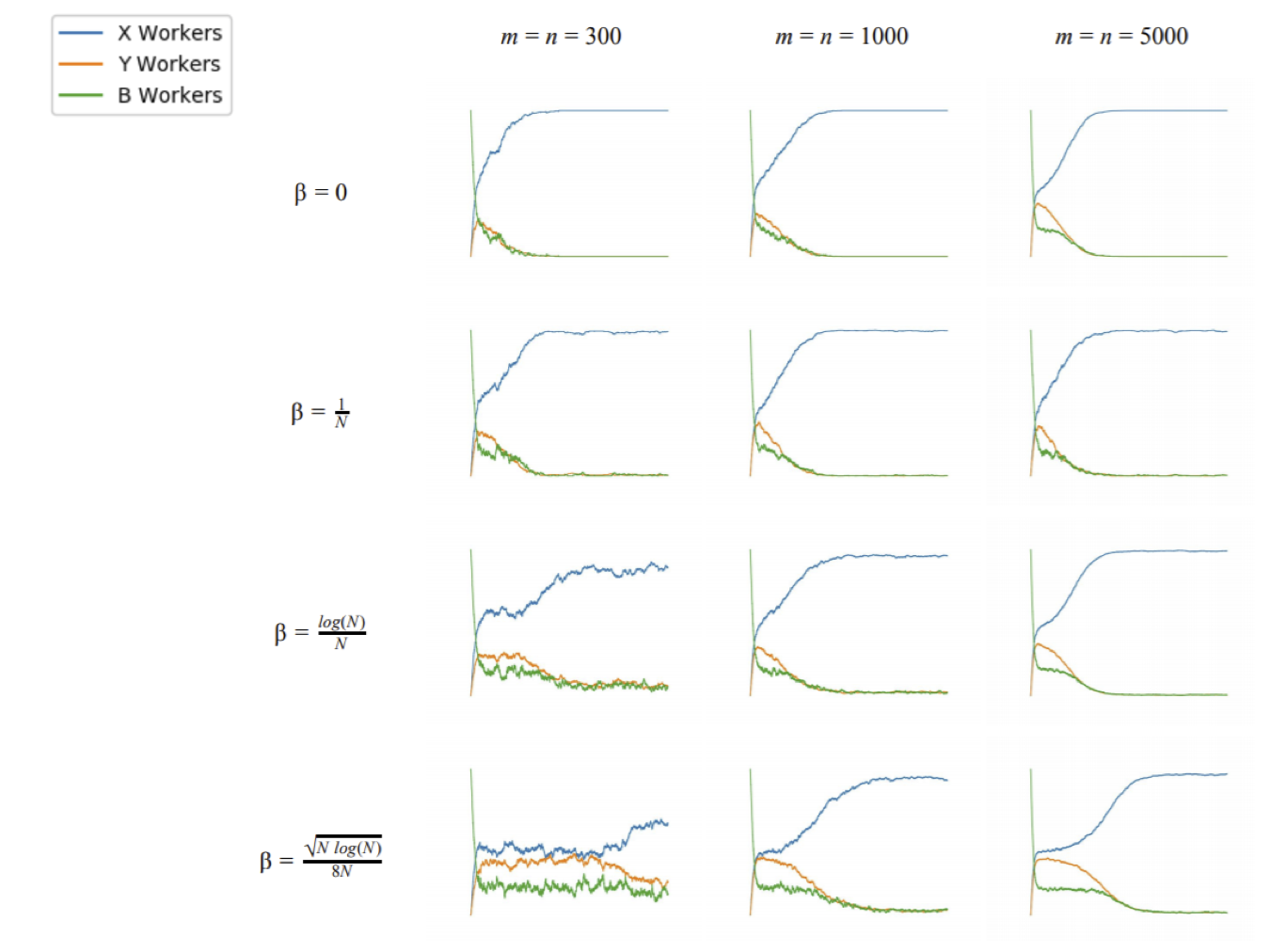}
    \caption{
    Simulations of the $\dbamc$ protocol, where each subplot
    shows how the count of $X$, $Y$ and $B$ worker agents
    evolve over the course of an execution.
    All simulations are for $m=n$, input
    margin $\epsilon = \sqrt{N \log N}$ (with an $I_X$ majority), 
    and varying values of
    leak rate $\beta$ over $4N\log N$ total interactions.
    Note that these plots are of single executions and thus provide a qualitative illustration of behavior, rather than statistically significant data. However, we can 
    see that for larger values of $m = n$ and smaller
    values of $\beta$, the number of $X$ worker agents 
    reaches a larger count more quickly.
    }
    \label{fig:simulations}
\end{figure}

\begin{figure*}[htb!]
    \centering
    \begin{subfigure}[b]{2.65in}
    \includegraphics[width=2.65in]{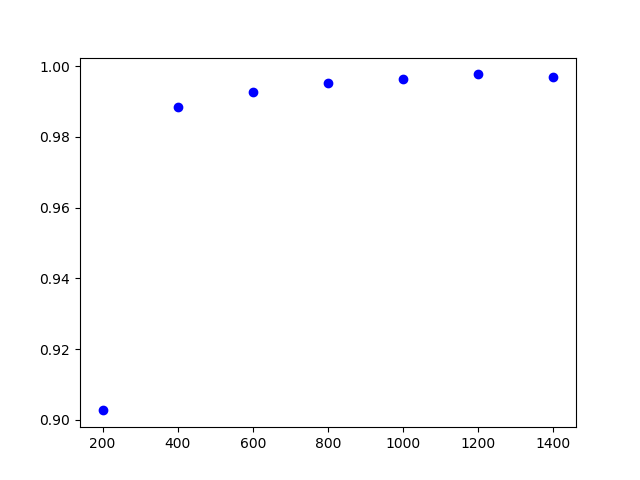}
    \caption{
    Sample success rate ($y$-axis) averaged over 3000 executions of \dbamc\ for $n=600$, $\Delta_0 = \sqrt{N \log N}$, and $\beta=1/N$, and varying values of $m$ ($x$-axis). Samples were drawn uniformly from the worker population after $4N \log N$ total interactions.}
    \label{subfig:success}
    \end{subfigure}
    ~
    \begin{subfigure}[b]{2.65in}
    \includegraphics[width=2.65in]{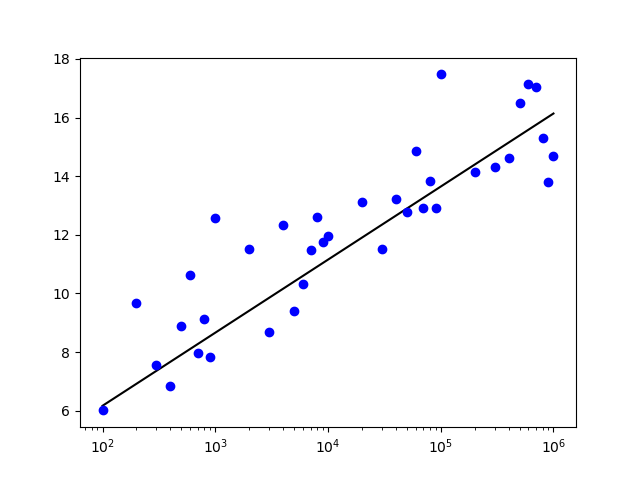}
    \caption{
    Parallel time ($y$-axis) for \dbamc\ without leaks to reach consensus for varying population sizes ($x$-axis) where $m=2n$. Data points represent a single execution. The solid black line is $\frac{3}{4} \log_2(N)$, showing that convergence takes $O(N \log N)$ interactions.}
    \label{subfig:runtime}
    \end{subfigure}
    \caption{Success rate and running time of \dbamc\ over various executions of the protocol.}
    \label{fig:success-and-runtime}
\end{figure*}



\section{Leaks Versus Byzantine agents}
\label{sec:byzantine-leaks-equiv}

The original third-state dynamics approximate majority protocol \cite{angluin2008simple} is robust to a bounded number of Byzantine agents, and as shown in the previous
sections, both the
$\dbam$ protocol and the $\dbamc$ protocol in the CI model are robust to a bounded leak rate.
In this section, we consider the connection between these two
types of faulty behavior. 
While leaks can occur at any agent with fixed probability throughout an execution, Byzantine agents are a fixed subset of the population, and while a leak event
does not change the subsequent behavior of an agent, Byzantine agents may continue to
misbehave forever.
However, there are parallels between these two models of adversarial behavior. A leak at one agent can cause additional agents to deviate from a convergent configuration; similarly, interactions among non-Byzantine agents, some of which have deviated from a convergent configuration by interacting with a Byzantine agent, can cause additional non-Byzantine agents to diverge. 

\paragraph*{Equivalence of Byzantine Agents and Leaks in \dbam\ and \dbamc}

We prove that for the \dbam{} and \dbamc{} protocols, introducing a leak rate of $\beta$ 
has the same asymptotic effect as introducing $O(\beta N)$ Byzantine agents to the population,
which demonstrates an equivalence between these two notions of adversarial behavior among the class of third-state dynamics
protocols.
Although the results of the previous section assumed leaks that do
not follow the laws of chemistry, the following result 
considers \concept{weak leaks}, which cause the selected agent to
decrease its confidence in the majority value by one degree (i.e. a leak causes an agent in state $X$ to transition to $B$ and an agent in state $B$ to transition to $Y$, matching the $X + Y$ and $Y + B$ transitions). 

For our purposes, we define two adversarial models $\mathcal{M}_1$ and $\mathcal{M}_2$ to be \concept{equivalent} for some protocol $\mathcal{P}$ if $\mathcal{P}$ converges to the same asymptotic sample error rate in the same asymptotic running time in both models. We
then have the following equivalence result:


\begin{restatable}{theorem}{thmbyz}
\label{thm:byz-leaks-equiv}
A population of $N$ agents running \dbam{} (or \dbamc) with weak leak rate $O(\beta)$ is equivalent to a population of $N+B$ agents, where $B=O(N\beta)$ agents are Byzantine, running \dbam{} (or \dbamc) without leaks, where in either setting the protocol converges in $O(N \log N)$ interactions with error probability $O(\beta)$.
\end{restatable}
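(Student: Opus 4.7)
The plan is to establish the equivalence by coupling an execution of the protocol under the weak-leak model with an execution under the Byzantine model, arguing that each side can faithfully simulate the other's disruptive events up to constant factors in their rates of occurrence.

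First I would quantify how often a Byzantine agent is involved in an interaction. In a population of $N+B$ agents, the probability that a uniformly random ordered pair of agents contains at least one Byzantine is
\[
1 - \frac{N(N-1)}{(N+B)(N+B-1)} = \Theta\!\left(\frac{B}{N}\right).
\]
Setting $B = \Theta(N\beta)$ makes this probability $\Theta(\beta)$, matching the per-step leak rate, while the non-Byzantine partner in such an interaction is distributed nearly uniformly over the honest population, mirroring how a weak leak selects a uniformly random agent to perturb.

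Next I would show that the two adversaries induce the same set of worst-case state transitions on a single non-Byzantine agent. For \dbam\ and \dbamc, a worst-case Byzantine agent paired with an honest $X$ worker can present itself as $Y$ and trigger $X+Y\to B+B$, converting the honest worker from $X$ to $B$; paired with an honest $B$ worker it can again present itself as $Y$ and trigger $B+Y\to Y+Y$, converting the honest worker from $B$ to $Y$. These are precisely the two weak-leak transitions. The only other disruptive option available to a Byzantine, turning a $Y$ into a $B$ by pretending to be $X$, strictly decreases $\hat y$ and so would not be chosen by a worst-case adversary. In the other direction, every weak-leak event can be matched by directing the coupled Byzantine partner to play the apparent state that triggers the identical honest-side transition on the same uniformly chosen agent.

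Having aligned per-step effects and per-step rates, I would invoke Theorems~\ref{thm:dbl-main} and~\ref{thm:dbcl-main} to conclude that the protocol converges to sample error $O(\beta)$ (or $O(\log N/N)$ when $\beta$ is small) within $O(N\log N)$ interactions under leaks, and transfer these bounds to the Byzantine model through the coupling. The hard part will be reconciling the memoryless nature of leaks with the persistent presence of Byzantine agents: whereas a leak fires as an independent Bernoulli-$\beta$ trial at each step, the same Byzantine agent may participate in many interactions throughout the execution, and one might worry that repeated interactions with the same Byzantine induce correlations absent from the leak model. However, because each step is drawn fresh from the uniform scheduler, the indicator that a given step is Byzantine-involved is itself Bernoulli with rate $\Theta(\beta)$ and, conditioned on the fixed set of Byzantines, independent across steps; thus the aggregated disruption over the $O(N \log N)$ steps of interest has the same asymptotic first- and second-moment behavior as in the leak model, allowing the Chernoff- and martingale-based progress arguments of the previous sections to carry through unchanged and yielding the claimed equivalence.
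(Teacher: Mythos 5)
Your proposal is correct and takes essentially the same route as the paper: both directions rest on computing that the probability a uniformly random pair contains a Byzantine agent is $\Theta(B/N)=\Theta(\beta)$ when $B=\Theta(N\beta)$, observing that a Byzantine agent presenting as $Y$ induces exactly the two weak-leak transitions $X\to B$ and $B\to Y$ on its honest partner, and transferring the convergence guarantees of the leak analysis through this correspondence. The only cosmetic difference is that you phrase it as a coupling and explicitly dispatch the independence-across-steps worry, whereas the paper phrases it as a two-way simulation and spends its care instead on the step-count bookkeeping (showing $O((N+B)\log N)=O(N\log N)$ total interactions suffice to realize the requisite effective interactions); both fill in details the other leaves implicit.
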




\begin{proof}
Below we will refer to the running protocol as $\mathcal{P}$, referring to either \dbam{} or \dbamc. Though different from the notation used in previous sections, we will refer to the total number of agents in the population (catalytic or otherwise) as $N$. We will prove the equivalence of these adversarial behaviors in two parts.

First we must show that running $\mathcal{P}$ among $N$ agents with weak leak rate $\beta$ has the same asymptotic error rate and runtime as running $\mathcal{P}$ in a population of $N+B$ agents \textit{without leaks}, where $B=O(N\beta)$ is the number of Byzantine agents. We can do this by imagining that the $N$ honest agents are the entire population executing $\mathcal{P}$. At each step in time, the scheduler selects two agents from the population to interact. When any two of the $N$ honest agents interact, the interaction is indistinguishable from the case where there are $N$ total agents in the non-Byzantine setting. However, with some probability $p$, one Byzantine and one non-Byzantine agent are selected to interact with one another, potentially causing the non-Byzantine agent to diverge from a convergent state. In the \dbam{} and \dbamc{} protocols, this would mean that the Byzantine agent could just stay in state $Y$, causing catalysts and $Y$-agents to stay the same (as needed), $B$-agents to become $Y$-agents, and $X$-agents to become $B$-agents. The probability of this cross-interaction is
\begin{align}
     p = \frac{NB}{\binom{N+B}{2}}
     \leq \frac{N(cN\beta)}{\binom{N+cN\beta}{2}}
     &\leq \frac{2N(cN\beta)}{N^2} \\
     &= 2c\beta\\
     &= O(\beta)
\end{align}

We define an \concept{effective interaction} to be any interaction that produces a non-null state transition. Here, an effective interaction is any interaction between honest agents and any simulated leaks, i.e. interactions between honest and Byzantine agents.
The runtime of the simulation is the number of total steps it takes to have sufficiently many effective interactions to successfully complete the protocol among the $N$ honest agents. The probability of an effective interaction is
\begin{align}
    \frac{N(N-1)/2+NB}{\binom{N+B}{2}}
    &=\frac{N(N-1)+2NB}{(N+B)(N+B-1)} \\
    &=\frac{N(N+2B-1)}{(N+B)(N+B-1)}
\end{align} 

So if the runtime of $\mathcal{P}$ on $N$ honest agents with leaks is $O(N \log N)$ total interactions, then the simulation must run for $O((N+B)\log N)$ steps so that the expected fraction of ``productive'' interactions is \begin{align}
    O\left(\frac{N(N+2B-1)}{(N+B)(N+B-1)} (N+B)\log N\right) 
    &\leq O\left(\frac{N(2N+2B-2)}{N+B-1} \log N\right) \\
    &= O(N \log N)
\end{align}
For $B=O(N\beta)$ and $\beta \leq 1$, $(N+B)\log N = N \log N + O(N \beta \log N) = O(N \log N)$.

Next we must show that running $\mathcal{P}$ with $N+B$ agents, where $B=O(N\beta)$ agents are Byzantine, has the same asymptotic error rate and runtime as running $\mathcal{P}$ in a population of $N$ agents with leak rate $\beta$. In the Byzantine setting, we only care about the behavior of the honest nodes. Therefore, as before, we can imagine that the $N$ agents in the population are equivalent to the $N$ honest agents in the population we are trying to simulate and now suppose that there are an additional $O(N\beta)$ Byzantine agents. Assuming that the Byzantine agents are optimally adversarial (always in the $Y$ state), then weak leaks among the honest agents directly simulate interactions with Byzantine agents. Also as before, the total runtime of the simulation will be $O(N \log N) = O((N+B) \log (N+B))$ because we are omitting interactions among Byzantine agents entirely, yielding fewer total interactions to simulate the same behavior in the population of $N+B$.
\end{proof}

\paragraph*{Super-Adversarial Byzantine Agents}

In Theorem \ref{thm:byz-leaks-equiv}, we assumed that leaks are not fully adversarial (converting $X$ to $Y$), but rather only decrease the confidence in the majority value by one degree. However, our analysis in previous sections assumes fully adversarial leaks in order to demonstrate that in the worst possible case (i.e. for the maximum decrease in the progress measure), we still succeed in computing approximate majority up to leak rate $\beta$. The equivalent to this in the Byzantine model would be to add a transition to \dbam{} or \dbamc{} of the form $T + \_ \rightarrow T + Y$,
where $T$ is a special state held only by \concept{super-adversarial Byzantine agents}, and $\_$ is a wildcard representing any non-catalytic state. This new state transition indicates that interacting with a Byzantine agent causes any non-catalytic agent to shift into the $Y$ state, exactly modeling the fully adversarial leaks described in earlier sections.

We observe that the proof of Theorem \ref{thm:byz-leaks-equiv} can be repurposed to demonstrate an equivalence between the stronger notion of fully adversarial leaks used in earlier sections and the modified protocols defined above for super-adversarial Byzantine agents.

\section{Conclusion and Open Problems}

We have shown that third-state dynamics can be used to solve approximate
majority with high probability in $O(n \log n)$ steps up to leak rate 
$\beta = O(\sqrt{n \log n}/n)$, both in the standard population protocol model 
as well as the CI model when $m=\Theta(n)$.
While we showed a separation between
the CI and original population models, it remains an open question
what other problems (similar to approximate majority) can be
computed quickly in the CI model. 
Additionally, identifying which 
families of protocols are naturally robust to leak events in 
the original population model (similar to third-state dynamics) 
also remains an open question.

\paragraph*{Acknowledgements}
The authors would like to thank Anne Condon, Monir Hajiaghayi, 
David Kirkpatrick, and J{\'a}n Ma{\v{n}}uch for a helpful discussion 
regarding the analysis of the $\dbam$ protocol. 
The authors are also grateful for a discussion with 
Francesco d’Amore, Andrea Clementi, and 
Emanuele Natale, who pointed out the connection between 
catalytic agents in population protocols and stubborn agents in 
other types of multi-agent systems. We also thank the
anonymous reviewers for their helpful feedback.

\bibliographystyle{alpha}
\bibliography{references.bib}

\appendix

\end{document}